\documentclass{article}
\usepackage[english]{babel}
\usepackage{amsmath,amssymb,latexsym,theorem}
\usepackage[a4paper, top=2.5cm, bottom=2.5cm, left=1.5cm, right=1.5cm]{geometry}
\usepackage{hyperref}
\usepackage{mathtools, xcolor}
\usepackage{authblk}

\newcounter{nnacknowledgments}

{\theorembodyfont{\rmfamily}\newtheorem{acknowledgments*}[nnacknowledgments]{Acknowledgments}}

\newcommand{\assign}{:=}
\newcommand{\cdummy}{\cdot}
\newcommand{\mathd}{\mathrm{d}}

\newcommand{\tmem}[1]{{\em #1\/}}

\newcommand{\tmrsup}[1]{\textsuperscript{#1}}
\newcommand{\tmtextbf}[1]{\text{{\bfseries{#1}}}}
\newcommand{\tmtextit}[1]{\text{{\itshape{#1}}}}
\newenvironment{proof}{\noindent\textbf{Proof\ }}{\hspace*{\fill}$\Box$\medskip}
\newtheorem{definition}{Definition}
\newtheorem{proposition}{Proposition}
\newtheorem{corollary}{Corollary}
\newcounter{nnproposition}

\newtheorem{proposition*}[nnproposition]{Proposition}
{\theorembodyfont{\rmfamily}\newtheorem{remark}{Remark}}
\newcommand{\Wick}[1]{\mathopen{}\vcentcolon #1\vcentcolon\mathclose{}}
\newcommand{\alg}{\mathrm{alg}}
\newcommand{\phys}{\mathrm{phys}}
\newtheorem{lemma}{Lemma}

\begin{document}

	\title{Two-Dimensional Euclidean Free Fermi Fields from Grassmannian Stochastic Analysis}
	
	\author[1,4,5]{\small Claudio Dappiaggi}
	\author[2,5]{\small Francesco Carlo De Vecchi}
	\author[3,6,7]{\small Pietro Tullio Falzoni}
	\author[1,4,5]{\small Paolo Rinaldi}
	
	\affil[1]{\footnotesize Dipartimento di Fisica Alessandro Volta, Universit{\`a} degli studi di Pavia, Via Bassi 6, I-27100 Pavia, Italia}
	\affil[2]{\footnotesize Dipartimento di Matematica Felice Casorati, Universit{\`a} degli studi di Pavia, Via Ferrata 5, I-27100 Pavia, Italia}
	\affil[3]{\footnotesize Dipartimento di Matematica, Universit{\`a} di Genova, Via Dodecaneso 35, I-16146, Genova, Italia}
	\affil[4]{\footnotesize  INFN -- Sezione di Pavia, Via Bassi 6, I-27100 Pavia, Italia}
	\affil[5]{\footnotesize INdAM -- Sezione di Pavia, Via Ferrata 5, I-27100 Pavia, Italia}
	\affil[6]{\footnotesize INdAM -- Sezione di Genova, Via Dodecaneso 35, I-16146, Genova, Italia}
	\affil[7]{\footnotesize INFN -- Sezione di Genova, Via Dodecaneso 33, I-16146, Genova, Italia}
	
	
	\maketitle
	\abstract
	Motivated by Nelson's Euclidean Markov-field construction, we develop a counterpart for the massive two-dimensional free Dirac field. Starting from a Gaussian white noise with values in a Grassmann algebra on a non-commutative probability space, we construct the associated Brownian motion and a stationary Ornstein--Uhlenbeck field. The resulting Euclidean Fermi field is realized as a tempered distribution on spacetime test functions, with values in the underlying algebra. Its mixed two-point functions reproduce the Euclidean Dirac fundamental solution.\\
	\noindent Time-zero fields are obtained by conditional expectation onto the past filtration generated by the noise, while their covariance is the reflected Euclidean Dirac kernel. We show that conditional expectation is compatible with Wick ordering and maps Euclidean Wick polynomials to the corresponding time-zero Wick polynomials. The Euclidean time-translation semigroup is then constructed on time-zero fields and Wick sectors, with generators given by the one-particle Hamiltonian and by its second quantization.
	{\tableofcontents}
	
	\section{Introduction}\label{Sec: Introduction}

Constructive quantum field theory was originally shaped by the idea that Euclidean fields should not merely be described in terms of functional integrals, but as probabilistic objects. This has the merit of making it possible to establish criteria for a rigorous definition of Wick rotation, the key tool to reconstruct a corresponding Lorentzian theory. In the Bosonic case this viewpoint is embodied by the Osterwalder--Schrader axioms \cite{OS73b} and by Nelson's construction of quantum fields from Markov fields \cite{Nel73}, see also \cite{Sim74, GJ87}. In this last approach one starts from an Euclidean random field rather than from the canonical commutation relations. This allows to identify a distinguished time-zero algebra by means of conditional expectations recovering in addition the Hamiltonian dynamics from the Euclidean time-translation semigroup. \\

\noindent This programme has acquired renewed momentum in recent years thanks to the interaction between constructive quantum field theory and singular stochastic partial differential equations, which are revolving mainly around regularity structures and paracontrolled calculus \cite{Hai14,GIP15}. In turn, establishing these new frameworks has allowed to put on firm mathematical grounds the Parisi--Wu idea of stochastic quantization \cite{PW81}, which allows for the construction and analysis of Euclidean models beyond the perturbative level. Yet, most of the  advances mentioned above concern Bosonic theories, while Fermionic models are less studied. The main structural difference lies in the anti-commuting nature of the underlying variables. This prevents a direct realization of Fermionic Euclidean fields as random variables on an ordinary commutative probability space, as already stressed in the analysis of Osterwalder and Schrader \cite{OS73a}. Consequently, the probabilistic side of the Nelson programme has no immediate Fermionic analogue.\\

\noindent A way around this obstruction has recently emerged in \cite{ABDG22,DFGG25}. Herein the idea consists of replacing the commutative probability space by a non-commutative counterpart. This is a unital $C^*$-algebra endowed with a state, that is a positive and normalized functional thereon. Grassmann random variables are therefore regarded as algebra homomorphisms from an exterior algebra into this non-commutative probability space. In this setting one can define key probabilistic notions ranging from a Grassmannian white noise to conditional expectations. These allow to devise a natural stochastic language to encode anti-commuting variables, that is a framework in which Nelson's construction can be revisited for Fermionic fields. On the one hand, it is worth mentioning that this non-commutative setting has already proved useful in the study of weakly coupled subcritical Euclidean Fermionic field theories \cite{DFG25}. On the other hand, it suggests that this might be the right framework to analyze interacting two-dimensional models, among which the Thirring model is a natural test case, see \cite{BCDR24} for a recent analysis in the framework of singular stochastic partial differential equations.\\

\noindent The purpose of the present paper is to carry out part of this programme, focusing on the non-interacting, massive Euclidean Dirac field in two dimensions. More precisely, we construct a Euclidean free Fermi field starting from a Grassmannian white noise on a non-commutative probability space. This reconstruction may be viewed as a non-commutative fermionic
	analogue of Nelson's construction from Markov fields \cite{Nel73}.
	Here, the Markov structure comes from an auxiliary stationary
	Grassmannian Ornstein--Uhlenbeck field $\varphi$ and its driving white
	noise. An algebra-valued Euclidean Fermi field $\psi$, whose mixed
	two-point functions reproduce the Euclidean Dirac fundamental
	solution, is derived from $\varphi$ by applying a differential
	operator, local in time and space, to its second component. Hence, we
	do not require $\psi$ itself to be Markovian. Subsequently, we exploit the
	Markovianity of the auxiliary field and the associated conditional
	expectation to show that the resulting evolution coincides with the
	second quantization of the one-particle semigroup. A Markov property has been established also in \cite{FO74} in their
	gauge-space construction, where the relevant algebra is generated by
	Euclidean fermionic gauge fields. Their proposal to introduce
	interactions through a normal central change of state encounters an
	obstruction. Our noise-based construction offers instead a different possible
	starting point for interacting models. We restrict the present
	analysis to the free theory so as to isolate the Markovian
	construction from renormalization issues. The main results can be summarized as follows:
\begin{itemize}
\item We construct a Grassmannian white noise and the associated Grassmannian Brownian motion on a non-commutative probability space. The covariance is specified by a bounded operator whose associated two-point function is an antisymmetric bilinear form, as required by the fermionic nature of the variables. The resulting process satisfies the corresponding Gaussian Wick rule and has independent increments.
\item We define a stationary Grassmannian Ornstein--Uhlenbeck field and, from it, a two-dimensional Euclidean free Fermi field. Its only non-vanishing two-point functions are the mixed ones. These coincide with the integral kernel of the Euclidean Dirac propagator.
\item We construct zero-time Fermionic fields by conditional expectation onto the algebra generated by the past noise. Their covariance is the reflected Euclidean Dirac kernel and the conditional expectation is compatible with Wick ordering. In other words, Euclidean Wick polynomials are mapped to the corresponding Wick polynomials of the zero-time fields.
\item We show that Euclidean time translations induce a strongly continuous semigroup on the one-particle zero-time space. Its generator is the positive one-particle Hamiltonian $A_x=(-\partial_x^2+m^2)^{1/2}$. On Wick polynomials the induced generator is the corresponding second quantization.
\end{itemize}

The paper is organized as follows. In the first part of Section~1 we fix the analytic conventions for two-dimensional Euclidean Dirac fields and recall succinctly the Osterwalder--Schrader framework. In Section~2 we introduce the non-commutative probability space, Grassmann random variables, Gaussian covariances and conditional expectations. In Section~3 we discuss the construction of the Grassmannian white noise, the Ornstein--Uhlenbeck field, the Euclidean free Fermi field, the time-zero fields and the Euclidean time-evolution semigroup. The appendix contains the proof of the compatibility between conditional expectation and Wick ordering.

\subsection{Analytic Structures}\label{Sec: Analytic Structures}

In this section we fix the notation and conventions used throughout. We take
as background the two-dimensional Euclidean space $(\mathbb{R}^2, \delta)$,
with signature $(+, +)$ and Cartesian coordinates $(t, x)$. We shall refer to
$t$ as Euclidean time, in contrast to its Lorentzian counterpart $t^M$,
related to it by the Wick rotation
\[ t = - it^M . \]
To simplify the notation we shall denote a generic point of the Euclidean
space by $\mathrm{x}\equiv (t, x)$ and its Lorentzian counterpart by $\mathrm{x}^M \equiv
(t^M, x)$.

\paragraph{Function spaces.} On $\mathbb{R}^2$ we denote by $\mathcal{S}
(\mathbb{R}^2)$ the space of rapidly decreasing smooth functions, endowed with
its standard locally convex Fr{\'e}chet topology. Its topological dual, the
space of tempered distributions, is denoted by $\mathcal{S}' (\mathbb{R}^2)$.
Given $\tau \in \mathbb{R}$, we introduce the subspaces
\begin{equation}
  \mathcal{S}^-_{\tau} (\mathbb{R}^2) \assign \left\{ v \in
  \mathcal{S}(\mathbb{R}^2) \hspace{0.27em} | \hspace{0.27em} v (t, x) = 0,
  \text{for } t > \tau \right\},\quad\textrm{and}\quad\mathcal{S}^+_{\tau} (\mathbb{R}^2) \assign \left\{ v \in
  \mathcal{S}(\mathbb{R}^2) \hspace{0.27em} | \hspace{0.27em} v(t, x) = 0,
  \text{for } t < \tau \right\} . \label{Eq: S-_tau}
\end{equation}
Equivalently, the elements of $\mathcal{S}^+_{\tau} (\mathbb{R}^2)$
(\tmtextit{resp.} $\mathcal{S}^-_{\tau} (\mathbb{R}^2)$) are Schwartz
functions supported in the closed half-space $\{t \geq \tau\}$
(\tmtextit{resp.} $\{t \leq \tau\}$). Since the Fourier transform will play an important role, we fix the convention
\[ \mathcal{F}: \mathcal{S} (\mathbb{R}^2) \longrightarrow \mathcal{S}
   (\mathbb{R}^2), \qquad v (t, x) \longmapsto \hat{v} (\omega, k) =
   \frac{1}{2 \pi}  \int_{\mathbb{R}^2} \mathd t \mathd x\, v(t, x) e^{- i
   (\omega t + kx)} . \]
Fourier transforms with respect to only one of the two variables are defined
analogously.

\paragraph{Euclidean Dirac Operator --} We consider distributional Euclidean
spinors and cospinors, namely elements respectively of
\[ \mathcal{S}' (\mathbb{R}^2 ; \mathbb{C}^2), \quad \mathcal{S}'
   (\mathbb{R}^2 ; (\mathbb{C}^2)^{\ast}) . \]
Throughout the paper, using the standard Hermitian pairing, we shall freely
identify $\mathbb{C}^2$ and $(\mathbb{C}^2)^{\ast}$ whenever necessary and if no
confusion can arise. The Euclidean Dirac operator and its formal adjoint are
\begin{equation}
  D_E = \gamma_E^{\mu} \partial_{\mu} + m : \mathcal{S}' (\mathbb{R}^2 ;
  \mathbb{C}^2) \longrightarrow \mathcal{S}' (\mathbb{R}^2 ; \mathbb{C}^2),
  \qquad D_E^{\ast} = - \gamma_E^{\mu} \partial_{\mu} + m : \mathcal{S}'
  (\mathbb{R}^2 ; \mathbb{C}^2) \longrightarrow \mathcal{S}' (\mathbb{R}^2 ;
  \mathbb{C}^2) . \label{Eq:dirac-op}
\end{equation}
Here $m \geqslant 0$ is a mass parameter, while $\gamma_E^{\mu} \in
\mathcal{M} (2 ; \mathbb{C})$, $\mu = 0, 1$, are Euclidean gamma matrices.
They are Hermitian and satisfy the Clifford relations
\[ \{\gamma_E^{\mu}, \gamma_E^{\nu} \} = 2 \delta^{\mu \nu} \mathbb{I}_2,
   \qquad \mu, \nu = 0, 1. \]
A convenient choice is given by the Pauli matrices
\[ \gamma_E^0 = \sigma^1 = \left(\begin{array}{cc}
     0 & 1\\
     1 & 0
   \end{array}\right), \qquad \gamma_E^1 = \sigma^2 = \left(\begin{array}{cc}
     0 & - i\\
     i & 0
   \end{array}\right) . \]
{\noindent}For later use, we recall the identities

\begin{align*}
  D_E D_E^{\ast} & = (- \Delta + m^2) \mathbb{I}_2,\\
  D_E^{\ast} D_E & = (- \Delta + m^2) \mathbb{I}_2,
\end{align*}
where $\Delta \assign \partial_t^2 + \partial_x^2$ is the Euclidean Laplacian
on $\mathbb{R}^2$, while $\mathbb{I}_2$ denotes the $2 \times 2$ identity
matrix. For $m > 0$, we denote by $G \in \mathcal{S}' (\mathbb{R}^2)$ the
translation-invariant fundamental solution of $- \Delta + m^2$ which decays at
infinity. Its integral kernel reads
\[ G (t, x) = \frac{1}{2 \pi} K_0 \left( m \sqrt{t^2 + x^2} \right), \]
where $K_0$ is the modified Bessel function of second kind of order $0$.
Equivalently, in Fourier space, we have
\[ \hat{G} (\omega, k) = \frac{1}{2 \pi}  \frac{1}{\omega^2 + k^2 + m^2} . \]
Observe that, in the massless case $m = 0$, no fundamental solution vanishing
at infinity exists in two dimensions and one must instead choose an infrared
prescription. We avoid entering into the details leaving this choice
understood. Consequently, the Euclidean Dirac fundamental solution is the
matrix-valued tempered distribution
\begin{equation}\label{Eq: Sol-fond}
  S_E \assign D_E^{\ast}  (G\mathbb{I}_2) \in \mathcal{S}' (\mathbb{R}^2 ;
  \mathrm{End} (\mathbb{C}^2)), 
\end{equation}
whose expression in Fourier space is

\begin{align*}
  & \hat{S}_E (\omega, k) = \frac{1}{2 \pi}  \frac{- i \gamma_E^0 \omega - i
  \gamma_E^1 k + m\mathbb{I}_2}{\omega^2 + k^2 + m^2} .
\end{align*}

Observe that this expression is valid also for $m = 0$ since, in this case,
$S_E$ is locally integrable in a neighbourhood of $(\omega, k) = (0,
0)$.

\subsection{Osterwalder--Schrader Distributions and Reconstruction}

In this subsection we recall the Osterwalder--Schrader framework for a two-dimensional Bosonic scalar theory. We follow mainly {\cite{OS73b}} and
{\cite{Sim74}}. We omit the proofs, in particular of the reconstruction
theorem, since they can be found in the mentioned references. Actually our
main goal is to fix the functional-analytic setting and to state the class of
Euclidean correlation functions to which the theorem applies, as a preparation
for the discussion of the Fermionic counterpart in Section \ref{Sec:
Construction of Fermionic Fields}. We identify the \tmtextit{large diagonal}
$$\Delta_n \assign \left\{ (\mathrm{x}_1, \cdots, \mathrm{x}_n) \in \mathbb{R}^{2 n}\hspace{0.27em} | \hspace{0.27em} \mathrm{x}_i = \mathrm{x}_j\;  \text{for some } i \neq j \right\} .$$
At the same time we set
$$\mathcal{S}_n \assign \mathcal{S} (\mathbb{R}^{2 n}) \quad \mathrm{and}\quad^0 \mathcal{S}_n \assign \left\{ u \in \mathcal{S}_n \hspace{0.27em} | \hspace{0.27em} \partial^{\alpha} u |_{\Delta_n} = 0, \text{for every multiindex } \alpha \right\},$$
where $^0 \mathcal{S}_n$ is a closed subspace of $\mathcal{S}_n$, see
{\cite{OS73b}}.

\begin{remark}
	These spaces come accompanied with their topological duals $\mathcal{S}^\prime_n
	\equiv \mathcal{S}' (\mathbb{R}^{2 n})$ and $(^0 \mathcal{S}_n)^\prime$. Although
	in {\cite{OS73b}} the latter is considered, in this work we shall assume
	from the outset that the Schwinger functions are represented by tempered
	distributions $S_n\in\mathcal{S}^\prime_n$ satisfying the Osterwalder--Schrader
	axioms.  If one starts instead from an element lying in $(^0 \mathcal{S}_n)'$, an
	extension to $\mathcal{S}^\prime_n$ exists by the Hahn-Banach theorem. Yet, such extension is nonunique and it might not preserve the structural properties encoded by the Osterwalder-Schrader axioms unless these are imposed. Nonetheless, the ambiguity is supported on the large diagonal $\Delta_n$ and, once a compatible extension has been fixed, this does not affect the reconstruction.
\end{remark}

{\noindent}Two additional ingredients are needed:
\begin{enumerate}
  \item the \tmtextit{Euclidean time reflection}
  \begin{equation}\label{Eq: Time-Reflection}
  \vartheta : \mathbb{R}^2 \to \mathbb{R}^2, \quad \mathrm{x}=(t, x) \mapsto \vartheta\mathrm{x}= (- t, x),
  \end{equation}
  which extends to $\mathcal{S}_n$ as the map
$$\theta_n : \mathcal{S}_n \to \mathcal{S}_n, \qquad u(\mathrm{x}_1, \cdots, \mathrm{x}_n) \mapsto (\theta_n u) (\mathrm{x}_1, \cdots, \mathrm{x}_n) = u(\vartheta (\mathrm{x}_n),\cdots, \vartheta (\mathrm{x}_1)).$$
  \item the \tmtextit{positive-time test functions}. Given $\mathbb{R}^2_+
  \assign [0, \infty) \times \mathbb{R} \ni (t, x)$, we call
 \begin{equation}\label{Eq: Positive time Test Function Space}
  	\mathcal S(\mathbb R^2_+)
  	\assign \mathcal S^+_0(\mathbb R^2)
  	=\{f\in\mathcal S(\mathbb R^2):
  	\operatorname{supp}f\subset[0,\infty)\times\mathbb R\},
  \end{equation}  
   where $\mathcal{S}^+_0 (\mathbb{R}^2)$ is defined as per Equation \eqref{Eq: S-_tau}. For any $n \geqslant 1$, we set
  \[ \mathcal{S}_{+, n} \assign \widehat{\bigotimes}_{j = 1}^n \mathcal{S}
     (\mathbb{R}^2_+), \]
  together with
  \[ \underline{\mathcal{S}} \assign \mathbb{C} \oplus \bigoplus_{n \geqslant
     1}^{\mathrm{alg}} \mathcal{S}_n, \qquad \underline{\mathcal{S}}_+ \assign
     \mathbb{C} \oplus \bigoplus_{n \geqslant 1}^{\mathrm{alg}} \mathcal{S}_{+, n} .
  \]
 The superscript $\mathrm{alg}$ denotes the algebraic direct sum, namely only sequences with finitely many nonzero components are admitted.
\end{enumerate}
\begin{definition}[Bosonic Osterwalder--Schrader distributions]\label{Def: Bosonic OS Distributions} 
	A collection
  \[ (S_n)_{n \in \mathbb{N}_0}, \qquad S_0 = 1, \qquad S_n \in
     \mathcal{S}'_n, \]
  is called a family of Bosonic Osterwalder-Schrader distributions if
  \begin{enumerate}
    \item[(OS1)] {\tmem{Euclidean covariance --}} For every element of the
    proper Euclidean isometry group $(a,R) \in E (2)^+ =\mathbb{R}^2 \ltimes SO(2)$, it holds
    \[ S_n (u) = S_n (u_{(a, R)}), \qquad u\in \mathcal{S}_n, \]
    where
    \[ u_{(a,R)} (\mathrm{x}_1, \cdots, \mathrm{x}_n) \assign u(R^{- 1} (\mathrm{x}_1 - a), \ldots,
       R^{- 1} (\mathrm{x}_n - a)) . \]
    \item[(OS2)] {\tmem{Reflection positivity --}} For every finite sequence
    $$(u_0, u_1, \ldots, u_N) \in \underline{\mathcal{S}}_+,$$
    one has
    $$\sum_{n, m = 0}^N S_{n + m}  (\overline{\theta_n u_n} \otimes u_m) \geq 0.$$
    The terms with $n = 0$ or $m = 0$ are understood in the natural way,
    employing $S_0 = 1$.
    
    \item[(OS3)] {\tmem{Symmetry.}} Denoting by $\mathfrak{S}_n$, $n \geq 1$,
    the permutation group of $n$ elements, for every $\pi \in \mathfrak{S}_n$,
    $$ S_n (u^{\pi}) = S_n (u), \qquad u \in \mathcal{S}_n, $$
    where
  $$u^{\pi} (\mathrm{x}_1, \cdots, \mathrm{x}_n) \assign u(\mathrm{x}_{\pi (1)}, \cdots, \mathrm{x}_{\pi (n)}).$$
    \item[(OS4)] {\tmem{Cluster property --}} Given $a \equiv (a, 0) \in E
    (2)^+$ a non-vanishing spatial translation, if $T_{\lambda a}$ denotes the
    simultaneous translation of all arguments of a test function by $\lambda
    a$, then
$$ \lim_{\lambda \to + \infty} S_{n + m}  (u \otimes T_{\lambda a} u^\prime) = S_n (u) S_m (u^\prime),$$
    for all test functions $u \in \mathcal{S}_n$ and $u^\prime \in \mathcal{S}_m$.
  \end{enumerate}
\end{definition}

\begin{remark}
  The relevance of the preceding definition is fully justified by the
  celebrated Osterwalder-Schrader reconstruction theorem {\cite{OS73b}}. We
  avoid stating it since it would require a long detour on the formulation of
  the Wightman axioms for a quantum field theory on Minkowski spacetime
  {\cite{SW64}}. We limit ourselves to observing that, given a family
  $(S_n)_{n \in \mathbb{N}_0}$ of Bosonic Osterwalder--Schrader distributions
  as per Definition \ref{Def: Bosonic OS Distributions}, then there exists a
  unique sequence $(W_n)_{n \in \mathbb{N}_0}$, $W_n \in \mathcal{S}^\prime(\mathbb{R}^{2 n})$, known as Wightman distributions such that, in addition
  to a set of structural properties, they are obtained from $(S_n)_{n\in\mathbb{N}_0}$ as boundary values of their analytic continuation in the Euclidean time variables. Hence, up to unitary equivalence, they identify a Hilbert space
  $\mathcal{H}$, a normalized vector $\Omega \in \mathcal{H}$, a unitary representation of the proper orthochronous Poincar{\'e} group
  satisfying the spectrum condition and an operator valued distribution $\phi$
  such that
$$W_n (\mathrm{x}^M_1, \cdots, \mathrm{x}^M_n) = \langle \Omega, \phi (\mathrm{x}^M_1) \cdots \phi(\mathrm{x}^M_n) \Omega \rangle.$$
\end{remark}

We conclude this short section remarking that the Osterwalder-Schrader
approach is well understood for Bosonic fields also in view of the work by
Nelson {\cite{Nel73}} who established a consistent method for defining Bosonic
Euclidean fields as random variables on a suitable probability space. However,
the scenario for Fermionic fields is more delicate. A notable attempt to
address this issue can be ascribed to Osterwalder and Schrader {\cite{OS73a}},
who investigated whether Fermionic correlation functions could be derived from
a genuine underlying probability measure. Their analysis highlighted a
fundamental obstruction, namely the anti-commutativity of Grassmann variables
prevents the definition of positive-definite densities. Consequently, unlike
the Bosonic case, Fermionic Euclidean fields lack a realization in terms of
Nelson's stochastic framework and we shall address this issue in the following
sections.

\section{Non-commutative probability space and Grassmann random variables}\label{Sec: Non-commutative probability space}

In this section we introduce Grassmann random variables and we start by
assigning the underlying building blocks:
\begin{enumerate}
  \item a \tmtextit{non-commutative probability space} as introduced in
  {\cite{ABDG22}}, that is a pair $(\mathcal{A}, \omega)$, where $\mathcal{A}$
  is a unital $C^{\ast}$-algebra while $\omega : \mathcal{A} \to \mathbb{C}$
  is a linear, normalized, positive functional, known as a state,
  
  \item a complex, separable Hilbert space $(V, (\cdot, \cdot)_V)$, which we
  assume anti-linear in the left variable.
\end{enumerate}
\begin{definition}
  \label{Def: Grassmann Random Variable}Denoting by $\Lambda V$ the exterior
  algebra associated to $V$, we denote by $\mathcal{G} (V)$ the collection of
  all \tmtextbf{Grassmann random variables}, namely unital algebra
  homomorphisms $X : \Lambda V \to \mathcal{A}$. Given $X \in \mathcal{G} (V)$
  its {\tmem{law}} is the family of moments
  \[ \omega^X (F) \assign \omega (X (F)), \qquad \forall F \in \Lambda V. \]
  Furthermore we say that $X \in \mathcal{G} (V)$ is continuous if
  \[ \| X \|_{\mathcal{G} (V)} \assign \sup_{\| v \|_V = 1} \| X (v)
     \|_{\mathcal{A}} < \infty . \]
  The collection of all such random variables is denoted by $\mathcal{G}^{-
  \infty} (V)$.
\end{definition}

Observe that one cannot expect $\mathcal{G} (V)$ to be
complete, unless $V$ is finite dimensional {\cite[Lemma 7]{ABDG22}}. The next
step consists of establishing a notion of compatibility.

\begin{definition}[Compatible Grassmann random variables]
  \label{Def: Compatible Grassmann Random Variables} Given two complex,
  separable Hilbert spaces $V$ and $W$ as well as $(\mathcal{A}, \omega)$, a
  non-commutative probability space, we say that two Grassmann random
  variables $X \in \mathcal{G} (V)$ and $Y \in \mathcal{G} (W)$ are
  \tmtextbf{compatible} if the linear map $Z : V \oplus W \rightarrow
  \mathcal{A}$ such that $Z ((v, 0)) = X (v)$, if $v \in V$, while $Z ((0, w))
  = Y (w)$, if $w \in W$, extends to a Grassmann random variable $Z : \Lambda
  (V \oplus W) \rightarrow \mathcal{A}$.
\end{definition}

\begin{remark}\label{remark:independence}
As a consequence of this last definition we can establish a notion of
{\tmem{independence}}. More precisely, two compatible Grassmann random
variables $X \in \mathcal{G} (V, \mathcal{A})$ and $Y \in \mathcal{G} (W,
\mathcal{A})$ as per Definition \ref{Def: Compatible Grassmann Random
Variables} are said to be independent with respect to the state $\omega$ if,
for all $F_1 \in \Lambda (V)$ and $F_2 \in \Lambda (W)$
\begin{equation}
  \label{Eq: Independence of GRV} \omega (X (F_1) Y (F_2)) = \omega (X (F_1))
  \omega (Y (F_2)) .
\end{equation}
\end{remark}

\subsection{Gaussian Grassmann random variables}\label{Sec: Gaussian Grassmann random variables}

We enrich the framework considered up to now, assuming that $V$ comes endowed
also with a \tmtextit{real structure} $\chi : V \rightarrow V$, compatible
with $(\cdummy, \cdummy)_V$, that is an anti-linear isometry such that
\[ (\chi (v), \chi (w))_V = (w,v)_V, \quad \chi^2
   (v) = v, \quad \forall v, w \in V. \]
On the one hand, this induces a bilinear form
\[ \langle \hspace{-0.17em} \langle \cdot, \cdot \rangle \hspace{-0.17em}
   \rangle_{\chi} : V \times V \rightarrow \mathbb{C}, \quad (v, w) \mapsto
   \langle \hspace{-0.17em} \langle v, w \rangle \hspace{-0.17em}
   \rangle_{\chi} = (\chi (v), w)_V . \]
On the other hand, given a bounded linear operator $C : V \rightarrow V$ and
denoting by $C^{\ast}$ its adjoint with respect to $(\cdummy, \cdummy)_V$, we can
define its $\chi$-transpose as
\[ C^{\chi} \assign \chi C^{\ast} \chi, \]
such that
\[ \langle \hspace{-0.17em} \langle v, Cw \rangle \hspace{-0.17em}
   \rangle_{\chi} = \langle \hspace{-0.17em} \langle C^{\chi} v, w \rangle
   \hspace{-0.17em} \rangle_{\chi}, \quad \forall v, w \in V. \]
We say that $C$ is $\chi$-antisymmetric if $C^{\chi} = - C$. On top of these
data, we consider the antisymmetric Fock space $\mathcal{F}_- (V)$, that is
the Hilbert space completion of the exterior algebra $\Lambda V$ with respect
to the scalar product naturally induced from $(\cdummy, \cdummy)_V$, see
{\cite{BR97}}. Thereon, we define
\begin{enumerate}
  \item the {\tmem{creation operator}} $a^{\dagger} : V \rightarrow
  \mathcal{B} (\mathcal{F}_- (V))$ whose action is completely determined by
  \[ a^{\dagger} (v)  (u_1 \wedge u_2 \wedge \ldots \wedge u_k) = v \wedge u_1
     \wedge u_2 \wedge \ldots \wedge u_k, \quad \forall v, u_1, \ldots, u_k
     \in V, \]
  \item the {\tmem{annihilation operator}} $a : V \rightarrow \mathcal{B}
  (\mathcal{F}_- (V))$ whose action is completely determined by
  \[ a (v)  (u_1 \wedge u_2 \wedge \ldots \wedge u_k) = \sum_{j = 1}^k (-
     1)^{j + 1} (v, u_j)_V  \hspace{0.17em} \hspace{0.17em} u_1 \wedge \ldots
     \wedge \hat{u}_j \wedge \ldots \wedge u_k, \quad \forall v, u_1, \ldots,
     u_k \in V, \]
  where $\hat{u}_j$ indicates that the $j$-th element is omitted.
\end{enumerate}
Observe that, being $\mathcal{B} (\mathcal{F}_- (V))$ a
C\tmrsup{$\ast$}-algebra, both $a$ and $a^{\dagger}$ can be extended as
algebra homomorphisms so to identify a Grassmann random variable in
$\mathcal{G} (V, \mathcal{B}(\mathcal{F}_- (V)))$ as per Definition \ref{Def:
Grassmann Random Variable}.

\begin{definition}\label{Def: Creator Grassmann Random Variable} 
Given a complex, separable
  Hilbert space $(V, (\cdot, \cdot)_V)$ endowed with a real structure $\chi :
  V \rightarrow V$ as well as a bounded, $\chi$-antisymmetric linear operator $C
  : V \to V$, we call \tmtextbf{Gaussian, Grassmann random variable} the map
  $X : \Lambda V \to \mathcal{B} (\mathcal{F}_- (V))$ constructed as the
  algebra homomorphism induced by
  \[ X (v) = a^\dagger(Cv) + a (\chi v), \quad \forall v \in V. \]
\end{definition}

A direct application of the defining properties of the creation and
annihilation operators shows that, for any $n \in \mathbb{N}$ and $v_1,
\ldots, v_{2 n} \in V$, the odd moments of $X$ vanish, while the even ones are

\begin{align}
  & \omega (X (v_1) X (v_2) \ldots X (v_{2 n})) = \sum_{\sigma \in
  \mathcal{P}_{2 n}} \mathrm{sgn} (\sigma)  \prod_{i = 1}^n  \langle \hspace{-0.17em} \langle v_{\sigma
  (2 i - 1)}, Cv_{\sigma (2 i)} \rangle \hspace{-0.17em}
  \rangle_{\chi},  \label{eq:Wickrule}
\end{align}

where $\mathcal{P}_{2 n}$ denotes the set of ordered pairings of $\{1, \ldots,
2 n\}$, namely $\sigma (2 i - 1) < \sigma (2 i)$ and $\sigma (1) < \sigma (3)
< \ldots < \sigma (2 n - 1)$.

More generally, Equation \eqref{eq:Wickrule} is taken as the defining identity
for a {\tmem{Gaussian Grassmann random variable}} and one can read the
construction above as a constructive realization of {\cite[Lemma 21]{ABDG22}}.
In other words, for every bounded, $\chi$-antisymmetric operator $C$, there
exists a non-commutative probability space $(\mathcal{A}, \omega)$ and a
Gaussian Grassmann random variable whose covariance is uniquely determined by
$C$.

\begin{remark}
  The existence of a continuous Gaussian Grassmann variable for any bounded
  covariance $C$ is a distinctive feature of the anti-commutative setting. In
  the commutative (Bosonic) setting, there is no counterpart to {\cite[Lemma
  21]{ABDG22}} within the framework of bounded operators. Specifically, for a
  positive, self-adjoint operator $B$ on a Hilbert space $V$, a linear map
  thereon, satisfying the Gaussian moment condition does not necessarily take values in a
  $C^{\ast}$-algebra $\mathcal{A}$.
\end{remark}

\paragraph{Grassmannian conditional expectation -- } To conclude the section
we discuss the notion of {\tmem{conditional expectation}} in the
non-commutative setting. Contrary to the commutative framework, this cannot be
seen as a byproduct of the Radon-Nikodym theorem {\cite[Theorem 6.10]{Rud87}},
which exists \tmtextit{a priori}. A case by case analysis is thus necessary.
In the following we formalize this statement.

\begin{definition}[Conditional expectation]\label{Def: Conditional Expectations}
  Given a non-commutative probability space $(\mathcal{A}, \omega)$ and a
  unital C\tmrsup{$\ast$}-subalgebra $\mathcal{A}' \subseteq \mathcal{A}$, a
  linear map $\varphi : \mathcal{A} \rightarrow \mathcal{A}'$ is a
  \tmtextbf{conditional expectation} of $\mathcal{A}$ onto $\mathcal{A}'$ with
  respect to $\omega$ if it is positive and if, denoting by
  $\mathbb{I}_{\mathcal{A}}$ and $\mathbb{I}_{\mathcal{A}'}$ the unit elements
  respectively of $\mathcal{A}$ and $\mathcal{A}'$,
  \[ \varphi (\mathbb{I}_{\mathcal{A}}) =\mathbb{I}_{\mathcal{A}'}, \quad
     \varphi (B_1 AB_2) = B_1 \varphi (A) B_2, \quad \omega (\varphi (A)) =
     \omega (A), \quad \forall A \in \mathcal{A} \hspace{1.27em} \mathrm{and}
     \quad \hspace{0.27em} \forall B_1, B_2 \in \mathcal{A}' . \]
\end{definition}

As shown in {\cite[Proposition 2.12]{DFG25}}, if such a conditional
expectation exists, it is a $\ast$-preserving projection of unit norm onto
$\mathcal{A}'$. In the following, we are interested in physical systems
governed by a suitable dynamics, which is implemented in this framework by a
family of conditional expectations. The following definition sets the stage to
make this idea precise.

\begin{definition}[Filtration]
  Let $(\mathcal{A}, \omega)$ be a non-commutative probability space and let
  $I \subseteq \mathbb{R}$ be an interval. A {\tmem{filtration}} over $I$ is a
  family $(\mathcal{A}_t)_{t \in I}$ of $C^{\ast}$-subalgebras of
  $\mathcal{A}$ such that $\mathcal{A}_t \subseteq \mathcal{A}_{t'} \subseteq
  \mathcal{A}$ whenever $t \leq t'$. The triple $(\mathcal{A}, \omega,
  (\mathcal{A}_t)_{t \in I})$ is called a \tmtextbf{filtered} non-commutative
  probability space.
\end{definition}

\begin{remark}
  In our analysis we combine all the above structures considering
  {\tmem{adapted Grassmannian processes}} $(X_t)_{t \in I}$. These are
  families of Grassmann random variables as per Definition \ref{Def: Grassmann
  Random Variable} such that the values of $X_t$ lie in the algebra $\mathcal{A}_t$ for every $t \in I$. We
  can construct them as follows: starting from a Grassmann random variable
  $X$, we define $(X_t)_{t \in I}$ in terms of a family of conditional
  expectations $(\omega_t)_{t \in I}$ such that, for every $s \leqslant t$ and
  for every $X \in \mathcal{A}$,
  \[ \omega_s (\omega_t (X)) = \omega_s (X) . \]
  The existence of such a family $(\omega_t)_{t \in I}$ is not guaranteed in
  general, but it is in the setting of Definition \ref{Def: Creator Grassmann
  Random Variable}, where $\mathcal{A} \equiv \mathcal{B} (\mathcal{F}_-
  (V))$. This is discussed thoroughly in {\cite[Appendix A]{DFG25}}.
\end{remark}

\section{Construction of Fermionic fields}\label{Sec: Construction of
Fermionic Fields}

The scope of this section is to present an explicit Fermionic field
construction which is analogous to {\cite{Nel73}}, but realized within the
setting of Grassmannian probability. While the traditional Osterwalder-Schrader program
{\cite{OS73a, OS73b}} illustrated the limitations of classical measure theory, these
obstacles can be bypassed by replacing the standard Kolmogorov axioms with a
non-commutative probabilistic structure. For the sake of clarity, we shall
restrict our construction to the two-dimensional case, comprising one spatial
and one temporal dimension. While this choice is technically
immaterial for the free field, where the Grassmannian Gaussian structure
remains formally invariant, it provides a more transparent framework for
discussing potential extensions to interacting theories. Here the
specificity of the $(1 + 1)$-dimensional spacetime often plays a crucial role
in the regularity and renormalizability of the model.\\
The strategy is the following: Starting from a Grassmannian white noise, we construct the Euclidean and time-zero fields,
identifying the physical one-particle and Fock spaces. Using the Markovian structure of this explicit realization, we exploit the conditional expectation $\omega_0$, which maps translated Euclidean Wick polynomials to time-zero counterparts. We show that the resulting evolution coincides with the second quantization of the one-particle semigroup, thereby identifying the Fock Hamiltonian.

\subsection{Free fields}

We start by fixing the spaces on which the construction is performed. Taking into account the notation and conventions of Section \ref{Sec: Non-commutative probability space}, we set 
$$V:=L^2(\mathbb R^2;\mathbb C^4),$$
and, for future reference, we also consider $E:=\mathcal S(\mathbb R^2;\mathbb C^4)$ seen as a nuclear Fr\'echet space, continuously and densely embedded in $V$. Consequently we can choose as underlying non-commutative probability space the pair $(\mathcal{A},\omega)$, where $\mathcal A:=\mathcal B(\mathcal{F}_-(V))$, while $\omega$ is the vacuum state thereon. Up to an immaterial choice of a basis for $\mathbb{C}^4$, we can split $V$ as $V_1\oplus V_2$ where 
$$V_1\simeq V_2\simeq L^2(\mathbb R^2;\mathbb C^2).$$
In addition, following Section \ref{Sec: Gaussian Grassmann random variables}, we consider $\chi$, the real structure on $V$ induced by the standard complex conjugation, as well as the $\chi$-antisymmetric bounded linear operator $C:V\to V$ such that
\begin{equation}\label{Eq: Covariance C}
C:=\mathbb I_{L^2(\mathbb R^2)}\otimes J\quad\textrm{where}\quad J:=
\begin{pmatrix}
	0 & \mathbb I_{\mathbb C^2}\\
	-\mathbb I_{\mathbb C^2} & 0
\end{pmatrix}.
\end{equation}
These data identify a bilinear map 
\begin{equation}\label{Eq: Bilinear map}
b:V\times V\to\mathbb{C},\quad (v,w)\mapsto b(v,w):=\langle \hspace{-0.17em} \langle v, Cw \rangle \hspace{-0.17em}
\rangle_{\chi},
\end{equation}
such that, if $v=(v^1,v^2)\in V_1\oplus V_2$ and $w=(w^1,w^2)\in V_1\oplus V_2$, then
$$b(v,w)=(\chi(v^1),w^2)_{L^2(\mathbb R^2;\mathbb C^2)}-(\chi(v^2),w^1)_{L^2(\mathbb R^2;\mathbb C^2)}.$$

\begin{definition}[Grassmannian white noise]\label{Def: Grassmannian White Noise}
Given $V=L^2(\mathbb R^2;\mathbb C^4)$ and $\mathcal A:=\mathcal B(\mathcal{F}_-(V))$, a Grassmann random variable $\xi\in \mathcal G^{-\infty}(V,\mathcal A)$ as per Definition \ref{Def: Grassmann Random Variable} is called a {\bf Grassmannian white noise} with covariance $C$ as in Equation \eqref{Eq: Covariance C}, if it is Gaussian, as per Definition \ref{Def: Creator Grassmann Random Variable}, and if it satisfies
$$\omega(\xi(v)\xi(w))=b(v,w),\qquad v,w\in V.$$
\end{definition}

\begin{remark}\label{Rem: Algebra Valued White Noise}
Exploiting that $E\hookrightarrow V$, we can also realize a Grassmannian white noise $\xi$ as an $\mathcal{A}$-valued tempered distribution, that is $\xi\in\mathcal{L}(\mathcal{S}(\mathbb{R}^2;\mathbb{C}^4);\mathcal{A})\equiv\mathcal{S}^\prime(\mathbb{R}^2;\mathbb{C}^4\otimes\mathcal{A})$. Furthermore, on account of the decomposition $V=V_1\oplus V_2$ and given any $F\in E$, we can write
\begin{equation}\label{Eq: Splitting}
\xi=(\xi^1,\xi^2)\quad\textrm{and}\quad F=(f_1,f_2),
\end{equation}
where $F=(f_1,f_2)\in E$. Observe that, henceforth, we shall employ the following notation: Capital letters refer to $\mathbb{C}^4$-valued test-functions, while lower-case letters denote $\mathbb{C}^2$-valued ones. With the covariance as per Equation \eqref{Eq: Covariance C}, the only non-vanishing two-point correlation functions are the mixed ones, namely for $F=(f_1,f_2), F^\prime=(f^\prime_1,f^\prime_2)\in E$, 
$$\omega\bigl(\xi^1(f_1)\xi^2(f'_2)\bigr)=(\chi f_1,f'_2)_{L^2(\mathbb R^2;\mathbb C^2)},\quad\omega\bigl(\xi^2(f_2)\xi^1(f'_1)\bigr)=-(\chi f_2,f'_1)_{L^2(\mathbb R^2;\mathbb C^2)},$$
whereas
$$\omega\bigl(\xi^1(f_1)\xi^1(f'_1)\bigr)=0,\qquad\omega\bigl(\xi^2(f_2)\xi^2(f'_2)\bigr)=0.$$
\end{remark}

We have all the ingredients to establish an analogue of the Brownian motion associated with the Grassmannian white noise. As a preliminary step, we fix the relevant notation. In the following, since the Euclidean time coordinate plays a distinguished role, we wish to be able to clearly disentangle its role from that of the other coordinate $x$. For this reason we set $V_x:=L^2(\mathbb R;\mathbb C^4)$ and we define $C_x:=\mathbb I_{L^2(\mathbb R)}\otimes J.$ Starting from Equation \eqref{Eq: Bilinear map}, we can induce from $b$ a corresponding spatial bilinear map:
\begin{equation}\label{Eq: Spatial Bilinear Covariance}
b_x(F,F^\prime):=(\chi F,C_xF^\prime)_{V_x},\qquad F,F^\prime\in V_x.
\end{equation}

\begin{proposition}\label{Prop: Grassmannian Brownian motion}
	Given a Grassmannian white noise as per Definition \ref{Def: Grassmannian White Noise} on $V\equiv L^2(\mathbb R;V_x)$ with covariance $C$ as in Equation \eqref{Eq: Covariance C}, for $t\geq 0$ and for $F\in V_x$, set
	\begin{equation}\label{Eq: Grassmannian Brownian Motion}
	B_t(F):=\xi(\mathbf 1_{[0,t]}\otimes F),
	\end{equation}
	where $\mathbf{1}_{[0,t]}$ is the characteristic function on the interval $[0,t]$. Hence $(B_t)_{t\geq 0}$ is a Brownian motion, namely, for all $F,F^\prime\in V_x$,
	\begin{enumerate}
		\item it is a centered Gaussian Grassmannian process with
		covariance
		$$\omega(B_t(F)B_s(F^\prime))=\min\{t,s\}\,b_x(F,F^\prime).$$
		\item it has independent increments: for \(0\leq s\leq t\)
		$$B_t(F)-B_s(F)	= \xi(\mathbf 1_{(s,t]}\otimes F),$$
		and
		$$\omega\bigl((B_t(F)-B_s(F))(B_t(F^\prime)-B_s(F^\prime))\bigr)=	(t-s)b_x(F,F^\prime).$$
		\item 	For any two disjoint bounded intervals \(I,J\subset[0,\infty)\)
		$$\omega\bigl(\xi(\mathbf 1_I\otimes F)\,\xi(\mathbf 1_J\otimes F^\prime)\bigr)=0.$$
	\end{enumerate}
\end{proposition}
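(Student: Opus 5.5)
The whole proof reduces to the Gaussian structure of $\xi$ and to the factorization of the bilinear form $b$ along the tensor decomposition $V\simeq L^2(\mathbb R)\otimes V_x$. I would begin by checking well-posedness. For $t\ge 0$ and $F\in V_x$, the element $\mathbf 1_{[0,t]}\otimes F$ lies in $V$ with norm $\sqrt t\,\|F\|_{V_x}$. Since $\xi\in\mathcal G^{-\infty}(V)$, each $B_t(F)\in\mathcal A$ is well defined, and $B_t$ is linear in $F$ and continuous with $\|B_t\|_{\mathcal G(V_x)}\le \sqrt t\,\|\xi\|_{\mathcal G(V)}$. It therefore extends to a Grassmann random variable on $\Lambda V_x$. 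It is Gaussian with covariance $b_x$ rescaled by $t$: its moments are moments of $\xi$ evaluated on vectors of the form $\mathbf 1_{[0,t]}\otimes F_i$, so they obey the Wick rule \eqref{eq:Wickrule}, and the odd moments vanish. Centering is immediate.

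The key computational identity is the factorization
\[
b(g\otimes F,h\otimes F')=(\bar g,h)_{L^2(\mathbb R)}\,b_x(F,F'),\qquad g,h\in L^2(\mathbb R;\mathbb R\text{ or }\mathbb C),\ F,F'\in V_x .
\]
It holds because $\chi$ is complex conjugation, which acts componentwise on the tensor product, and because $C=\mathbb I_{L^2(\mathbb R^2)}\otimes J=\mathbb I_{L^2(\mathbb R)}\otimes C_x$. With $g,h$ real indicator functions we get $(\mathbf 1_I,\mathbf 1_J)_{L^2}=|I\cap J|$. Combining this with Definition \ref{Def: Grassmannian White Noise}, namely $\omega(\xi(v)\xi(w))=b(v,w)$, we obtain $\omega(\xi(\mathbf 1_I\otimes F)\xi(\mathbf 1_J\otimes F'))=|I\cap J|\,b_x(F,F')$.

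All three items follow from this. For item 1, take $I=[0,t]$ and $J=[0,s]$, so $|I\cap J|=\min\{t,s\}$. For item 2, linearity of $\xi$ together with $\mathbf 1_{[0,t]}-\mathbf 1_{[0,s]}=\mathbf 1_{(s,t]}$ a.e. (so equal in $L^2$) gives $B_t(F)-B_s(F)=\xi(\mathbf 1_{(s,t]}\otimes F)$; then $|I\cap J|=t-s$ yields the increment covariance. For item 3, disjoint intervals have $|I\cap J|=0$.

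The main obstacle is interpreting ``independent increments'' in the full sense of Remark \ref{remark:independence}, rather than only as vanishing of the covariance. If a stronger statement is wanted, I would argue as follows. For disjoint intervals $I_1,\dots,I_k$, the restrictions of $\xi$ to the mutually orthogonal subspaces $\mathbf 1_{I_j}\otimes V_x$ are compatible, since all are restrictions of the single homomorphism $\xi$. Expanding a product of increments by the Wick rule, every pairing that joins factors from different intervals vanishes by item 3. The sum over pairings then factorizes into products of Wick sums within each block. The sign bookkeeping works because each block is reordered as a contiguous group, and nonzero moments are of even degree, so moving even-degree monomials past one another produces no sign. This gives $\omega(X(F_1)Y(F_2))=\omega(X(F_1))\omega(Y(F_2))$ for the increment variables, with odd-degree terms vanishing on both sides.
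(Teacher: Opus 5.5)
Your proposal is correct and follows essentially the paper's route. The paper reduces each item to $b(\mathbf 1_I\otimes F,\mathbf 1_J\otimes F^\prime)=\int_{\mathbb R}\mathbf 1_I(\lambda)\mathbf 1_J(\lambda)\,d\lambda\; b_x(F,F^\prime)$ via Definition \ref{Def: Grassmannian White Noise}, and your Wick-pairing factorization for independence in the full sense is the same argument the paper gives in the remark immediately following the proposition.
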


\begin{proof}
We observe that Equation \eqref{Eq: Grassmannian Brownian Motion} is well-posed since $\mathbf 1_{[0,t]}\otimes F\in V$ for all finite values of $t$. Furthermore, per construction, $B_0(F)=0$ and $B_t$ inherits from $\xi$ the property of being Gaussian and centered. Bearing in mind these observations,
\begin{enumerate}
	\item For any $s,t\geq 0$, Definition \ref{Def: Grassmannian White Noise} entails that, for all $F,F^\prime \in V_x$,
	$$\begin{aligned}
		\omega(B_t(F)B_s(F^\prime))&=\omega\left(\xi(\mathbf 1_{[0,t]}\otimes F)\xi(\mathbf 1_{[0,s]}\otimes F^\prime)\right)=b(\mathbf 1_{[0,t]}\otimes F,\mathbf 1_{[0,s]}\otimes F^\prime)\\
		&=\int_{\mathbb R}\mathbf 1_{[0,t]}(\lambda)\mathbf 1_{[0,s]}(\lambda)\,d\lambda\;b_x(F,F^\prime)=\min\{t,s\}\,b_x(F,F^\prime).
	\end{aligned}$$
	\item If $0\leq s\leq t$, being $\xi$ linear, Equation \eqref{Eq: Grassmannian Brownian Motion} entails
	$$B_t(F)-B_s(F)=\xi\bigl((\mathbf 1_{[0,t]}-\mathbf 1_{[0,s]})\otimes F\bigr)=\xi(\mathbf 1_{(s,t]}\otimes F).$$
	Therefore, for all $F,F^\prime \in V_x$, once more Definition \ref{Def: Grassmannian White Noise} yields
	\begin{gather*}
		\omega\bigl((B_t(F)-B_s(F))(B_t(F^\prime)-B_s(F^\prime))\bigr)=\omega\bigl(\xi\bigl(\mathbf 1_{(s,t]}\otimes F\bigr)\xi\bigl(\mathbf 1_{(s,t]}\otimes F^\prime\bigr)\bigr)=\\
		b(\mathbf 1_{(s,t]}\otimes F,\mathbf 1_{(s,t]}\otimes F^\prime)=\int_{\mathbb R}\mathbf 1_{(s,t]}(\lambda)\,d\lambda\; b_x(F,F^\prime)=(t-s)b_x(F,F^\prime).
	\end{gather*}
	
	\item if $I,J\subset[0,\infty)$ are two disjoint intervals, then $\mathbf{1}_I(\lambda)\mathbf{1}_J(\lambda)=0$. Hence for all $F,F^\prime \in V_x$, Definition \ref{Def: Grassmannian White Noise} entails
$$\omega\bigl(\xi(\mathbf 1_I\otimes F)\,\xi(\mathbf 1_J\otimes F^\prime)\bigr)=b(\mathbf{1}_I\otimes F,\mathbf{1}_J\otimes F^\prime)=\int_{\mathbb R}\mathbf{1}_I(\lambda)\mathbf{1}_J(\lambda)\,d\lambda\; b_x(F,F^\prime)=0.$$
\end{enumerate}
This entails the sought result.
\end{proof}

\begin{remark}
A consequence of point 2 of Proposition \ref{Prop: Grassmannian Brownian motion} is that the Brownian motion $B_t$ has independent increments in the sense of Remark \ref{remark:independence}. More precisely, the compatible Grassmann random variables induced by the maps
\[
V_x\ni F\longmapsto \xi(\mathbf 1_I\otimes F),
\qquad
V_x\ni F\longmapsto \xi(\mathbf 1_J\otimes F),
\]
satisfy Equation~\eqref{Eq: Independence of GRV}. \\
By linearity, it is sufficient to prove Equation~\eqref{Eq: Independence of GRV} for monomials. Let $n,m\in\mathbb N_0$ and let $F_1,\ldots,F_n,G_1,\ldots,G_m\in V_x$. Since $\xi$ is Gaussian, the Wick
rule expresses
\[
\omega\left(
\prod_{r=1}^{n}\xi(\mathbf 1_I\otimes F_r)
\prod_{\ell=1}^{m}\xi(\mathbf 1_J\otimes G_\ell)
\right)
\]
as a sum over pairings of all the factors. Every pairing containing a
contraction between a factor supported in $I$ and one supported in $J$
vanishes, since $\omega\left(
\xi(\mathbf 1_I\otimes F_r)
\xi(\mathbf 1_J\otimes G_\ell)
\right)=0$. 
Therefore, only pairings internal to the two families of factors contribute. The Wick sum consequently factorizes, and one obtains
\[\omega\left(
	\prod_{r=1}^{n}\xi(\mathbf 1_I\otimes F_r)
	\prod_{\ell=1}^{m}\xi(\mathbf 1_J\otimes G_\ell)
	\right)=
	\omega\left(
	\prod_{r=1}^{n}\xi(\mathbf 1_I\otimes F_r)
	\right)
	\omega\left(
	\prod_{\ell=1}^{m}\xi(\mathbf 1_J\otimes G_\ell)
	\right).
\]
By the definition of independence in 
Equation~\eqref{Eq: Independence of GRV}, the increments of $(B_t)_{t\geq0}$ over disjoint time intervals are independent.
\end{remark}

\begin{remark}
	Observe that, when considering a Grassmannian Brownian motion as per Proposition \ref{Prop: Grassmannian Brownian motion}, Equation \eqref{Eq: Splitting} can be replaced by the decomposition $B_t=(B^1_t,B^2_t)$. Considering the orthogonal splitting $V_x=V^1_x\oplus V^2_x$ where $V^i_x\simeq L^2(\mathbb{R},\mathbb{C}^2)$, $i=1,2$, it holds that, for any $V_x\ni F=(f_1,f_2)$, $F^\prime=(f^\prime_1,f^\prime_2)$,
	$$B_t(F)=B^1_t(f_1)+B^2_t(f_2),$$
	while the two-point correlations are 
	\begin{gather*}
	\omega(B_t^1(f_1)B_s^1(f^\prime_1))=0,\qquad\omega(B_t^2(f_2)B_s^2(f^\prime_2))=0,\\
	\omega(B_t^1(f_1)B_s^2(f^\prime_2))=\min\{t,s\}(\chi f_1,f^\prime_2)_{L^2(\mathbb R;\mathbb C^2)},\quad \omega(B_t^2(f_2)B_s^1(f^\prime_1))=-\min\{t,s\}(\chi f_2,f^\prime_1)_{L^2(\mathbb R;\mathbb C^2)}.	
	\end{gather*}
	
\end{remark}

\noindent  In the following we introduce one auxiliary structure which is necessary for the construction of a Euclidean free Fermionic field. To this end, we recall that the operator $-\partial^2_x+m^2$ with $m^2>0$ and domain $C^\infty_0(\mathbb{R})$ admits a unique self-adjoint extension, which has in addition a strictly positive spectrum. Hence, using functional calculus, we can introduce on $V_x$ the positive, self-adjoint operator $A_x:=(-\partial_x^2+m^2)^{1/2}\otimes\mathbb I_{\mathbb C^4}$. The associated semigroup $e^{-rA_x}$, $r\geq 0$ is bounded on $V_x$ and in addition it preserves $\mathcal{S}(\mathbb R;\mathbb C^4)$. Observe that the construction below uses a white noise constructed out of test functions possibly supported on the full Euclidean time axis, whereas the process $B_t$ denotes only the increment over $[0,t]$.

\begin{definition}\label{Def: Stationary GO-U Field}
We call {\bf stationary Grassmannian Ornstein--Uhlenbeck field} associated with $A_x$ the process $\varphi\equiv(\varphi_t)_{t\in\mathbb{R}}$ such that, for any $t\in\mathbb{R}$ and for any $H\in V_x$,
\begin{equation}\label{Eq: GO-U Field}
\varphi_t(H):=\xi\bigl(s\mapsto \mathbf 1_{(-\infty,t]}(s)e^{-(t-s)A_x}H\bigr).
\end{equation}
\end{definition}

\begin{remark}
We highlight that Equation \eqref{Eq: GO-U Field} is best understood by means of its formal representation
$$\varphi_t(H)=\int_{-\infty}^t\xi_s\bigl(e^{-(t-s)A_x}H\bigr)\,ds.$$
The definition is meaningful also because for every $t\in\mathbb{R}$ and for all $H\in V_x$, the map $s\longmapsto \mathbf{1}_{(-\infty,t]}(s)e^{-(t-s)A_x}H$ belongs to $V\equiv L^2(\mathbb{R};V_x)$ as one can infer observing that, on $(-\infty,t)$ the operator $e^{-(t-s)A_x}$ is bounded with norm controlled by $e^{-m(t-s)}$.
\end{remark}

\begin{remark}
It is worth emphasizing that the process $\varphi$ can also be viewed as the stationary solution of the Grassmannian Ornstein--Uhlenbeck equation. More precisely, for any $H$ lying in the domain of $A_x$ and for any $t\geq 0$, it holds
$$\varphi_t(H)-\varphi_0(H)=-\int_0^t \varphi_s(A_xH)\,ds + B_t(H),$$
where
$$\varphi_0(H)=\xi\bigl(s\mapsto \mathbf 1_{(-\infty,0]}(s)e^{sA_x}H\bigr).$$
At a formal notation level, this translates as
\[
d\varphi_t=-A_x\varphi_t\,dt+dB_t,
\qquad
\varphi_{|t=0}=\varphi_0.
\]
\end{remark}

\noindent In order to make contact with Definition \ref{Def: Bosonic OS Distributions}, we consider $\mathbb{R}^2_+:=[0,+\infty)\times\mathbb{R}$ as well as the test-function space $\mathcal{S}(\mathbb{R}^2_+;\mathbb{C}^4)\equiv\mathcal{S}(\mathbb{R}^2_+)\otimes\mathbb{C}^4$, see Equation \eqref{Eq: Positive time Test Function Space}. For \(F=(f_1,f_2)\in\mathcal S(\mathbb R^2_+;\mathbb C^4)\),
writing \(\varphi=(\varphi^1,\varphi^2)\), we define the smeared field
$$\varphi(F):=\int_0^{+\infty}\varphi_t(F(t,\cdot))\,dt,$$
or, equivalently, 
$$\varphi(F)=\varphi^1(f_1)+\varphi^2(f_2)\in \mathcal{A},\quad\textrm{where}\quad\varphi^i(f_i):=\int_0^{+\infty}
\varphi_t^i(f_i(t,\cdot))\,dt,\; i=1,2.$$

\begin{definition}\label{Def: Euclidean FFF}
	Given a stationary Grassmannian Ornstein--Uhlenbeck field as per Definition \ref{Def: Stationary GO-U Field}, we call {\bf Euclidean Fermionic free field} the operator-valued tempered distribution $\psi=(\psi^1,\psi^2)\in \mathcal{L}(\mathcal{S}(\mathbb{R}^2_+;\mathbb{C}^4),\mathcal{A})$ such that
	$$\psi^1:=\varphi^1,\qquad\psi^2:=D_E^*\varphi^2,\quad\textrm{where}\quad D_E^*=-\gamma_E^\mu\partial_\mu+m.$$
\end{definition}

In order to better grasp the meaning of this definition, using the notation established in Section \ref{Sec: Analytic Structures} and given $F=(f_1,f_2)\in\mathcal{S}(\mathbb{R}^2_+;\mathbb{C}^4)$, one can write
$$\psi^1(f_1)=\int_{\mathbb R} dk\int_0^{+\infty} dt\int_{-\infty}^{t} ds\,\widehat{\xi}^1_i(k,s)e^{-A(k)(t-s)}(\hat{f}_1)_i(-k,t),$$
where the $i$-subscript denotes an implicit sum over the two components in $\mathbb{C}^2$ while $A(k)=\sqrt{k^2+m^2}$.  At the same time, setting $\Gamma_E(k):=\gamma_E^0 A(k)-i\gamma_E^1 k+m\mathbb{I}_2$, where $\mathbb{I}_2$ is the identity matrix acting on $\mathbb{C}^2$, it holds 
$$\psi^2(f_2)=\int_{\mathbb R} dk	\int_0^{+\infty} dt \int_{-\infty}^{t} ds\, \Gamma_E(k)_{ij}\widehat{\xi}^2_i(k,s)e^{-A(k)(t-s)}(\hat{f}_2)_j(-k,t)-	\int_{\mathbb R} dk	\int_0^{+\infty} dt\,(\gamma_E^0)_{ij}	\widehat{\xi}^2_i(k,t)(\hat{f}_2)_j(-k,t),$$
where, also in this case the $i,j$-subscripts denote an implicit sum over components in $\mathbb{C}^2$. Observe that the two-point functions of $\psi$ are determined by the underlying Ornstein--Uhlenbeck covariance. Working at the level of integral kernels,
\begin{gather*}
\omega(\psi^1(\mathrm{x})\psi^1(\mathrm{y}))=0,\qquad\omega(\psi^2(\mathrm{x})\psi^2(\mathrm{y}))=0,\\	
\omega(\psi^1_i(\mathrm{x})\psi^2_j(\mathrm{y}))=(S_E)_{ij}(\mathrm{y}-\mathrm{x}),\quad \omega(\psi^2_i(\mathrm{x})\psi^1_j(\mathrm{y}))=-(S_E)_{ji}(\mathrm{x}-\mathrm{y}),
\end{gather*}
where $S_E$ is as per Equation \eqref{Eq: Sol-fond}, and the componentwise convention fixes the two-point kernel as a function of $\mathrm{y}-\mathrm{x}$.

\subsection{Time-zero fields}\label{Sec: Time-zero fields}

Following the strategy of Nelson~\cite{Nel73}, the reconstruction of a Lorentzian theory requires a notion of fields evaluated at Euclidean time zero. In the present setting this can be formulated in terms of conditional expectations associated with the filtration generated by the white noise, hence bypassing the obstruction highlighted in \cite{OS73a}. Therefore, let $\xi$ be a Grassmannian white noise as in Definition \ref{Def: Grassmannian White Noise}, regarded as an element of $\mathcal{L}(\mathcal S(\mathbb R^2;\mathbb C^4);\mathcal{A})$ and let $B_t$ be the associated Grassmannian Brownian motion as in Proposition \ref{Prop: Grassmannian Brownian motion}. For any  bounded interval $I\subset\mathbb{R}$ and for any $h\in V_x$, we set
$$B_I(h)\assign \xi(\mathbf{1}_I\otimes h).$$
For $t\in\mathbb{R}$, set $V_{\leq t}\assign L^2((-\infty,t];V_x)\subset V$ and define $\mathcal{A}_{\leq t}\assign
\{\xi(F),\xi(F)^*:F\in V_{\leq t}\}^{\prime\prime}\subset\mathcal{A}$, the von Neumann algebra generated by the past fields. Furthermore we assume that there exists a family of conditional expectations, see Definition \ref{Def: Conditional Expectations},
$$\omega_t:\mathcal{A}\longrightarrow\mathcal{A}_{\leq t}\qquad t\in\mathbb{R},$$
such that, for every $F\in V$
$$\omega_t(\xi(F))=\xi(\mathbf{1}_{(-\infty,t]}F).$$

\begin{definition}[Zero-time fields]\label{Def: Zero-Time Fields}
We call {\bf zero-time Fermionic fields} associated to $F=(f_1,f_2)\in\mathcal{S}(\mathbb R^2_+;\mathbb C^4)$ the following elements of $\mathcal{A}$:
$$\psi^1_0(f_1)\assign \omega_0(\psi^1(f_1)),
	\quad
	\psi^2_0(f_2)\assign \omega_0(\psi^2(f_2)),\quad\psi_0(F)\assign \psi^1_0(f_1)+\psi^2_0(f_2).$$
\end{definition}

\begin{remark} In Definition~\ref{Def: Zero-Time Fields} we have assumed the existence of a family of conditional expectations $\omega_t$. This assumption is not too strong as one can infer from the following example. Setting $V_{\leq t}:=P_t[V]$ and $V_{>t}=(\mathbb{I}-P_t)[V]$, where
	$$P_t:V\to V,\quad F\mapsto P_t(F)\assign \mathbf{1}_{(-\infty ,t]}F.$$
We have established the orthogonal, direct sum decomposition $V=V_{\leq t}\oplus V_{>t}$. At the level of Fermionic Fock space, this entails the existence of an isomorphism
$$\mathcal{F}_-(V)\simeq\mathcal{F}_-(V_{\leq t})\widehat\otimes\mathcal{F}_-(V_{>t}),$$
which justifies the identification
$$\mathcal{A}_{\leq t}:=\mathcal{B}(\mathcal{F}_-(V_{\leq t}))\otimes\mathbb{I}=\{\xi(F),\xi(F)^*:F\in V_{\leq t}\}^{\prime\prime}
\subset\mathcal{A}.$$
Bearing in mind  Definition \ref{Def: Conditional Expectations}, $\mathcal{A}_{\leq t}$ plays the role of $\mathcal{A}^\prime$, while, for any $t\in\mathbb{R}$ a conditional expectation is the map
$$\omega_t:\mathcal{A}\to\mathcal{A}_{\leq t},\quad A\mapsto\omega_t(A):=	\bigl(\mathrm{id}\otimes\omega_{>t}\bigr)(A)\otimes\mathbb I,$$
where \(\omega_{>t}\) is the standard vacuum state on $\mathcal{B}(\mathcal{F}_-(V_{>t}))$. Observe that  $\omega_t$ is per construction positive, unital, and it satisfies
$$\omega_t(B_1AB_2)=B_1\omega_t(A)B_2,\qquad A\in\mathcal{A},\quad B_1,B_2\in\mathcal{A}_{\leq t}.$$
Moreover, since $\Omega$, the vacuum vector of $\mathcal{F}_-(V)$, factorises as $\Omega=\Omega_{\leq t}\otimes\Omega_{>t}$, it follows by direct inspection that 
$$\omega(\omega_t(A))=\omega(A),\qquad A\in\mathcal A.$$
This allows to conclude that we have established the existence of a conditional expectation. It is worth also observing that, since $P_t$ commutes with both the real structure $\chi$ and with the covariance $C$, then the decomposition as per Definition \ref{Def: Creator Grassmann Random Variable} 
$$\xi(F)=a(\chi(F))+a^\dagger(C (F))\quad F\in V,$$
yields
$$\omega_t(\xi(F))=\xi(P_t(F))=\xi(\mathbf{1}_{(-\infty,t]}F),$$
coherently with the discussion preceding the definition of the zero-time fields.
\end{remark}

\noindent Working at the level of integral kernels and using the notation introduced in the previous subsection, one obtains that, for $F=(f_1,f_2)\in\mathcal{S}(\mathbb{R}^2_+;\mathbb{C}^4)$
\begin{equation}\label{Eq: Time-zero-fields-kernel}
	\begin{aligned}
		\psi^1_0(f_1)&=\int_{\mathbb{R}}\mathd k \int_0^{+\infty}\mathd t \int_{-\infty}^{0}\mathd s\,\widehat{\xi}^1_i(k,s)
		e^{-A(k)(t-s)}
		(\widehat{f}_1)_i(-k,t),
		\\
		\psi^2_0(f_2)
		&=
		\int_{\mathbb{R}}\mathd k
		\int_0^{+\infty}\mathd t
		\int_{-\infty}^{0}\mathd s\,
		\Gamma_E(k)_{ij}
		\widehat{\xi}^2_i(k,s)
		e^{-A(k)(t-s)}
		(\widehat{f}_2)_j(-k,t),
	\end{aligned}
\end{equation}
where, once more $A(k)=(k^2+m^2)^{1/2}$, while $\Gamma_E(k)=\gamma_E^0A(k)-i\gamma_E^1k+m\mathbb{I}_2$. We can also identify the covariance of the zero-time fields, but, to this end, we need some preliminary notation. Setting $\mathrm{x},\mathrm{y}\in\mathbb R^2_+$ with $\mathrm{x}=(t,x)$ and $\mathrm{y}=(t^\prime,y)$,  we define the reflected Euclidean Dirac kernel by
$$S_E^\vartheta(\mathrm{x},\mathrm{y})\assign S_E(\mathrm{y}-\vartheta(\mathrm{x})),$$
where $\vartheta$ is the Euclidean time-reflection as in Equation \eqref{Eq: Time-Reflection}.

\begin{proposition}\label{Prop: Time-zero covariance}
	For any $F=(f_1,f_2)$ and $H=(h_1,h_2)\in\mathcal{S}(\mathbb R^2_+;\mathbb C^4)$ and given the zero-time Fermionic fields as per Definition \ref{Def: Zero-Time Fields}, it holds 
	\begin{equation}\label{Eq: Two-Point Correlation Zero-Time}
		\omega\bigl(\psi^1_0(f_1)\psi^2_0(h_2)\bigr)
		=\sum_{i,j=1}^2
		\left\langle (S_E^\vartheta)_{ij},
		(f_1)_i\otimes(h_2)_j\right\rangle,
	\end{equation}
	where
	$$(S_E^\vartheta)_{ij}\in\mathcal{S}^\prime\bigl(\mathbb{R}^2_+\times\mathbb{R}^2_+\bigr)$$
	are the components of the reflected Euclidean Dirac kernel. At the same time, 
	$$\omega\bigl(\psi^1_0(f_1)\psi^1_0(h_1)\bigr)=0,\qquad\omega\bigl(\psi^2_0(f_2)\psi^2_0(h_2)\bigr)=0.$$
\end{proposition}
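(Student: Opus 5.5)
The plan is to reduce everything to the two-point function of the white noise, via the explicit kernel representation \eqref{Eq: Time-zero-fields-kernel}. Since $\omega_0$ acts on $\xi(F)$ as $\xi(\mathbf 1_{(-\infty,0]}F)$, each zero-time field is a white noise smeared against an explicit test function supported in negative times:
\[
\psi^1_0(f_1)=\xi^1(u_{f_1}),\qquad \psi^2_0(h_2)=\xi^2(w_{h_2}).
\]
The smearing functions are
\[
u_{f_1}(s,\cdot)=\mathbf 1_{(-\infty,0]}(s)\int_0^{\infty}e^{-(t-s)A_x}f_1(t,\cdot)\,\mathd t,
\qquad
w_{h_2}(s,\cdot)=\mathbf 1_{(-\infty,0]}(s)\int_0^{\infty}\Gamma_E\, e^{-(t-s)A_x}h_2(t,\cdot)\,\mathd t .
\]
Here $\Gamma_E$ is the spatial Fourier multiplier $\Gamma_E(k)$. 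The local term $-\gamma_E^0\widehat\xi^2(k,t)$ of $\psi^2$ disappears because it lives at times $t\ge 0$, and is thus killed by $P_0$ up to a null set. One first checks that $u_{f_1},w_{h_2}\in V$; this follows from $\|e^{-rA_x}\|\le e^{-mr}$ and the rapid decay of $f_1,h_2$.

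The vanishing statements are then immediate. By Remark \ref{Rem: Algebra Valued White Noise}, $\omega(\xi^1\xi^1)=0$ and $\omega(\xi^2\xi^2)=0$. Hence $\omega(\psi^1_0(f_1)\psi^1_0(h_1))=0$ and $\omega(\psi^2_0(f_2)\psi^2_0(h_2))=0$.

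For the mixed term, the same remark gives $\omega(\psi^1_0(f_1)\psi^2_0(h_2))=(\chi u_{f_1},w_{h_2})_{V}$. Pass to spatial Fourier transform in $x$, where $\chi$ maps $\hat g(k)$ to $\overline{\hat g(-k)}$; this produces the $(-k)$ pairing appearing in \eqref{Eq: Time-zero-fields-kernel}. The $s$-integral is
\[
\int_{-\infty}^0 e^{-A(k)(t-s)}e^{-A(k)(t'-s)}\,\mathd s=\frac{e^{-A(k)(t+t')}}{2A(k)},
\]
so the covariance becomes
\[
\int_{\mathbb R}\mathd k\int_0^\infty\!\!\int_0^\infty \mathd t\,\mathd t'\,(\hat f_1)_i(-k,t)\,\frac{\Gamma_E(k)_{ij}\,e^{-A(k)(t+t')}}{2A(k)}\,(\hat h_2)_j(k,t'),
\]
up to the $2\pi$ normalisations.

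The remaining task, and the main obstacle, is to identify $\Gamma_E(k)e^{-A(k)\tau}/(2A(k))$, with $\tau=t+t'>0$, as the partial Fourier transform in $x$ of $S_E(\tau,\cdot)$; the variable $\tau=t'-(-t)$ is exactly the time separation of $\mathrm y-\vartheta(\mathrm x)$. I would do this by a residue computation in $\omega$ applied to $\hat S_E(\omega,k)=\frac{1}{2\pi}\frac{-i\gamma_E^0\omega-i\gamma_E^1k+m}{\omega^2+A(k)^2}$. For $\tau>0$ one closes the contour so as to pick up the pole $\omega=\pm iA(k)$ giving decay $e^{-A\tau}$; this turns $-i\gamma_E^0\omega$ into $\gamma_E^0A(k)$, producing $(\gamma_E^0A(k)-i\gamma_E^1k+m)e^{-A\tau}/(2A)$. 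Equivalently, one can apply $D_E^*$ to $\hat G(\tau,k)=e^{-A|\tau|}/(2A)$, noting $-\gamma_E^0\partial_\tau e^{-A\tau}=\gamma_E^0Ae^{-A\tau}$ for $\tau>0$.

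The delicate points are the sign conventions: the sign of $\tau$ fixed by the Fourier convention, $k\mapsto -k$ from $\chi$, and which argument of $S_E$ carries the reflection. I would fix these by matching against the already stated Euclidean identity $\omega(\psi^1_i(\mathrm x)\psi^2_j(\mathrm y))=(S_E)_{ij}(\mathrm y-\mathrm x)$: that formula is obtained by the identical computation with $s$ integrated up to $\min(t,t')$ instead of $0$, so replacing $\min(t,t')$ by $0$ simply replaces $|t-t'|$ with $t+t'$, i.e.\ $\mathrm x$ with $\vartheta\mathrm x$. Since $t+t'>0$ strictly on the interior, the local term never contributes. This yields \eqref{Eq: Two-Point Correlation Zero-Time} as a distribution on $\mathbb R^2_+\times\mathbb R^2_+$.
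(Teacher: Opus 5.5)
Your proposal is correct and follows essentially the paper's route. You express the zero-time fields through Equation \eqref{Eq: Time-zero-fields-kernel} as white noise smeared over negative times, use the mixed noise covariance of Remark \ref{Rem: Algebra Valued White Noise} (which also gives the two vanishing identities), integrate out $s$ to obtain $e^{-A(k)(t+t')}/(2A(k))$, and recognise the spatial Fourier representation of $S_E(\mathrm{y}-\vartheta(\mathrm{x}))$. Your residue computation and your remark on the local term just make explicit what the paper asserts in one line.

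One small slip concerns the sign of $k$ in $\Gamma_E$. Starting from Equation \eqref{Eq: Time-zero-fields-kernel}, the $\delta(k+k')$ from the noise covariance turns its $\Gamma_E(k)$ into $\Gamma_E(-k)_{ij}$ against $(\widehat{f}_1)_i(-k,t)(\widehat{h}_2)_j(k,t')$, as in the paper. Your $\Gamma_E(k)_{ij}$ instead gives the transposed spinor components, since $\Gamma_E(k)^{T}=\Gamma_E(-k)$ for the Pauli choice. This is exactly the index convention your matching step has to settle.
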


\begin{proof}
	Combining \eqref{Eq: Time-zero-fields-kernel} and the two-point correlation functions in Remark \ref{Rem: Algebra Valued White Noise}, one obtains
	$$\omega\bigl(\psi^1_0(f_1)\psi^2_0(h_2)\bigr)=\int_{\mathbb{R}}\mathd k\int_0^{+\infty}\mathd t\int_0^{+\infty}\mathd t^\prime
		\int_{-\infty}^{0}\mathd s\,e^{-A(k)(t-s)}e^{-A(k)(t'-s)}\Gamma_E(-k)_{ij}(\widehat{f}_1)_i(-k,t)(\widehat{h}_2)_j(k,t').$$
Taking the integral over $s$ yields
$$\omega\bigl(\psi^1_0(f_1)\psi^2_0(h_2)\bigr)=\int_{\mathbb{R}}\mathd k\int_0^{+\infty}\mathd t\int_0^{+\infty}\mathd t^\prime\, \frac{\Gamma_E(-k)_{ij}}{2A(k)} e^{-A(k)(t+t')}(\widehat{f}_1)_i(-k,t)(\widehat{h}_2)_j(k,t').$$
This is the spatial Fourier representation of the reflected kernel $S_E^\vartheta(\mathrm{x},\mathrm{y})=S_E(\mathrm{y}-\vartheta(\mathrm{x}))$ appearing on the right-hand side of Equation \eqref{Eq: Two-Point Correlation Zero-Time}. The remaining identity is instead a by-product of Remark \ref{Rem: Algebra Valued White Noise}.
\end{proof}

To conclude this subsection we establish an interplay between time-zero fields and Wick polynomials. We recall that the latter are maps $W^\psi_n\in\mathcal{L}(\mathcal{S}(\mathbb{R}^2_+;\mathbb{C}^4)^{\widehat{\otimes}n};\mathcal{A})$, $n\geq 1$, recursively defined as follows: Given $F_1,\cdots,F_n\in\mathcal S(\mathbb R^2_+;\mathbb C^4)$, 
\begin{equation}\label{Eq: Wick Polynomials order 1 and 2}
W^\psi_1(F_1)\assign \psi(F_1),\quad W^\psi_2(F_1\otimes F_2)\assign\psi(F_1)\psi(F_2)-\omega(\psi(F_1)\psi(F_2))\mathbb{I}_{\mathcal{A}},
\end{equation}
where $\omega$ is the state of the underlying non-commutative probability space, while $\mathbb{I}_{\mathcal{A}}$ is the unit element of the underlying algebra. For ease of notation, we shall omit henceforth to write explicitly the identity element leaving it implicit. For $n\geq 3$, the Wick rule yields
\begin{equation}\label{Eq: Wick Polynomials order 3}
W^\psi_n(F_1\otimes\cdots\otimes F_n)\assign\psi(F_1)W^\psi_{n-1}(F_2\otimes\cdots\otimes F_n)-\sum_{j=2}^n(-1)^j	\omega(\psi(F_1)\psi(F_j))	W^\psi_{n-2}(F_2\otimes\cdots\widetilde{F_j}\cdots\otimes F_n),
\end{equation}
where the tilde denotes an omitted factor. For simplicity and to make contact with the notation often used in the theoretical physics literature, we shall also write
\begin{equation}\label{Eq: Wick Polynomials vs :}
\Wick{\psi(F_1)\cdots\psi(F_n)}\assign W^\psi_n(F_1\otimes\cdots\otimes F_n).
\end{equation}

\begin{remark}\label{Rem: Wick-Polynomials Zero-Time Fields}
	In the following we will also consider Wick polynomials constructed out of zero-time Fermionic fields. Similarly, they can be defined recursively setting 
	\begin{equation}\label{Eq: Wick Polynomials order 1 and 2 - zero time}
	W^{\psi_0}_1(F_1)=\Wick{\psi_0(F_1)}\assign \psi_0(F_1),\quad W^{\psi_0}_2(F_1\otimes F_2)=\Wick{\psi_0(F_1)\psi_0(F_2)}\assign\psi_0(F_1)\psi_0(F_2)-\omega(\psi_0(F_1)\psi_0(F_2)),
	\end{equation}
	while, for $n\geq 3$, the Wick rule once more yields
	\begin{gather}
	W^{\psi_0}_n(F_1\otimes\cdots\otimes F_n)=\Wick{\psi_0(F_1)\cdots\psi_0(F_n)}\notag\\
	\assign\psi_0(F_1)\Wick{\psi_0(F_2)\cdots\psi_0(F_n)}-\sum_{j=2}^n(-1)^j	\omega(\psi_0(F_1)\psi_0(F_j))\Wick{\psi_0(F_2)\cdots\widetilde{\psi_0(F_j)}\dots\psi_0(F_n)}.\label{Eq: Wick Polynomials vs : - zero time}
	\end{gather}
\end{remark}

\noindent We can now establish the sought interplay between Wick ordering and time-zero fields. To favor a smooth reading of the main body of this manuscript, we postpone the proof of the following proposition to Appendix \ref{Appendix}. Observe that $\omega_0$ is obtained from the conditional expectations $\omega_t$ setting $t=0$.

\begin{proposition}\label{Prop: Wick_vs_time_zero}
	For any $n\in\mathbb{N}$ and for any $F_1,\dots F_n\in\mathcal S(\mathbb R^2_+;\mathbb C^4)$, 
$$\omega_0\bigl(W^\psi_n(F_1\otimes\cdots\otimes F_n)\bigr)=W^{\psi_0}_n(F_1\otimes\cdots\otimes F_n).$$
\end{proposition}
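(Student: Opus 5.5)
We argue by strong induction on $n$, using the recursive definitions \eqref{Eq: Wick Polynomials order 3} and \eqref{Eq: Wick Polynomials vs : - zero time}. The case $n=1$ is Definition~\ref{Def: Zero-Time Fields}. For $n=2$, unitality of $\omega_0$ gives $\omega_0(W^\psi_2(F_1\otimes F_2))=\omega_0(\psi(F_1)\psi(F_2))-\omega(\psi(F_1)\psi(F_2))$. It therefore suffices to show
\[
\omega_0(\psi(F_1)\psi(F_2))=\psi_0(F_1)\psi_0(F_2)-\omega\bigl(\psi_0(F_1)\psi_0(F_2)\bigr)+\omega\bigl(\psi(F_1)\psi(F_2)\bigr).
\]

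To prove this we split each field along $V=V_{\leq 0}\oplus V_{>0}$. Every $\psi(F)$ is of the form $\xi(\Phi_F)$ for some $\Phi_F\in V$, obtained by applying the OU kernel (and $D_E^*$ on the second component) to $F$. Hence
\[
\psi(F)=\xi(P_0\Phi_F)+\xi((\mathbb I-P_0)\Phi_F)=\psi_0(F)+\psi_>(F),
\]
where $\psi_0(F)=\omega_0(\psi(F))\in\mathcal A_{\leq 0}$. The remainder $\psi_>(F)$ lies in the algebra generated by the future noise. In the tensor product realization of $\omega_0$ from the Remark following Definition~\ref{Def: Zero-Time Fields}, it is of the form $\Theta\otimes\xi_{>0}(\cdot)$, with $\Theta$ the grading (parity) operator on $\mathcal F_-(V_{\leq 0})$.

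The key lemma we would establish is a factorization rule. For a product of fields, $\omega_0$ acts by keeping the past factors and replacing the future ones by their vacuum expectation, with the Grassmann sign coming from the Jordan--Wigner twist. Concretely, for even numbers of future factors one has $\omega_0(\Theta^k\otimes Y)=\omega_{>0}(Y)\,\Theta^k$, and odd future monomials have zero vacuum expectation. Moreover, past and future fields are anticommuting and $\omega$-independent by Remark~\ref{remark:independence}. Altogether this gives
\[
\omega_0\Bigl(\prod_{r}\bigl(\psi_0(F_r)+\psi_>(F_r)\bigr)\Bigr)=\sum_{S}\pm\prod_{r\notin S}\psi_0(F_r)\;\omega\Bigl(\prod_{r\in S}\psi_>(F_r)\Bigr),
\]
where the sum runs over subsets $S$ of future slots and the sign is the one needed to move the future factors to the right. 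For $n=2$ the cross terms vanish, since each carries an odd number of future factors. Independence also gives $\omega(\psi\psi)=\omega(\psi_0\psi_0)+\omega(\psi_>\psi_>)$, and this yields the displayed identity.

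For the inductive step we apply $\omega_0$ to \eqref{Eq: Wick Polynomials order 3}, which gives
\[
\omega_0\bigl(\psi(F_1)W^\psi_{n-1}\bigr)-\sum_{j=2}^n(-1)^j\omega(\psi(F_1)\psi(F_j))\,W^{\psi_0}_{n-2}(\cdots).
\]
Here the induction hypothesis is used in the second term. In the first term we write $\psi(F_1)=\psi_0(F_1)+\psi_>(F_1)$. The past part can be taken out by the bimodule property, producing $\psi_0(F_1)W^{\psi_0}_{n-1}$ by induction. The remaining task is to show
\[
\omega_0\bigl(\psi_>(F_1)W^\psi_{n-1}(F_2\otimes\cdots\otimes F_n)\bigr)=\sum_{j=2}^n(-1)^j\,\omega\bigl(\psi_>(F_1)\psi_>(F_j)\bigr)\,W^{\psi_0}_{n-2}(\cdots\widetilde{F_j}\cdots).
\]
Combining this with $\omega(\psi\psi)-\omega(\psi_>\psi_>)=\omega(\psi_0\psi_0)$ reproduces exactly the recursion \eqref{Eq: Wick Polynomials vs : - zero time}.

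We prove this last identity by viewing $\psi_>(F_1)$ as a sum of an annihilation part and a creation part acting on the future Fock factor. After $\omega_0$ is applied, the creation part dies against the future vacuum on the left, up to sign. The annihilation part contracts with each $\psi(F_j)$ inside the Wick product through the derivation rule $\{a(\chi u),\xi(w)\}=b(u,w)$. Since Wick products are Fock normal-ordered with respect to $\omega$, this derivation maps $W^\psi_{n-1}$ to $\sum_j(-1)^j b(\cdot,\cdot)W^\psi_{n-2}$; we would prove this by a separate short induction on the recursion. Only the future component of $\Phi_{F_j}$ pairs with $\psi_>(F_1)$, which yields $\omega(\psi_>(F_1)\psi_>(F_j))$, and the induction hypothesis then converts $\omega_0(W^\psi_{n-2})$ into $W^{\psi_0}_{n-2}$.

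The main obstacle is tracking the signs: those from the Jordan--Wigner twist $\Theta$ in $\mathcal F_-(V_{\leq0})\widehat\otimes\mathcal F_-(V_{>0})$ and those in the Wick recursion. We would first verify the rule $\omega_0(\psi_>(G)X)=\sum_j\pm\cdots$ explicitly for $n=3,4$ before committing to a general sign bookkeeping.
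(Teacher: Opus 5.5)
Your argument is correct, and its skeleton is the paper's: induction through the Wick recursion, the splitting $\psi=\psi_0+\psi_>$, the bimodule property for the past part, and $\omega(\psi\psi)=\omega(\psi_0\psi_0)+\omega(\psi_>\psi_>)$. Your key identity for $\omega_0\bigl(\psi_>(F_1)W^\psi_{n-1}(F_2\otimes\cdots\otimes F_n)\bigr)$ is equivalent, via the recursion with first factor $\psi_>(F_1)$, to the paper's claim $\omega_0[W^\psi_n(F_{1>}\otimes F_2\otimes\cdots\otimes F_n)]=0$. The difference lies in how that claim is proved.

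The paper stays within abstract conditional-expectation calculus. It exchanges arguments, successively replacing each $F_j$ by its future part, and ends with a Wick monomial in purely future fields whose image under $\omega_0$ vanishes. That last step tacitly needs exactly the factorization $\omega_0\bigl(\prod_{r\in S}\psi_>(F_r)\bigr)=\omega\bigl(\prod_{r\in S}\psi_>(F_r)\bigr)\mathbb{I}_{\mathcal{A}}$, which you derive explicitly from $\omega_0=\mathrm{id}\otimes\omega_{>0}$.

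You instead compute directly in Fock space. Write $\psi_>(F_1)=a^\dagger(Cu)+a(\chi u)$ with $u\in V_{>0}$. Then $\omega_0(a^\dagger(Cu)X)=0$ for every $X$. The part $a(\chi u)$ acts through the odd derivation $\delta$ with $\delta(\xi(w))=b(u,w)$, which lowers the Wick degree by one with precisely the coefficients $(-1)^j\omega(\psi_>(F_1)\psi_>(F_j))$. This route is shorter and exposes the mechanism. Its only cost is reliance on the concrete realization of $\omega_0$, and that is harmless. Any $\omega$-preserving conditional expectation onto $\mathcal{B}(\mathcal{F}_-(V_{\leq 0}))\otimes\mathbb{I}$ sends $\mathbb{I}\otimes Y$ into the trivial centre, hence to $\omega_{>0}(Y)\mathbb{I}$. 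It therefore coincides with $\mathrm{id}\otimes\omega_{>0}$ on all polynomials in the fields.

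When you write it up, state explicitly what happens after you anticommute $a(\chi u)$ through $W^\psi_{n-1}$. The remainder $(-1)^{n-1}W^\psi_{n-1}a(\chi u)$ is killed by $\omega_0$, because $a(\chi u)$ annihilates the future vacuum.

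The sign bookkeeping you anticipate is not a real obstacle. Neither vanishing statement sees the twist $\Theta$. The derivation formula for $W^\psi_{n-1}$ follows from the recursion alone: it holds for any odd derivation mapping fields to scalars, independently of the normal-ordering interpretation.
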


\begin{remark}\label{Rem: Markovian time-zero expectation}
In Nelson's Markov-field approach to Bosonic quantization \cite{Nel73}, usually one considers the conditional expectation onto the time-zero algebra rather than onto the larger algebra generated by the entire past. Indeed, for observables supported at nonnegative times, the Markov property implies that conditioning on the entire past is equivalent to conditioning on the field at time zero. An analogous statement holds in the Fermionic framework presented here. Let
$$R_0H(s)\assign\mathbf{1}_{(-\infty,0]}(s)e^{sA_x}H,\qquad\mathcal{K}_0\assign\overline{R_0(V_x)}^{\,V_{\leq 0}},$$
and set $\mathcal{A}_0\assign\{\xi(v),\xi(v)^*:v\in\mathcal K_0\}^{\prime\prime}\subset\mathcal{A}_{\leq 0}$. The orthogonal decomposition
$$V_{\leq 0}=\mathcal{K}_0\oplus\mathcal{K}_0^\perp$$
defines a vacuum conditional expectation
$$E_0:\mathcal{A}_{\leq 0}\longrightarrow\mathcal{A}_0.$$
Consequently,
$$\widetilde{\omega}_0\assign E_0\circ\omega_0$$
is a conditional expectation onto the time-zero algebra. Since
$$W_n^{\psi_0}(F_1\otimes\cdots\otimes F_n)\in\mathcal{A}_0,$$
Proposition~\ref{Prop: Wick_vs_time_zero} gives
$$\widetilde{\omega}_0\bigl(W_n^\psi(F_1\otimes\cdots\otimes F_n)\bigr)=\omega_0	\bigl(W_n^\psi(F_1\otimes\cdots\otimes F_n)\bigr)=W_n^{\psi_0}(F_1\otimes\cdots\otimes F_n).$$
Thus, on the Wick polynomials considered here, conditioning on the entire past is equivalent to conditioning on the time-zero algebra. The two conditional expectations do not, however, coincide on arbitrary elements of $\mathcal{A}_{\leq 0}$.
\end{remark}

\subsection{The physical one-particle space at time zero}\label{Sec: Physical one-particle space}
In this subsection we show how the field $\psi_0=(\psi^1_0,\psi^2_0)$, that is the zero-time Fermionic field in Definition \ref{Def: Zero-Time Fields}, can be used to build a version of the time zero physical space of Fermionic fields. In comparison with \cite{Nel73}, as a last step in our analysis of Dirac fields, we need to establish an Euclidean time evolution of the zero-time fields. In the following we shall consider all Wick polynomials as per Remark \ref{Rem: Wick-Polynomials Zero-Time Fields} and we shall use all notations and conventions established in Section \ref{Sec: Time-zero fields}. Given $F=(f_1,f_2)\in\mathcal{S}(\mathbb{R}^2_+;\mathbb{C}^4)$ we define
\begin{equation}\label{Eq: Q1 and Q2}
	\widehat{Q_1(f_1)}(k)\assign \int_0^{+\infty}e^{-A(k)t}\widehat{f}_1(k,t)\,\mathd t\quad\textrm{and}\quad\widehat{Q_2(f_2)}(k)\assign\int_0^{+\infty}e^{-A(k)t}\Gamma_E(-k)\widehat{f}_2(k,t)\,\mathd t.
\end{equation}
Observe that $Q_1(f_1),Q_2(f_2)\in \mathcal{S}(\mathbb R;\mathbb C^2)$ and in the last equation, we use $\Gamma_E(-k)$ for consistency with respect to Equation \eqref{Eq: Time-zero-fields-kernel} where the test function appears as $\widehat{f}_2(-k,t)$. We set 
\begin{equation}\label{Eq: Full Q}
	\mathcal{S}(\mathbb{R}^2_+;\mathbb{C}^4)\ni F\mapsto Q(F)\assign (Q_1(f_1),Q_2(f_2))\qquad \mathcal{E}_0\assign Q\bigl(\mathcal{S}(\mathbb{R}^2_+;\mathbb{C}^4)\bigr)\subset\mathcal{S}(\mathbb{R};\mathbb{C}^4).
\end{equation}

It is important to note that the inclusion defining $\mathcal{E}_0$ is strict as one can infer from Equation \eqref{Eq: Q1 and Q2} since the matrix $\Gamma_E(-k)$ has vanishing determinant for every $k\in\mathbb{R}$. Hence $Q_2$ is not surjective on the whole $\mathcal{S}(\mathbb{R};\mathbb{C}^2)$ and, therefore, one is forced to work with $\mathrm{Ran}(Q)$. In order to better understand the construction we define the map
$$\Psi_0:\mathcal{E}_0\to\mathcal{A},\quad Q(F)\mapsto\Psi_0(Q(F))\assign\psi_0(F).$$

\begin{proposition}\label{Prop: One-particle norm}
For every $F,G\in\mathcal{S}(\mathbb{R}^2_+;\mathbb{C}^4)$,
$$\left\langle\psi_0(F)\Omega,\psi_0(G)\Omega\right\rangle_{\mathcal{F}_-(V)}=\left(Q(F),\frac{1}{2A_x}Q(G)		\right)_{V_x}.$$
Consequently $\ker\psi_0=\ker Q$, and the vacuum-vector map $q\mapsto\Psi_0(q)\Omega$ is an isometry when $\mathcal{E}_0$ is endowed with the inner product
\begin{equation}\label{eq:scalarproduct}
\langle q,r\rangle_0\assign\left(q,\frac{1}{2A_x}r\right)_{V_x},\quad q,r \in \mathcal{E}_0
\end{equation}
\end{proposition}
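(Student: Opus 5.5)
The plan is to compute the left-hand side directly in Fock space, using the explicit Gaussian realization of Definition \ref{Def: Creator Grassmann Random Variable}. Since $\psi_0(F)=\omega_0(\psi(F))$ and $\omega_0(\xi(v))=\xi(\mathbf 1_{(-\infty,0]}v)$, the kernel representation \eqref{Eq: Time-zero-fields-kernel} exhibits $\psi_0(F)$ as $\xi(v_F)$ for an explicit $v_F\in V_{\leq 0}$. Comparing with \eqref{Eq: Q1 and Q2} and passing to spatial Fourier variables, $v_F$ is
\[
\hat v_F(s,k)=\mathbf 1_{(-\infty,0]}(s)\,e^{A(k)s}\,\widehat{Q(F)}(\mp k),
\]
up to the sign and reflection conventions in $k$. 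In other words, $v_F=R_0(\tilde Q(F))$, where $R_0$ is the operator of Remark \ref{Rem: Markovian time-zero expectation} and $\tilde Q(F)$ differs from $Q(F)$ by the parity $k\mapsto -k$ and complex conjugation, i.e.\ by $\chi$. I would fix these conventions first, because all the sign bookkeeping takes place here.

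Next I would use $\xi(v)=a^\dagger(Cv)+a(\chi v)$ together with $a(\cdot)\Omega=0$. This gives $\psi_0(F)\Omega=a^\dagger(Cv_F)\Omega=Cv_F$, a one-particle vector, so
\[
\langle\psi_0(F)\Omega,\psi_0(G)\Omega\rangle=(Cv_F,Cv_G)_V=(v_F,v_G)_V,
\]
because $C=\mathbb I\otimes J$ is unitary. I would then evaluate $(v_F,v_G)_V$ with Plancherel in $x$ and Fubini:
\[
\int_{-\infty}^0 e^{2A(k)s}\,ds=\frac{1}{2A(k)},
\]
so the inner product becomes $\int dk\,\overline{\widehat{Q(F)}(k)}\,\frac{1}{2A(k)}\,\widehat{Q(G)}(k)$, which is $(Q(F),\frac{1}{2A_x}Q(G))_{V_x}$. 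The parity and $\chi$-twists cancel because $A(k)$ is even in $k$ and the same transformation acts on both entries. This is the same $s$-integration as in the proof of Proposition \ref{Prop: Time-zero covariance}. The inner product is sesquilinear with the conjugation on the left, in agreement with the convention that $(\cdot,\cdot)_V$ is antilinear in the left variable.

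For the consequences, I would set $F=G$. Then $\|\psi_0(F)\Omega\|^2=\|(2A_x)^{-1/2}Q(F)\|^2$, and since $A_x$ is strictly positive this vanishes iff $Q(F)=0$. Moreover $\psi_0(F)=\xi(v_F)$, and $\xi(v_F)=0$ iff $v_F=0$: indeed $v_F\neq 0$ gives $\xi(v_F)\Omega=Cv_F\neq 0$. This yields $\ker\psi_0=\ker Q$. Hence $\Psi_0$ is well defined on $\mathcal E_0$, and the identity just proved is exactly the isometry statement for $\langle\cdot,\cdot\rangle_0$. The form $\langle\cdot,\cdot\rangle_0$ is positive definite on $\mathcal E_0$ for the same reason.

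The main obstacle I expect is the first step, namely matching $v_F$ precisely with $Q(F)$. The paper's kernel for $\psi^2_0$ carries $\Gamma_E(k)$ against $\hat f_2(-k,t)$, while $Q_2$ uses $\Gamma_E(-k)$ against $\hat f_2(k,t)$, and there is the additional index transposition $\Gamma_E(k)_{ij}\hat\xi_i$. I would handle this by changing variables $k\to-k$ and checking that the transposition is compatible with how $Q_2$ is written; if necessary I would read $Q_2$ with the transposed matrix, which leaves the norm computation unchanged up to that identification. The Gaussian and Fock-space parts are routine.
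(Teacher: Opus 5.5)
Your argument is the paper's proof. You write $\psi_0(F)=\xi(R_0Q(F))$, use $\xi(v)\Omega=a^\dagger(Cv)\Omega=Cv$ with $C$ unitary, and evaluate $\int_{-\infty}^0 e^{2sA_x}\,\mathd s=(2A_x)^{-1}$. Your explicit remark that $\xi(v)=0$ forces $v=0$ usefully fills in the operator-level half of $\ker\psi_0=\ker Q$, which the paper leaves implicit.

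One correction concerns your hedge about a twist.

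\emph{No twist occurs.} The expression $\int\mathd k\,\hat\xi_i(k,s)\hat g_i(-k)$ is just the bilinear smearing $\xi(g)$. After the $t$-integration, $\Gamma_E(k)_{ij}\hat\xi^2_i(k,s)(\hat f_2)_j(-k,t)$ becomes $e^{A(k)s}\hat\xi^2_i(k,s)(\widehat{Q_2(f_2)})_i(-k)$, so the index contraction already matches $Q_2$ as written. Hence $v_F=R_0Q(F)$ exactly.

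\emph{Your cancellation claim is false for $\chi$.} An antilinear $\chi$-twist is excluded anyway, because both $F\mapsto\psi_0(F)$ and $\xi$ are linear. Had such a twist occurred, it would not cancel as you claim. It would replace the right-hand side by its complex conjugate, since $(\chi u,B\chi w)_{V_x}=\overline{(u,Bw)_{V_x}}$ for real $B$ commuting with $\chi$. Only the parity part $k\mapsto-k$ cancels, because $A(k)$ is even.
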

\begin{proof}
For $q\in\mathcal E_0$, set
\begin{equation}\label{Eq: Aux1}
(Rq)(s)\assign\mathbf{1}_{(-\infty,0]}(s)e^{sA_x}q.
\end{equation}
Equation \eqref{Eq: Time-zero-fields-kernel} gives $\psi_0(F)=\xi(RQ(F))$. Hence, using 		$\xi(v)\Omega=a^\dagger(Cv)\Omega$ and $C^*C=\mathbb I$,
$$\begin{aligned}
\left\langle\psi_0(F)\Omega,\psi_0(G)\Omega\right\rangle&=(RQ(F),RQ(G))_V\\
&=\int_{-\infty}^0\left(e^{sA_x}Q(F),e^{sA_x}Q(G)\right)_{V_x}\mathd s\\
&=\left(Q(F),\frac{1}{2A_x}Q(G)\right)_{V_x}.
\end{aligned}
$$
Since $A_x\geq m>0$, the last quadratic form vanishes precisely when $Q(F)=0$, proving $\ker\psi_0=\ker Q$.
\end{proof}

\noindent Thus since $\ker(Q)=\ker\psi_0$ and $Q$ is continuous, we endow $\mathcal{E}_0$ with the standard quotient topology, upon the identification $\mathcal{E}_0\simeq\mathcal{S}(\mathbb{R}^2_+;\mathbb{C}^4)/\ker\psi_0$. We denote by 
\begin{equation}\label{eq:H01}
\mathcal{H}_0^{(1)}\assign \overline{\mathcal{E}_0}^{\|\cdot\|_0}
\end{equation} 
the completion of the space of functions $\mathcal{E}_0$ with respect to the scalar product $\langle \cdot, \cdot \rangle_0$. On account of Proposition \ref{Prop: One-particle norm}, the map $\Psi_0$ is injective, and if we write 
\begin{equation}\label{eq:J0}
J_0:\mathcal{E}_0 \rightarrow V\subset \mathcal{F}_-(V),\quad J_0(q)=\Psi_0(q)\Omega.
\end{equation} 
this is an isometry. Hence, being a bounded operator, there exists a unique continuous extension of $J_0$ from $\mathcal{E}_0$ to the whole $\mathcal{H}_0^{(1)}$. The role of $\Psi_0$ is that of a reparameterization of the zero-time Fermionic fields which is the natural starting point to construct a Hamiltonian evolution.\\

\begin{remark}
Unlike $J_0$, the map $\Psi_0:\mathcal{E}_0 \rightarrow \mathcal{A}$ is not an isometry, at least when $\mathcal{A}$ is equipped with its standard $C^*$-norm. At the same time, the vacuum state identifies a seminorm on \(\mathcal A\) by
$$\|A\|_{2,\omega}^2\assign\omega(A^*A)=\|A\Omega\|^2,$$
which has null space $\mathcal{N}_\omega=\{A\in\mathcal A:A\Omega=0\}$. Proposition~\ref{Prop: One-particle norm} gives
$\|\Psi_0(q)\|_{2,\omega}^2=\|q\|_0^2$. Thus $J_0$ identifies $\mathcal{E}_0$ isometrically with a subspace of the GNS $L^2$-completion of $\mathcal{A}/\mathcal{N}_\omega$. The vacuum state is not faithful on $\mathcal{A}$, so this is not a norm on the operators themselves. More in general, it is possible to realize the Fermionic expectation with a faithful quasi-free state and the non-commutative $L^p$-framework, see, {\it e.g.}, \cite{DFGG25,AM82}.
\end{remark}

\noindent  Unfortunately, although $\mathcal{E}_0$ is very useful as a starting point to build the physical Hilbert space, it is too large. In order to understand the reason it is convenient to introduce the following operators: Assume $m>0$ and set
\[
h_D(k)\assign i\gamma_E^0\gamma_E^1k+m\gamma_E^0,
\qquad
P_\pm(k)\assign
\frac12\left(\mathbb I_2\pm\frac{h_D(k)}{A(k)}\right).
\]
It holds that $h_D(k)^*=h_D(k)$, $h_D(k)^2=A(k)^2\mathbb I_2$, while $P_\pm(k)$ are complementary orthogonal projections of rank one. These functions define continuous multipliers on	$\mathcal S(\mathbb R;\mathbb C^2)$ and the associated operators, which we denote again with the same symbols, commute with the operator $A_x$. We have that 
	\begin{equation}\label{Eq: Gamma energy projection}
	\gamma_E^0\Gamma_E(-k)=2A(k)P_+(k),
	\qquad
	\Gamma_E(-k)=2A(k)\gamma_E^0P_+(k).
\end{equation} 

\begin{proposition}\label{Prop: Structure auxiliary one-particle space} 
Under the given assumptions, it holds that 
\begin{equation}\label{Eq: Exact range Q}
\operatorname{Ran}Q_1=\mathcal{S}(\mathbb{R};\mathbb{C}^2),\qquad\operatorname{Ran}Q_2= \gamma_E^0P_+\mathcal{S}(\mathbb{R};\mathbb{C}^2),
\end{equation}
and hence
\begin{equation}\label{Eq: Exact auxiliary space}
\mathcal{E}_0=\mathcal{S}(\mathbb{R};\mathbb{C}^2)\oplus\gamma_E^0P_+\mathcal{S}(\mathbb{R};\mathbb{C}^2).
\end{equation}
\end{proposition}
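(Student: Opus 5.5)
The plan is to prove each range identity in \eqref{Eq: Exact range Q} by a double inclusion, working entirely in spatial Fourier space. Both $Q_1$ and $Q_2$ act as the Laplace-type transform $\widehat f(k,\cdot)\mapsto\int_0^{+\infty}e^{-A(k)t}\widehat f(k,t)\,\mathd t$, followed in the case of $Q_2$ by multiplication by $\Gamma_E(-k)$. So the proof reduces to two things: surjectivity of this scalar transform onto $\mathcal S(\mathbb R;\mathbb C^2)$, and an algebraic description of the range of the multiplier $\Gamma_E(-k)$.

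For the first step, given $g\in\mathcal S(\mathbb R;\mathbb C^2)$ I would construct an explicit preimage. Fix $\chi_0\in C^\infty_c((0,1))$ with $\int\chi_0=1$ and set
\[
\widehat f(k,t)\assign e^{A(k)t}\chi_0(t)\,\widehat g(k).
\]
Then $\int_0^{+\infty}e^{-A(k)t}\widehat f(k,t)\,\mathd t=\widehat g(k)$, and the support in $t$ is contained in $(0,1)$. What must be checked is that $\widehat f$, and hence its partial inverse Fourier transform $f$, is Schwartz on $\mathbb R^2$. On the support $t\in(0,1)$ we have $e^{A(k)t}\leq e^{A(k)}$, which grows exponentially in $|k|$ and is not beaten by the polynomial decay of $\widehat g$.

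This growth is the main obstacle, and I would handle it by making the $t$-cutoff depend on $k$. Set
\[
\widehat f(k,t)=A(k)\,\rho\bigl(A(k)t\bigr)e^{A(k)t}\widehat g(k),
\]
with $\rho\in C_c^\infty((0,1))$ and $\int\rho=1$. Then the integral against $e^{-A(k)t}$ is still $\widehat g(k)$, and now $e^{A(k)t}\leq e$ on the support. All $t$- and $k$-derivatives of $\widehat f$ are polynomially bounded in $k$ times $\widehat g$ and its derivatives, since $A$ is smooth with symbol-type bounds for $m>0$. The support in $t$ stays in $[0,1/m]$, so $\widehat f$ is Schwartz in $(k,t)$ and vanishes for $t\le 0$. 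Hence $f\in\mathcal S(\mathbb R^2_+;\mathbb C^2)$ and $\operatorname{Ran}Q_1=\mathcal S(\mathbb R;\mathbb C^2)$; the reverse inclusion is already stated after \eqref{Eq: Q1 and Q2}.

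For $Q_2$, I would use \eqref{Eq: Gamma energy projection} to write
\[
\widehat{Q_2(f_2)}(k)=2A(k)\gamma_E^0P_+(k)\,\widehat{Q_1(f_2)}(k).
\]
Since $P_+(k)$ and $A(k)$ are smooth multipliers preserving $\mathcal S$ (as stated before \eqref{Eq: Gamma energy projection}), the inclusion $\subseteq$ follows. Conversely, given $\gamma_E^0P_+g$ with $g\in\mathcal S$, I would take $f_2$ with $Q_1(f_2)=(2A)^{-1}g$ from the previous step. This uses that $(2A)^{-1}$ is a Schwartz multiplier because $A\ge m>0$, and yields $Q_2(f_2)=\gamma_E^0P_+g$. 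Finally, \eqref{Eq: Exact auxiliary space} follows because $Q(F)=(Q_1(f_1),Q_2(f_2))$ with $f_1,f_2$ chosen independently, so $\mathcal E_0$ is the product of the two ranges.
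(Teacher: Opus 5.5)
Your proposal is correct and follows the paper's proof essentially step for step. The paper also inverts the Laplace-type transform with a $k$-dependent rescaled bump, $(A(k)+1)e^{A(k)t}\rho((A(k)+1)t)\widehat q(k)$; you rescale by $A(k)$ instead, which works equally well since $m>0$. It then gets $\operatorname{Ran}Q_2$ from $\Gamma_E(-k)=2A(k)\gamma_E^0P_+(k)$ and the fact that $A_x^{\pm1}$ preserve $\mathcal S(\mathbb R;\mathbb C^2)$. Your explicit preimage, choosing $f_2$ with $\widehat{Q_1(f_2)}=(2A)^{-1}\widehat g$, just spells out what the paper compresses into one line.
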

 		
\begin{proof}
To determine the ranges, choose $\rho\in C_c^\infty((0,\infty))$ with $\int_0^\infty\rho(s)\,\mathd s=1$. For $q\in\mathcal{S}(\mathbb{R};\mathbb{C}^2)$, set
$$\widehat{f}_q(k,t)\assign(A(k)+1)e^{A(k)t}\rho((A(k)+1)t)\widehat q(k).$$
Since $A(k)t$ is bounded on the support of $\rho((A(k)+1)t)$, we have $f_q\in\mathcal{S}(\mathbb{R}^2_+;\mathbb{C}^2)$, and
$$\widehat{Q_1(f_q)}(k)=\widehat{q}(k)\int_0^\infty	(A(k)+1)\rho((A(k)+1)t)\,\mathd t=\widehat{q}(k).$$
Thus $Q_1$ is surjective. Applying the same argument componentwise to the Laplace transform occurring in $Q_2$, and using
\eqref{Eq: Gamma energy projection}, yields
$$\operatorname{Ran}Q_2=\Gamma_E(-\cdot)\mathcal{S}(\mathbb{R};\mathbb{C}^2)=2A_x\gamma_E^0P_+\mathcal{S}(\mathbb{R};\mathbb{C}^2)=\gamma_E^0P_+\mathcal{S}(\mathbb{R};\mathbb{C}^2),$$
because both $A_x$ and $A_x^{-1}$ preserve the Schwartz space.	Equations \eqref{Eq: Exact range Q} and \eqref{Eq: Exact auxiliary space} follow.
\end{proof}
 		 
\noindent  Starting from Equation \eqref{Eq: Exact auxiliary space}, we can endow $\mathcal{S}(\mathbb{R};\mathbb{C}^2)\oplus\mathcal{S}(\mathbb{R};\mathbb{C}^2)$ with the inner product $\langle\cdot,\cdot\rangle_0$ as per Equation \eqref{eq:scalarproduct}. Since, for $q\in\mathcal{S}(\mathbb{R};\mathbb{C}^4)$,
\begin{equation}\label{Eq: Aux4}
\|q\|^2_0=\int\limits_{\mathbb{R}}\frac{\|\hat{q}(k)\|^2_{\mathbb{C}^4}}{2\sqrt{k^2+m^2}}dk,
\end{equation}
and since $m>0$, this norm is equivalent to that on $H^{-\frac{1}{2}}(\mathbb{R};\mathbb{C}^4)$. Hence, upon completion, we end up with $H^{-1/2}(\mathbb{R}, \mathbb{C}^2) \oplus H^{-1/2}(\mathbb{R}, \mathbb{C}^2)$. Observing that $\mathcal{E}_0=\left(\mathbb{I}\oplus \gamma_E^0 P_+\right)[\mathcal{S}(\mathbb{R};\mathbb{C}^2)\oplus\mathcal{S}(\mathbb{R};\mathbb{C}^2)]$ and that $\mathbb{I}\oplus \gamma_E^0 P_+$ is a bounded operator with closed range, being $P_+$ an orthogonal projector, we can conclude that the completion of $\mathcal{E}_0$ with respect to $\langle\cdot,\cdot\rangle_0$ as per Equation \eqref{eq:H01} is
\begin{equation}\label{Eq: Aux2}
\mathcal{H}_0^{(1)}=H^{-1/2}(\mathbb{R};\mathbb{C}^2) \oplus \gamma_E^0P_+H^{-1/2}(\mathbb{R}, \mathbb{C}^2).
\end{equation}
In view of Equation \eqref{eq:J0}, the map $J_0$ identifies $\mathcal{E}_0$ isometrically with a subspace of the one-particle Hilbert subspace of $\mathcal{F}_-(V)$. Hence, we can refer to $\mathcal{E}_0$ and, more properly to $\mathcal{H}_0^{(1)}$ as auxiliary time-zero one-particle spaces. Since $\gamma_E^0$ is invertible and $P_+$ is a rank $1$ projection, it turns that, on each fiber, $\mathcal{H}_0^{(1)}$ has three independent components, while a Dirac spinor in two dimensions has only two. In order to establish which is the redundant one, we use the following criterion. Let $f\in\mathcal{S}(\mathbb{R};\mathbb{C}^2)$ and denote by $q_f$ the corresponding element $(f,0)\in\mathcal{E}_0$. Considering $(Rq_f)(t)$ as per Equation \eqref{Eq: Aux1} and writing the Dirac operator as  $D_E=\gamma_E^0(\partial_t+h_D(k))$ in view of Equation \eqref{Eq:dirac-op}, it holds that $(D_E\oplus D^*_E)(Rq_f)=0$ if and only if 
$$\left(A(k)+h_D(k)\right)\hat{f}(k)=2A(k)P_+(k)\hat{f}(k)=0.$$
Since $A(k)>0$, hence it is invertible, this entails that $\hat{f}(k)$ must lie in $\textrm{Ran}(P_-)$, the orthogonal complement of $\textrm{Ran}(P_+)$, namely $f\in P_-[\mathcal{S}(\mathbb{R};\mathbb{C}^2)]$. Observe that, should we have started from a generic element of the form $(0,f^\prime)$ with $f^\prime\in\mathcal{S}(\mathbb{R};\mathbb{C}^2)$ requiring that $e^{t A_x}f^\prime$ satisfies the adjoint Dirac equation for $t\leq 0$, we would have drawn the conclusion that $f^\prime\in \gamma_E^0P_+[\mathcal{S}(\mathbb{R};\mathbb{C}^2)]$. This recovers the restriction already imposed and it justifies once more the factor $\gamma_E^0P_+H^{-1/2}(\mathbb{R}, \mathbb{C}^2)$ in Equation \eqref{Eq: Aux2}. Furthermore, this also motivates introducing the projection onto the physical degrees of freedom:
\begin{equation}\label{Eq: Physical projection}
\mathbb{P}_{\mathrm{phys}}:\mathcal E_0\longrightarrow\mathcal E_0,\quad (q_1,q_2)\mapsto\mathbb{P}_{\mathrm{phys}}(q_1,q_2)\assign(P_-q_1,q_2).
\end{equation}
Having identified $\mathbb{P}_{\mathrm{phys}}[\mathcal{E}_0]$ as the admissible time-zero data, we should ascertain its compatibility with the auxiliary one-particle structure. In particular, we show that it is orthogonal with respect to \(\langle\cdot,\cdot\rangle_0\) and commutes with \(A_x\). Consequently, it extends to the Hilbert completion and its range defines a closed \(A_x\)-invariant subspace.
\begin{lemma}\label{Prop: Physical one-particle space} 
Given $\mathbb{P}_{\mathrm{phys}}$ as in Equation \eqref{Eq: Physical projection}, it holds that $[\mathbb P_{\mathrm{phys}}, A_x]=0$ and $\mathbb{P}_{\mathrm{phys}}$ is an orthogonal projection with respect to the inner product $\langle \cdot,\cdot\rangle_0$ as in Equation \eqref{eq:scalarproduct}. Hence it extends uniquely to an orthogonal projection on $\mathcal{H}^{(1)}_0$ and its range and kernel are
\begin{subequations}
\begin{align}\label{Eq: Physical one-particle core}
	&\mathcal{E}_{0,\mathrm{phys}}\assign\mathbb P_{\mathrm{phys}}[\mathcal{E}_0]=P_-\mathcal S(\mathbb R;\mathbb C^2)\oplus\gamma_E^0P_+\mathcal S(\mathbb R;\mathbb C^2)&&\\
	\label{Eq: Physical one-particle core2}
	&\mathcal{H}_{\phys}^{(1)}\assign\mathbb P_{\phys}[\mathcal{H}_0^{(1)}]=P_-H^{-1/2}(\mathbb R;\mathbb C^2)\oplus\gamma_E^0P_+H^{-1/2}(\mathbb R;\mathbb C^2)&&\\
	\label{Eq: Kernel physical projection}
	&\ker\mathbb{P}_{\mathrm{phys}}=P_+\mathcal{S}(\mathbb{R};\mathbb{C}^2)\oplus\{0\}.&&
\end{align}
\end{subequations}
\end{lemma}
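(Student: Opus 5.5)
The plan is to work fiberwise in spatial Fourier space, where every operator involved becomes a multiplication by a $k$-dependent matrix. By Equation \eqref{Eq: Aux4}, the inner product $\langle\cdot,\cdot\rangle_0$ of Equation \eqref{eq:scalarproduct} reads
\[
\langle q,r\rangle_0=\int_{\mathbb R}\frac{\bigl(\hat q(k),\hat r(k)\bigr)_{\mathbb C^4}}{2A(k)}\,\mathd k .
\]
The operator $\mathbb P_{\mathrm{phys}}$ acts as the multiplier $\mathrm{diag}(P_-(k),\mathbb I_2)$ on $\mathbb C^4=\mathbb C^2\oplus\mathbb C^2$. I will take $\mathbb P_{\mathrm{phys}}$ to be defined by this formula on all of $\mathcal S(\mathbb R;\mathbb C^4)$, and then check that it maps $\mathcal E_0$ into itself.

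First, commutation with $A_x$. Since $A_x$ is multiplication by the scalar $A(k)$ on each fiber, it commutes with any matrix-valued multiplier. Hence $[\mathbb P_{\mathrm{phys}},A_x]=0$ on $\mathcal S(\mathbb R;\mathbb C^4)$, and in particular on $\mathcal E_0$.

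Next, orthogonality. The multiplier $P_-(k)$ is a continuous multiplier preserving the Schwartz space, as recalled before Equation \eqref{Eq: Gamma energy projection}. Since $P_-(k)^2=P_-(k)$ and $P_-(k)^*=P_-(k)$ for each $k$ (using $h_D(k)^*=h_D(k)$ and $h_D(k)^2=A(k)^2\mathbb I_2$), the operator $\mathbb P_{\mathrm{phys}}$ is idempotent. Because the weight $1/(2A(k))$ is a positive scalar, pointwise self-adjointness gives $\langle \mathbb P_{\mathrm{phys}}q,r\rangle_0=\langle q,\mathbb P_{\mathrm{phys}}r\rangle_0$. So $\mathbb P_{\mathrm{phys}}$ is a symmetric idempotent, hence an orthogonal projection, and $\|\mathbb P_{\mathrm{phys}}q\|_0\le\|q\|_0$. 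By the BLT theorem it extends uniquely to a bounded operator on the completion $\mathcal H^{(1)}_0$, and idempotence and symmetry pass to the limit. In fact the same multiplier formula defines the extension directly on $H^{-1/2}$, since pointwise-bounded multipliers commuting with $A(k)$ are bounded on $H^{-1/2}$.

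To identify the ranges, I will use Proposition \ref{Prop: Structure auxiliary one-particle space}, which gives $\mathcal E_0=\mathcal S\oplus\gamma_E^0P_+\mathcal S$. Applying $\mathbb P_{\mathrm{phys}}$ gives $P_-\mathcal S\oplus\gamma_E^0P_+\mathcal S$, because the second component is untouched. This confirms that $\mathbb P_{\mathrm{phys}}$ maps $\mathcal E_0$ into itself and yields \eqref{Eq: Physical one-particle core}. For \eqref{Eq: Physical one-particle core2}, apply the extension to Equation \eqref{Eq: Aux2} in the same way, obtaining $P_-H^{-1/2}\oplus\gamma_E^0P_+H^{-1/2}$. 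For the kernel \eqref{Eq: Kernel physical projection}: $\mathbb P_{\mathrm{phys}}(q_1,q_2)=0$ forces $q_2=0$ and $P_-q_1=0$, that is, $q_1=P_+q_1\in P_+\mathcal S$. Conversely, every $(P_+f,0)$ with $f\in\mathcal S$ lies in $\mathcal E_0$ and is annihilated.

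The main point needing care is that the range of the extension is exactly the closed subspace stated in \eqref{Eq: Physical one-particle core2}, not merely dense in it. This holds because the range of an orthogonal projection is closed. One direction is immediate: $P_-H^{-1/2}\oplus\gamma_E^0P_+H^{-1/2}$ is contained in $\mathcal H^{(1)}_0$ and is fixed by $\mathbb P_{\mathrm{phys}}$, so it lies in the range. For the other direction, $\mathbb P_{\mathrm{phys}}$ maps $H^{-1/2}\oplus\gamma_E^0P_+H^{-1/2}$ into that subspace by the multiplier formula. A minor subtlety is that $\gamma_E^0P_+$ is not itself an orthogonal projection. Nevertheless, $\gamma_E^0P_+H^{-1/2}$ is closed, since it equals $\gamma_E^0\,\bigl(P_+H^{-1/2}\bigr)$ with $\gamma_E^0$ unitary.
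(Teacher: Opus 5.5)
Your proposal is correct and follows essentially the same route as the paper. In both, $P_-^2=P_-$, $P_-^*=P_-$ and the commutation of $P_-$ with the scalar multiplier $A(k)$, hence with $(2A_x)^{-1}$, make $\mathbb P_{\mathrm{phys}}$ an orthogonal projection for $\langle\cdot,\cdot\rangle_0$; it then extends by continuity, and the range and kernel are read off from Proposition \ref{Prop: Structure auxiliary one-particle space}. Your extra verification, via Equation \eqref{Eq: Aux2} and the multiplier formula, that the range of the extension is exactly $P_-H^{-1/2}\oplus\gamma_E^0P_+H^{-1/2}$ only spells out what the paper states as immediate.
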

	
\begin{proof}
The identities $P_-^2=P_-$ and $[P_-,A_x]=0$ imply respectively that $\mathbb P_{\mathrm{phys}}^2=\mathbb P_{\mathrm{phys}}$ and $[\mathbb P_{\mathrm{phys}},A_x]=0$. Since $P_-^*=P_-$ and since $(2A_x)^{-1}$ commutes with $P_-$,
$$\langle\mathbb{P}_{\mathrm{phys}}q,r\rangle_0=\langle q,\mathbb{P}_{\mathrm{phys}}r\rangle_0, \quad q,r\in\mathcal E_0.$$
Thus $\mathbb P_{\mathrm{phys}}$ is an orthogonal projector and, being bounded, it admits a unique extension to $\mathcal{H}^{(1)}_0$. Equations \eqref{Eq: Physical one-particle core}, \eqref{Eq: Physical one-particle core2} and \eqref{Eq: Kernel physical projection} follow immediately from Equation \eqref{Eq: Exact auxiliary space}. 
\end{proof}

\noindent We conclude this subsection highlighting a structural feature related to $\mathbb{P}_{\mathrm{phys}}$, namely, that the physically admissible time-zero data can be obtained starting from the original space of test functions $\mathcal{S}(\mathbb{R}^2_+;\mathbb{C}^4)$ as follows. Recalling Equation \eqref{Eq: Full Q}, if we set $Q_{\mathrm{phys}}\assign\mathbb{P}_{\mathrm{phys}}Q=\bigl(P_-Q_1,Q_2\bigr)$, then
\begin{equation}\label{Eq: Physical one-particle null space}
\ker Q_{\mathrm{phys}}=\left\{F=(f_1,f_2): P_-Q_1(f_1)=0,\quad Q_2(f_2)=0\right\}.
\end{equation}
Consequently, being $Q_{\mathrm{phys}}$ surjective onto $\mathcal E_{0,\mathrm{phys}}$, the first isomorphism theorem entails that
\begin{equation}\label{Eq: Physical quotient}
\frac{\mathcal{S}(\mathbb R^2_+;\mathbb C^4)}{\ker Q_{\mathrm{phys}}}\simeq\mathcal E_{0,\mathrm{phys}}.
\end{equation}

\subsection{Fock spaces and Wick vectors}

Having identified the auxiliary and physical one-particle spaces, we investigate the corresponding Fermionic many-particle sectors. Our first goal is to show that the Wick polynomials constructed from the time-zero fields reproduce, when applied to the vacuum, the antisymmetric tensor powers of the one-particle space. This establishes an isometric identification between the Wick sectors and the corresponding Fermionic Fock sectors and it provides the Hilbert space framework used in the construction of the time evolution. To this end, recalling Equation \eqref{Eq: Exact auxiliary space}, for $n\geq1$, set
\begin{equation}\label{Eq: Auxiliary n-particle sector}
\mathcal E_0^{\wedge n}\assign\bigwedge^n\mathcal E_0,\qquad\mathcal H_0^{(n)}\assign\overline{\mathcal E_0^{\wedge n}}=\bigwedge^n\mathcal H_0^{(1)},
\end{equation}
with the scalar product induced by $\langle\cdot,\cdot\rangle_0$ and define 
$$\begin{aligned}
W_{n,0}^{\Psi_0}:\mathcal{E}_0^{\wedge n}&\longrightarrow\mathcal A_{\leq0}\\
q_1\wedge\cdots\wedge q_n&\longmapsto\Wick{\Psi_0(q_1)\cdots\Psi_0(q_n)}
\end{aligned}
\qquad\mathcal W_{n,0}^{\Psi_0}(u)\assign W_{n,0}^{\Psi_0}(u)\Omega.$$
	
\begin{proposition}\label{Prop: Auxiliary Wick sectors}
For $n\geq1$ and $q_1,\ldots,q_n\in\mathcal E_0$,
\begin{equation}\label{Eq: Wick vector exterior product}
\mathcal W_{n,0}^{\Psi_0}(q_1\wedge\cdots\wedge q_n)=J_0q_1\wedge\cdots\wedge J_0q_n.
\end{equation}
Consequently $\mathcal W_{n,0}^{\Psi_0}$ extends to an isometric embedding of $\mathcal{H}_0^{(n)}$ into $\mathcal{F}_-(V)$, and $W_{n,0}^{\Psi_0}$ is injective on $\mathcal{E}_0^{\wedge n}$.
\end{proposition}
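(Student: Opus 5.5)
Since $\Psi_0$ is injective (Proposition \ref{Prop: One-particle norm}), $W_{n,0}^{\Psi_0}$ is well defined on simple tensors once we check it is multilinear and alternating; multilinearity is clear from the recursion \eqref{Eq: Wick Polynomials vs : - zero time}. We argue directly with the Fock-space realization. From the proof of Proposition \ref{Prop: One-particle norm}, $\Psi_0(q)=\xi(Rq)=a^\dagger(CRq)+a(\chi Rq)$ and $J_0q=a^\dagger(CRq)\Omega$. Write $a^\dagger_q:=a^\dagger(CRq)$ and $a_q:=a(\chi Rq)$. The claim \eqref{Eq: Wick vector exterior product} is then equivalent to
$$\Wick{\Psi_0(q_1)\cdots\Psi_0(q_n)}\Omega=a^\dagger_{q_1}\cdots a^\dagger_{q_n}\Omega ,$$
and we prove this by induction on $n$ using the recursion. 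The cases $n=1$ and $n=2$ are immediate: $a_{q_1}\Omega=0$, and for $n=2$ one has $a_{q_1}a^\dagger_{q_2}\Omega=(\chi Rq_1,CRq_2)_V\,\Omega=\omega(\Psi_0(q_1)\Psi_0(q_2))\,\Omega$ by the CAR.

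For the inductive step, apply \eqref{Eq: Wick Polynomials vs : - zero time} to $\Omega$ and use the inductive hypothesis for orders $n-1$ and $n-2$:
$$\Wick{\Psi_0(q_1)\cdots\Psi_0(q_n)}\Omega=\bigl(a^\dagger_{q_1}+a_{q_1}\bigr)a^\dagger_{q_2}\cdots a^\dagger_{q_n}\Omega-\sum_{j=2}^n(-1)^j\,\omega(\Psi_0(q_1)\Psi_0(q_j))\,a^\dagger_{q_2}\cdots\widetilde{a^\dagger_{q_j}}\cdots a^\dagger_{q_n}\Omega .$$
The CAR $\{a(u),a^\dagger(v)\}=(u,v)_V$ together with $a_{q_1}\Omega=0$ give
$$a_{q_1}a^\dagger_{q_2}\cdots a^\dagger_{q_n}\Omega=\sum_{j=2}^n(-1)^{j}\,(\chi Rq_1,CRq_j)_V\,a^\dagger_{q_2}\cdots\widetilde{a^\dagger_{q_j}}\cdots a^\dagger_{q_n}\Omega .$$
The sign here is $(-1)^{j-2}=(-1)^j$, coming from moving $a_{q_1}$ past $j-2$ creators. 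Since $(\chi Rq_1,CRq_j)_V=b(Rq_1,Rq_j)=\omega(\Psi_0(q_1)\Psi_0(q_j))$ by Definition \ref{Def: Grassmannian White Noise}, the annihilation part cancels the contraction sum exactly. What remains is $a^\dagger_{q_1}\cdots a^\dagger_{q_n}\Omega=J_0q_1\wedge\cdots\wedge J_0q_n$. The main point requiring care is the sign bookkeeping: the sign convention $(-1)^j$ in the recursion must match the CAR expansion.

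For the consequences, the inner product on $\bigwedge^n\mathcal F_-(V)$ gives
$$\langle J_0q_1\wedge\cdots\wedge J_0q_n,\;J_0r_1\wedge\cdots\wedge J_0r_n\rangle=\det\bigl(\langle J_0q_i,J_0r_j\rangle\bigr)=\det\bigl(\langle q_i,r_j\rangle_0\bigr),$$
using that $J_0$ is an isometry. This is precisely the inner product on $\mathcal E_0^{\wedge n}$ induced by $\langle\cdot,\cdot\rangle_0$. Two consequences follow. First, $\mathcal W_{n,0}^{\Psi_0}$ is well defined and alternating on $\mathcal E_0^{\wedge n}$: it coincides with $\wedge^nJ_0$, and since $\omega$ is determined by the vectors, the operator identity on simple tensors is consistent. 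Second, it is isometric, so it extends by continuity to an isometric embedding of $\mathcal H_0^{(n)}$. Finally, if $W_{n,0}^{\Psi_0}(u)=0$ then $\mathcal W_{n,0}^{\Psi_0}(u)=0$, whence $\|u\|=0$ by isometry; this gives injectivity.
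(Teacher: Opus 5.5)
Your argument is essentially the paper's. Writing $\Psi_0(q)=\xi(Rq)=a^\dagger(J_0q)+a(\chi Rq)$, the CAR expansion of $a(\chi Rq_1)$ on $J_0q_2\wedge\cdots\wedge J_0q_n$ cancels the contraction sum of the recursive Wick rule term by term. Your check that the sign is $(-1)^{j-2}=(-1)^j$ is exactly the bookkeeping the paper leaves implicit. Isometry and injectivity then follow, as in the paper, from $\det(\langle q_i,r_j\rangle_0)$ and from applying $W_{n,0}^{\Psi_0}(u)$ to $\Omega$.

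One side remark should be dropped or repaired. Operator-level well-definedness of $W_{n,0}^{\Psi_0}$ on $\mathcal E_0^{\wedge n}$ cannot be inferred from the vector identity ``since $\omega$ is determined by the vectors'', because $\Omega$ is not separating for $\mathcal A$. The paper takes this well-definedness as part of the definition. It does hold, because $\{\Psi_0(q),\Psi_0(r)\}=b(Rq,Rr)+b(Rr,Rq)=0$, which makes the recursively defined Wick product alternating.
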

	
\begin{proof}
For $q\in\mathcal E_0$, using Equation \eqref{Eq: Aux1}, $(Rq)(s)\assign\mathbf 1_{(-\infty,0]}(s)e^{sA_x}q$. Proposition~\ref{Prop: One-particle norm} gives
$$\Psi_0(q)=\xi(Rq)=a^\dagger(J_0q)+a(\chi Rq)\quad\textrm{and}\quad\omega\bigl(\Psi_0(q)\Psi_0(r)\bigr)=(\chi Rq,J_0r)_V.$$
In the recursive Wick rule, the contraction terms cancel the action	of $a(\chi Rq_1)$ on $J_0q_2\wedge\cdots\wedge J_0q_n$. More precisely its contraction with the $j$-th factor has coefficient $(-1)^j\omega(\Psi_0(q_1)\Psi_0(q_j))$ and only $a^\dagger(J_0q_1)$ remains. Induction on $n$, starting from $\Psi_0(q)\Omega=J_0q$, yields Equation \eqref{Eq: Wick vector exterior product}. Since $J_0$ is an isometry,
$$\left\langle\mathcal{W}_{n,0}^{\Psi_0}(q_1\wedge\cdots\wedge q_n),\mathcal{W}_{n,0}^{\Psi_0}(r_1\wedge\cdots\wedge r_n)\right\rangle_{\mathcal{F}_-(V)}=\det\bigl(\langle q_i,r_j\rangle_0\bigr)_{i,j=1}^n.$$
This is the scalar product on $\mathcal E_0^{\wedge n}$ and it gives the stated extension. To conclude, observe that, given $u\in\mathcal{E}_0^{\wedge n}$, $W_{n,0}^{\Psi_0}(u)=0$ implies $\mathcal{W}_{n,0}^{\Psi_0}(u)=0$, hence $u=0$. This establishes injectivity.
\end{proof}

We now restrict the preceding construction to the physical one-particle subspace identified in Section \ref{Sec: Physical one-particle space}. Starting from Equation \eqref{Eq: Physical one-particle core}, we set
\begin{equation}\label{Eq: Physical n-particle sector}
\mathcal{E}_{0,\phys}^{\wedge n}\assign\bigwedge^n\mathcal{E}_{0,\phys},\qquad\mathcal{H}_{\phys}^{(n)}\assign\overline{\mathcal{E}_{0,\phys}^{\wedge n}}=\bigwedge^n\mathcal{H}_{\phys}^{(1)},
\end{equation}
and let $J_{0,\phys}\assign J_0|_{\mathcal{H}_{\phys}^{(1)}}$.
	
\begin{corollary}\label{Prop: Physical Wick sectors}
The restriction
$$\mathcal{W}_{n,\phys}^{\Psi_0}\assign\mathcal{W}_{n,0}^{\Psi_0}|_{\mathcal{H}_{\phys}^{(n)}}=J_{0,\phys}^{\wedge n}:\mathcal{H}_{\phys}^{(n)}\longrightarrow\mathcal{F}_-(V),$$
is an isometric embedding. For $q_j\in\mathcal{E}_{0,\phys}$ it satisfies
$$\mathcal{W}_{n,\phys}^{\Psi_0}(q_1\wedge\cdots\wedge q_n)=W_{n,0}^{\Psi_0}(q_1\wedge \ldots \wedge q_n)\Omega.$$
In particular, $W_{n,0}^{\Psi_0}$ is injective on $\mathcal{E}_{0,\phys}^{\wedge n}$.
\end{corollary}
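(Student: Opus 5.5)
The corollary follows by restricting Proposition~\ref{Prop: Auxiliary Wick sectors} to the closed subspace $\mathcal{H}_{\phys}^{(n)}\subset\mathcal{H}_0^{(n)}$.

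First, by Lemma~\ref{Prop: Physical one-particle space}, $\mathbb{P}_{\phys}$ extends to an orthogonal projection on $\mathcal{H}_0^{(1)}$ with range $\mathcal{H}^{(1)}_{\phys}$. Hence $\mathcal{H}_{\phys}^{(1)}$ is a closed subspace, and $\mathcal{E}_{0,\phys}=\mathbb{P}_{\phys}[\mathcal{E}_0]$ is dense in it. To check density, approximate $h\in\mathcal{H}_{\phys}^{(1)}$ by $q_k\in\mathcal{E}_0$; then $\mathbb{P}_{\phys}q_k\to\mathbb{P}_{\phys}h=h$ by boundedness. It follows that $\bigwedge^n\mathcal{H}_{\phys}^{(1)}$ is the closure of $\mathcal{E}_{0,\phys}^{\wedge n}$ inside $\mathcal{H}_0^{(n)}=\bigwedge^n\mathcal{H}_0^{(1)}$. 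The inner product induced on it is the determinant of the $\langle\cdot,\cdot\rangle_0$ Gram matrix, which is the restriction of the one on $\mathcal{H}_0^{(n)}$. So $\mathcal{H}_{\phys}^{(n)}$ is a closed subspace of $\mathcal{H}_0^{(n)}$, and the restriction of $\mathcal{W}_{n,0}^{\Psi_0}$ to it is well defined.

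Second, a restriction of an isometric embedding to a closed subspace is again an isometric embedding, so Proposition~\ref{Prop: Auxiliary Wick sectors} gives isometry of $\mathcal{W}_{n,\phys}^{\Psi_0}$ at once. The identification $\mathcal{W}_{n,0}^{\Psi_0}=J_0^{\wedge n}$ on $\mathcal{H}_0^{(n)}$ comes from Equation~\eqref{Eq: Wick vector exterior product} on elementary wedges of $\mathcal{E}_0$. It extends by linearity and continuity, since both sides are bounded and agree on a dense set. Restricting gives $\mathcal{W}_{n,\phys}^{\Psi_0}=J_{0,\phys}^{\wedge n}$. For $q_j\in\mathcal{E}_{0,\phys}\subset\mathcal{E}_0$, the formula $\mathcal{W}_{n,\phys}^{\Psi_0}(q_1\wedge\cdots\wedge q_n)=W_{n,0}^{\Psi_0}(q_1\wedge\cdots\wedge q_n)\Omega$ is just the definition of $\mathcal{W}_{n,0}^{\Psi_0}$ evaluated on such elements.

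Finally, injectivity on $\mathcal{E}_{0,\phys}^{\wedge n}$ holds because $\mathcal{E}_{0,\phys}^{\wedge n}\subset\mathcal{E}_0^{\wedge n}$, where Proposition~\ref{Prop: Auxiliary Wick sectors} already gives injectivity. Alternatively, $W_{n,0}^{\Psi_0}(u)=0$ forces $\mathcal{W}_{n,\phys}^{\Psi_0}(u)=0$, and hence $u=0$ by isometry.

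The only point needing care is whether the abstract completion $\overline{\mathcal{E}_{0,\phys}^{\wedge n}}$ coincides with its closure inside $\mathcal{H}_0^{(n)}$. This holds because both carry the same inner product and $\mathcal{H}_0^{(n)}$ is complete. It uses the fact, stated in Lemma~\ref{Prop: Physical one-particle space}, that $\mathbb{P}_{\phys}$ is bounded and orthogonal, so that $\mathcal{H}_{\phys}^{(1)}$ is genuinely closed.
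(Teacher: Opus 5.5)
Your proposal is correct and follows the same route as the paper, whose entire proof is the remark that Proposition~\ref{Prop: Auxiliary Wick sectors} can be applied to $\bigwedge^n\mathcal{H}_{\phys}^{(1)}$. You also make explicit two points the paper leaves implicit: that $\mathcal{E}_{0,\phys}$ is dense in the closed subspace $\mathcal{H}_{\phys}^{(1)}$, which follows from the boundedness of $\mathbb{P}_{\phys}$, and that the abstract completion of $\mathcal{E}_{0,\phys}^{\wedge n}$ coincides with its closure inside $\mathcal{H}_0^{(n)}$.
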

	
\begin{proof}
It suffices to apply Proposition~\ref{Prop: Auxiliary Wick sectors} to	$\bigwedge^n\mathcal{H}_{\phys}^{(1)}$.
\end{proof}
	
\noindent To include the vacuum sectors, we set
$$\mathcal{H}_0^{(0)}=\mathcal{H}_{\phys}^{(0)}\assign\mathbb{C},\qquad\mathcal{W}_{0,0}^{\Psi_0}(z)=\mathcal{W}_{0,\phys}^{\Psi_0}(z)\assign z\Omega,$$
so that the auxiliary and physical Fock spaces are
\begin{equation}\label{Eq: Auxiliary and physical Fock spaces}
\mathcal{F}_0\assign\bigoplus_{n=0}^{\infty}\mathcal{H}_0^{(n)}=\mathcal{F}_-\bigl(\mathcal{H}_0^{(1)}\bigr),\qquad\mathcal{F}_{\phys}\assign\bigoplus_{n=0}^{\infty}\mathcal{H}_{\phys}^{(n)}=\mathcal{F}_-\bigl(\mathcal{H}_{\phys}^{(1)}\bigr).
\end{equation}
The preceding propositions also establish the existence of the isometric embeddings
$$\Gamma_-(J_0)=\bigoplus_{n=0}^{\infty}\mathcal{W}_{n,0}^{\Psi_0}:\mathcal{F}_0\longrightarrow\mathcal{F}_-(V),\qquad\Gamma_-(J_{0,\phys})=\bigoplus_{n=0}^{\infty}\mathcal{W}_{n,\phys}^{\Psi_0}:\mathcal{F}_{\phys}\longrightarrow\mathcal{F}_-(V).$$

\subsection{Time Evolution and Hamiltonian}\label{Sec: Time Evolution and Hamiltonian}

We now turn our attention to the time evolution of the theory. First of all we show that Euclidean time translations induce a strongly continuous contraction semigroup on the auxiliary and physical one-particle spaces, identifying its positive generator. Subsequently we use the Markovian structure, through the conditional expectation onto the past algebra, to lift this evolution to Wick polynomials and to the corresponding Fock spaces. The resulting Fock Hamiltonian will be shown to coincide with the second quantization of the one-particle counterpart.

\subsubsection{The one-particle semigroup}

For \(\tau\geq0\), define the Euclidean time translation
$$\Theta_\tau:\mathcal{S}(\mathbb{R}^2_+;\mathbb{C}^4)\longrightarrow\mathcal{S}(\mathbb{R}^2_+;\mathbb{C}^4),\qquad(\Theta_\tau F)(t,x)\assign F(t-\tau,x).$$
By direct inspection of Equation~\eqref{Eq: Q1 and Q2}, for every $F=(f_1,f_2)\in\mathcal{S}(\mathbb{R}^2_+;\mathbb{C}^4)$,
$$\widehat{Q_1(\Theta_\tau f_1)}(k)=e^{-\tau A(k)}\widehat{Q_1(f_1)}(k),\qquad\widehat{Q_2(\Theta_\tau f_2)}(k)=e^{-\tau A(k)}\widehat{Q_2(f_2)}(k).$$
Consequently,
$$Q(\Theta_\tau F)=e^{-\tau A_x}Q(F),$$
and hence the following map is well defined:
\begin{equation}\label{Eq: T_tau}
T_\tau:\mathcal{E}_0\longrightarrow\mathcal{E}_0,\qquad	T_\tau Q(F)\assign Q(\Theta_\tau F)=e^{-\tau A_x}Q(F).
\end{equation}

\begin{proposition}\label{Prop: 1-Particle Hamiltonian}
The family $(T_\tau)_{\tau\geq 0}$ as in Equation \eqref{Eq: T_tau} extends uniquely to a strongly continuous self-adjoint contraction semigroup on $\mathcal{H}_0^{(1)}$, see Equation \eqref{eq:H01}. Its positive Hamiltonian is 
$$H_{1,0}\assign A_x,\qquad D(H_{1,0})=\{q\in\mathcal{H}_0^{(1)}:A_xq\in\mathcal{H}_0^{(1)}\},$$
so that $T_\tau=e^{-\tau H_{1,0}}$ and its infinitesimal generator is $-H_{1,0}$. In particular $\mathcal{E}_0$ is a core for $H_{1,0}$ and for every $q\in\mathcal{E}_0$,
$$\left.\frac{\mathd}{\mathd\tau}T_\tau q\right|_{\tau=0}=-A_xq.$$
\end{proposition}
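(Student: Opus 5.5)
The proof is Fourier-side spectral calculus. We work in the spatial Fourier representation, where by Equation \eqref{Eq: Aux4} the inner product $\langle\cdot,\cdot\rangle_0$ is the $L^2$ product with weight $(2A(k))^{-1}$. By Equation \eqref{Eq: Aux2}, $\mathcal{H}_0^{(1)}$ is a closed subspace of $H^{-1/2}(\mathbb{R};\mathbb{C}^4)$. This subspace is cut out by the fibrewise projection $\mathbb{I}\oplus\gamma_E^0P_+(\gamma_E^0)^{-1}$, that is, by the condition that the second component lie in $\gamma_E^0P_+\mathbb{C}^2$ fibrewise. Since $\gamma_E^0P_+(k)\gamma_E^0$ is an orthogonal rank-one projection commuting with multiplication by $e^{-\tau A(k)}$, the subspace is invariant under every bounded multiplier $g(A(k))$.

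First, on $\mathcal{E}_0$ the identity $T_\tau=e^{-\tau A_x}$ from Equation \eqref{Eq: T_tau} gives
$$\|T_\tau q\|_0^2=\int_{\mathbb{R}}\frac{e^{-2\tau A(k)}\|\hat q(k)\|^2}{2A(k)}\,\mathd k\leq\|q\|_0^2 .$$
So $T_\tau$ is a contraction. By density of $\mathcal{E}_0$ it extends uniquely to $\mathcal{H}_0^{(1)}$, and the extension equals the restriction of the multiplication operator $e^{-\tau A(k)}$ on $H^{-1/2}$, which leaves $\mathcal{H}_0^{(1)}$ invariant as noted above. The semigroup law $T_\tau T_\sigma=T_{\tau+\sigma}$ and $T_0=\mathbb{I}$ hold pointwise in $k$. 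Self-adjointness with respect to $\langle\cdot,\cdot\rangle_0$ follows because the multiplier is real and commutes with the weight $(2A)^{-1}$. Strong continuity follows from dominated convergence in the weighted $L^2$ norm, using $|e^{-\tau A}-1|^2\leq 1$.

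Second, I identify the generator. The operator $A_x$, realised as multiplication by $A(k)$ on the weighted space $L^2(\mathbb{R},(2A)^{-1}\mathd k;\mathbb{C}^4)$ restricted to the invariant subspace $\mathcal{H}_0^{(1)}$, is self-adjoint and positive ($\geq m$) on its maximal domain $\{q:A_xq\in\mathcal{H}_0^{(1)}\}$. This is the standard result for real multiplication operators, compressed to a reducing subspace. The functional calculus then gives $e^{-\tau H_{1,0}}=e^{-\tau A(k)}=T_\tau$, so by the Hille--Yosida / Stone-type uniqueness the generator is $-H_{1,0}$. For $q\in\mathcal{E}_0$, $\hat q$ is Schwartz, so $A\hat q$ is still Schwartz and $A_xq\in\mathcal{E}_0$ by Proposition \ref{Prop: Structure auxiliary one-particle space}, since $P_+$ commutes with $A$. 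The difference quotient $\tau^{-1}(e^{-\tau A}-1)\hat q$ converges to $-A\hat q$ in the weighted norm by dominated convergence, with dominating function $A|\hat q|$.

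The main obstacle is the \textbf{core property}. I will use the invariance criterion (Nelson's / Reed--Simon X.49): a dense subspace of the domain of a semigroup generator that is invariant under the semigroup is a core. Density of $\mathcal{E}_0$ in $\mathcal{H}_0^{(1)}$ holds by definition. Invariance $T_\tau\mathcal{E}_0\subset\mathcal{E}_0$ holds since $T_\tau Q(F)=Q(\Theta_\tau F)$ and $\Theta_\tau F\in\mathcal{S}(\mathbb{R}^2_+;\mathbb{C}^4)$. We have already shown $\mathcal{E}_0\subset D(H_{1,0})$. These three facts give the claim.
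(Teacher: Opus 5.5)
Your proof is correct. For the contraction bound, the extension by density, self-adjointness and strong continuity you argue essentially as the paper does: Fourier multiplier $e^{-\tau A(k)}$ with weight $(2A(k))^{-1}$, plus dominated convergence. The only small difference there is that you obtain strong continuity directly on all of $\mathcal{H}_0^{(1)}$, instead of extending it from $\mathcal{E}_0$ via the contraction bound.

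The genuine differences are in the last two steps.

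\textbf{Generator.} The paper checks both domain inclusions by hand. When $A_xq\in\mathcal{H}_0^{(1)}$, dominated convergence handles the difference quotient. Conversely, a pointwise convergent subsequence identifies the limit of the difference quotient with $-A_xq$. You instead realise $A_x$ as a real multiplication operator on the reducing subspace $\mathcal{H}_0^{(1)}$, read off $T_\tau=e^{-\tau H_{1,0}}$ from functional calculus, and let uniqueness of the generator finish the job.

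\textbf{Core.} The paper uses $m>0$ to show that the graph norm of $A_x$ is equivalent to the $H^{1/2}$-norm. It thereby identifies $D(H_{1,0})$ with $H^{1/2}(\mathbb{R};\mathbb{C}^2)\oplus\gamma_E^0P_+H^{1/2}(\mathbb{R};\mathbb{C}^2)$ and appeals to density of Schwartz functions there. You use the invariant-domain criterion (Reed--Simon, Theorem X.49). The invariance $T_\tau\mathcal{E}_0\subset\mathcal{E}_0$ comes for free from $T_\tau Q(F)=Q(\Theta_\tau F)$, since $\Theta_\tau$ preserves positive-time test functions.

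\textbf{Trade-off.} Your route is shorter and more structural. It needs only the time-translation covariance built into the construction and no Sobolev description of the domain, although you still use the explicit form of $\mathcal{H}_0^{(1)}$ to see that it reduces multiplication by functions of $A(k)$. In exchange, the paper's route gives an explicit characterisation of $D(H_{1,0})$ as an $H^{1/2}$-type space.
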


\begin{proof}
On $\mathcal{E}_0$, the semigroup property follows immediately from $T_\tau$ being of the form $e^{-\tau A_x}$. Moreover,
\begin{equation}\label{Eq: Aux3}
\|T_\tau q\|_0^2=\int_{\mathbb R}\frac{e^{-2\tau A(k)}}{2A(k)}\|\hat q(k)\|_{\mathbb C^4}^2\mathd k\leq\|q\|_0^2.
\end{equation}
Since $e^{-\tau A(k)}\to 1$ pointwise as $\tau\to 0$, dominated convergence gives
$$\lim_{\tau\downarrow 0}\|T_\tau q-q\|_0^2=\lim_{\tau\downarrow 0}\int_{\mathbb R}\frac{|e^{-\tau A(k)}-1|^2}{2A(k)}\|\hat{q}(k)\|_{\mathbb C^4}^2\mathd k=0.$$
On account of Equation \eqref{Eq: Aux3}, $T_\tau$ extends uniquely by continuity to a contraction on $\mathcal{H}_0^{(1)}$. This property combined with strong continuity on $\mathcal{E}_0$, which is dense in $\mathcal{H}_0^{(1)}$, implies strong continuity of the extended semigroup. Furthermore, it is self-adjoint because its multiplier is real valued. If $q\in D(H_{1,0})$, dominated convergence gives
$$\lim_{\tau\downarrow 0}\left\|\frac{T_\tau q-q}{\tau}+A_xq\right\|_0=0.$$
Conversely, if the difference quotient converges in	$\mathcal{H}_0^{(1)}$, a pointwise convergent subsequence identifies its limit with $-A_xq$, and hence $A_xq\in\mathcal{H}_0^{(1)}$. 
To conclude, since the mass $m>0$, Equation \eqref{Eq: Aux4} entails that the graph norm of $H_{1,0}=A_x$ is equivalent to the $H^{1/2}$-norm. Since $\mathcal{S}(\mathbb{R};\mathbb{C}^2)$ is dense in $H^{1/2}(\mathbb{R};\mathbb{C}^2)$ and $P_+$ preserves both spaces, $\mathcal{E}_0$ is dense in $D(H_{1,0})$ with respect to the graph norm. Hence $\mathcal{E}_0$ is a core for $H_{1,0}$.
\end{proof}

We now restrict the preceding one-particle dynamics to the physical subspace identified in Section \ref{Sec: Physical one-particle space}. The commutation of $\mathbb{P}_{\mathrm{phys}}$ with $A_x$ ensures that this subspace is preserved under the Euclidean time-evolution semigroup.

\begin{corollary}\label{Cor: Physical 1-Particle Hamiltonian}
The operators $T_\tau$ preserve both $\mathcal E_{0,\phys}$ and	$\mathcal{H}_{\phys}^{(1)}$. The restrictions
$$T_\tau^{\phys}:\mathcal{E}_{0,\phys}\longrightarrow\mathcal{E}_{0,\phys}, \quad q\mapsto T_\tau^{\phys}q\assign e^{-\tau A_x}q,$$
extend uniquely to a strongly continuous self-adjoint contraction semigroup on $\mathcal{H}_{\phys}^{(1)}$. Its positive Hamiltonian is
$$H_{1,\phys}\assign H_{1,0}\big|_{\mathcal{H}_{\phys}^{(1)}},\qquad D(H_{1,\phys})=D(H_{1,0})\cap\mathcal{H}_{\phys}^{(1)}.$$
Equivalently, on the quotient $\mathcal{H}_0^{(1)}/\ker\mathbb{P}_{\phys}$, $T_\tau^{\phys}[q]=[T_\tau q]$.
\end{corollary}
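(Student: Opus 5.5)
The plan is to combine Lemma \ref{Prop: Physical one-particle space} with Proposition \ref{Prop: 1-Particle Hamiltonian}. Everything rests on the commutation relation $[\mathbb{P}_{\phys},A_x]=0$, which passes to functions of $A_x$.

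\textbf{Invariance of the core.} Since $T_\tau$ acts on $\mathcal{E}_0$ as the Fourier multiplier $e^{-\tau A(k)}$, it commutes with the multiplier $P_-(k)$, and trivially with $\gamma_E^0P_+(k)$ acting on the second component. Hence $T_\tau\mathbb{P}_{\phys}=\mathbb{P}_{\phys}T_\tau$ on $\mathcal{E}_0$. Using Equation \eqref{Eq: Physical one-particle core}, for $q=(P_-f,\gamma_E^0P_+g)$ we get
$$T_\tau q=(P_-e^{-\tau A_x}f,\ \gamma_E^0P_+e^{-\tau A_x}g).$$
Because $e^{-\tau A_x}$ preserves $\mathcal{S}(\mathbb{R};\mathbb{C}^2)$, this shows $T_\tau\mathcal{E}_{0,\phys}\subset\mathcal{E}_{0,\phys}$.

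\textbf{Invariance of the completion.} By continuity of the extended $T_\tau$ and of the extended $\mathbb{P}_{\phys}$ on $\mathcal{H}_0^{(1)}$, the identity $T_\tau\mathbb{P}_{\phys}=\mathbb{P}_{\phys}T_\tau$ extends from the dense subspace $\mathcal{E}_0$ to all of $\mathcal{H}_0^{(1)}$. It follows that $T_\tau$ maps $\mathcal{H}_{\phys}^{(1)}=\mathrm{Ran}\,\mathbb{P}_{\phys}$ into itself.

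\textbf{Restricted semigroup.} A closed invariant subspace inherits the semigroup law, the contraction bound \eqref{Eq: Aux3} and strong continuity directly. Self-adjointness of the restriction follows because $\mathbb{P}_{\phys}$ is an orthogonal projection commuting with the self-adjoint $T_\tau$. Equivalently, $\mathcal{H}_{\phys}^{(1)}$ is a reducing subspace. The extension is unique because $\mathcal{E}_{0,\phys}$ is dense in $\mathcal{H}_{\phys}^{(1)}$; this holds since $\mathbb{P}_{\phys}$ is continuous and maps the dense set $\mathcal{E}_0$ onto $\mathcal{E}_{0,\phys}$.

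\textbf{Generator.} For a reducing subspace, the generator of the restricted semigroup is the part of $H_{1,0}$ in $\mathcal{H}_{\phys}^{(1)}$. To see this, take $q\in\mathcal{H}_{\phys}^{(1)}$. The difference quotient $\tau^{-1}(T_\tau q-q)$ lies in $\mathcal{H}_{\phys}^{(1)}$, and it converges precisely when $q\in D(H_{1,0})$, with limit $-A_xq$, by Proposition \ref{Prop: 1-Particle Hamiltonian}. Hence
$$D(H_{1,\phys})=D(H_{1,0})\cap\mathcal{H}_{\phys}^{(1)},\qquad H_{1,\phys}=H_{1,0}\big|_{\mathcal{H}_{\phys}^{(1)}}.$$
Positivity of $H_{1,\phys}$ is inherited from $H_{1,0}$.

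\textbf{Quotient formulation.} The map $\mathbb{P}_{\phys}$ induces a unitary identification $\mathcal{H}_0^{(1)}/\ker\mathbb{P}_{\phys}\cong\mathcal{H}_{\phys}^{(1)}$, given by $[q]\mapsto\mathbb{P}_{\phys}q$. Since $T_\tau$ commutes with $\mathbb{P}_{\phys}$, it preserves $\ker\mathbb{P}_{\phys}$, so $[T_\tau q]$ is well defined. Under the identification, $[T_\tau q]$ corresponds to $\mathbb{P}_{\phys}T_\tau q=T_\tau\mathbb{P}_{\phys}q$, which is $T^{\phys}_\tau[q]$.

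The only step needing care is extending the commutation relation from the core $\mathcal{E}_0$ to the completion; continuity of both operators on $\mathcal{H}_0^{(1)}$ handles it.
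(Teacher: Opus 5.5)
Your proposal is correct and takes the same route as the paper: $T_\tau$ commutes with $\mathbb{P}_{\phys}$, so $\mathcal{H}_{\phys}^{(1)}$ is a reducing subspace, and every claim then follows by restriction from Proposition~\ref{Prop: 1-Particle Hamiltonian}. The only difference is that you check $T_\tau\mathbb{P}_{\phys}=\mathbb{P}_{\phys}T_\tau$ directly as commuting Fourier multipliers on $\mathcal{E}_0$ and extend it by continuity, whereas the paper cites functional calculus from $[\mathbb{P}_{\phys},A_x]=0$; your version is slightly more careful, since $A_x$ is unbounded.
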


\begin{proof}
Since $[\mathbb{P}_{\phys},A_x]=0$, by functional calculus it descends that $\mathbb{P}_{\phys}T_\tau=T_\tau\mathbb{P}_{\phys}$. Hence $\mathcal{E}_{0,\phys}$ and $\mathcal{H}_{\phys}^{(1)}$ are invariant under the action of $T_\tau$, while $\mathcal{H}_{\phys}^{(1)}$ is a reducing subspace. All the claims therefore follow by restriction from Proposition \ref{Prop: 1-Particle Hamiltonian}.
\end{proof}

\subsubsection{Markovian evolution and the Fock Hamiltonian}\label{Sec: Markovian reconstruction}
	
We now obtain the Fock evolution directly from the Euclidean time translation and conditional expectation. Set
$\mathcal{E}_0^{\wedge 0}=\mathcal{E}_{0,\phys}^{\wedge 0}\assign\mathbb{C}$ and $W_{0,0}^{\Psi_0}(z)\assign z\mathbb{I}_{\mathcal{A}}$. Define
\begin{align}
\mathcal{A}_0^{\alg}&\assign\bigoplus_{n=0}^{\infty}W_{n,0}^{\Psi_0}(\mathcal{E}_0^{\wedge n})\subset\mathcal{A}_{\leq 0},
\label{Eq: Algebraic time-zero Wick space}\\
\mathcal{A}_{\phys}^{\alg}&\assign\bigoplus_{n=0}^{\infty}W_{n,0}^{\Psi_0}(\mathcal{E}_{0,\phys}^{\wedge n})	\subset\mathcal{A}_{\leq 0},\label{Eq: Algebraic time-zero Wick space2}
\end{align}
where, in both expressions, we consider only the algebraic sum and the algebraic antisymmetric tensor product. They are the unital algebras generated by $\Psi_0(\mathcal{E}_0)$ and $\Psi_0(\mathcal{E}_{0,\phys})$, respectively. Indeed, the recursive Wick rule allows one to express Wick polynomials in terms of ordinary, non-Wick ordered monomials and, conversely, ordinary, non-Wick ordered monomials in terms of Wick polynomials.\\
	
\noindent We use the previous algebras to represent the Fermionic fields and their product. In order to obtain this representation we introduce the map 
$$\mathfrak{W}_0:\bigoplus_{n=0}^{+\infty}\mathcal{E}_0^{\wedge n}\longrightarrow\mathcal{A}_{0}^{\mathrm{alg}},\qquad (u_n)_n\longmapsto\sum_n W_{n,0}^{\Psi_0}(u_n),$$
where $(u_n)_n\in \bigoplus_{n=0}^{+\infty}\mathcal{E}_0^{\wedge n}$. 
\begin{lemma}
The map $\mathfrak W_0$ is bijective from $\bigoplus_{n=0}^{+\infty}\mathcal{E}_0^{\wedge n}$ into $\mathcal{A}_{0}^{\mathrm{alg}}$, and its restriction to $\bigoplus_{n=0}^{+\infty}\mathcal{E}_{0,\phys}^{\wedge n}$ is bijective onto $\mathcal{A}_{\phys}^{\mathrm{alg}}$.
\end{lemma}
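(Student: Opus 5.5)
Surjectivity onto $\mathcal{A}_0^{\alg}$ is immediate from the definition in Equation \eqref{Eq: Algebraic time-zero Wick space}: $\mathcal{A}_0^{\alg}$ consists of finite sums $\sum_n W_{n,0}^{\Psi_0}(u_n)$ with $u_n\in\mathcal{E}_0^{\wedge n}$, and this is precisely the image of $\mathfrak W_0$. Likewise, the image of $\bigoplus_n\mathcal{E}_{0,\phys}^{\wedge n}$ is $\mathcal{A}_{\phys}^{\alg}$ by Equation \eqref{Eq: Algebraic time-zero Wick space2}. The real content of the lemma is therefore injectivity. Note that injectivity of each $W_{n,0}^{\Psi_0}$ separately (Proposition \ref{Prop: Auxiliary Wick sectors}) is not sufficient: one must also exclude cancellations between different Wick degrees, i.e.\ show that the algebraic sum is direct.

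To do this, I will apply the elements to the vacuum. Suppose $\mathfrak W_0((u_n)_n)=\sum_{n=0}^N W_{n,0}^{\Psi_0}(u_n)=0$. Acting on $\Omega$ and using Proposition \ref{Prop: Auxiliary Wick sectors} together with the convention $W_{0,0}^{\Psi_0}(z)\Omega=z\Omega$, we get
$$0=\sum_{n=0}^N \mathcal{W}_{n,0}^{\Psi_0}(u_n)=\sum_{n=0}^N J_0^{\wedge n}u_n .$$
Since $J_0$ takes values in the one-particle space $V\subset\mathcal F_-(V)$, the vector $J_0^{\wedge n}u_n$ belongs to the $n$-particle sector $\bigwedge^n V$ of $\mathcal{F}_-(V)$. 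Distinct particle-number sectors are mutually orthogonal, so every summand vanishes separately: $J_0^{\wedge n}u_n=0$ for all $n$. Because $J_0^{\wedge n}=\mathcal{W}_{n,0}^{\Psi_0}$ is isometric on $\mathcal{E}_0^{\wedge n}$ (the scalar product is $\det\langle q_i,r_j\rangle_0$), it follows that $u_n=0$ for every $n$.

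For the physical statement, $\bigoplus_n\mathcal{E}_{0,\phys}^{\wedge n}$ is a subspace of $\bigoplus_n\mathcal{E}_0^{\wedge n}$, since $\mathcal{E}_{0,\phys}\subset\mathcal{E}_0$ by Lemma \ref{Prop: Physical one-particle space}. Injectivity is therefore inherited by restriction, and surjectivity onto $\mathcal{A}_{\phys}^{\alg}$ again holds by definition; Corollary \ref{Prop: Physical Wick sectors} can be quoted for consistency. The one point that needs care is the claim that the grading of Fock space separates the degrees. This holds because $\mathcal{W}_{n,0}^{\Psi_0}(u_n)$ is an honest element of $\bigwedge^n V$: it lies in the closed $n$-particle subspace, and the isometric extension of Proposition \ref{Prop: Auxiliary Wick sectors} preserves this property.
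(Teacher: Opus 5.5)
Your proposal is correct and follows essentially the same route as the paper. Surjectivity holds by construction; for injectivity you apply $\sum_n W_{n,0}^{\Psi_0}(u_n)=0$ to $\Omega$, use that the vectors $\mathcal{W}_{n,0}^{\Psi_0}(u_n)$ lie in mutually orthogonal particle-number sectors, and invoke the isometry, hence injectivity, of each $\mathcal{W}_{n,0}^{\Psi_0}$ from Proposition \ref{Prop: Auxiliary Wick sectors}, with the physical case following by restriction to $\mathcal{E}_{0,\phys}\subset\mathcal{E}_0$.
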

\begin{proof}
Observe that $\mathfrak{W}_0$ is by construction surjective in both instances and, therefore, we need only to prove injectivity. Suppose that $\sum_{n=0}^N W_{n,0}^{\Psi_0}(u_n)=0$. Applying both sides to
$\Omega$ yields $\sum_{n=0}^N\mathcal{W}_{n,0}^{\Psi_0}(u_n)=0$. These vectors lie in mutually orthogonal particle sectors; their injectivity by Proposition \ref{Prop: Auxiliary Wick sectors} implies $u_n=0$ for every $n$. This proves the injectivity of $\mathfrak{W}_0$. 
\end{proof}
	
\noindent We introduce an evolution structure on $\mathcal{A}^{\alg}_0$ and $\mathcal{A}^{\alg}_{\phys}$ induced by the combination of the time translation $\Theta_{\tau}$ and the conditional expectation $\omega_0$. For $\tau\geq 0$, we define $K_\tau$ on elements of the form $W_{n,0}^{\Psi_0}(Q(F_1)\wedge\cdots\wedge Q(F_n))$ by
\begin{equation}\label{Eq: Markovian Wick evolution}
\begin{aligned}
&\mathcal{K}_\tau\left(W_{n,0}^{\Psi_0}(Q(F_1)\wedge\ldots\wedge Q(F_n))\right)\assign\omega_0\!\left(W_n^\psi(\Theta_\tau F_1\otimes\cdots\otimes\Theta_\tau F_n)\right),
\end{aligned}
\end{equation}
and we set $\mathcal{K}_\tau\mathbb{I}_{\mathcal{A}}=\mathbb{I}_{\mathcal{A}}$.
	
\begin{proposition}\label{Prop: Markovian Wick evolution}
The maps $\mathcal K_\tau$ as per Equation \eqref{Eq: Markovian Wick evolution} is well defined on $\mathcal{A}_0^{\alg}$, it leaves $\mathcal{A}_{\phys}^{\alg}$ invariant and it forms a linear semigroup which satisfies
\begin{equation}\label{Eq: Markovian intertwining}
\mathcal{K}_\tau W_{n,0}^{\Psi_0}(u)=W_{n,0}^{\Psi_0}(T_\tau^{\wedge n}u),\qquad u\in\mathcal E_0^{\wedge n},
\end{equation}
where $T_\tau^{\wedge0}=\mathbb I_{\mathbb C}$. Equivalently, on the algebraic direct sum,
\begin{equation}\label{Eq: Algebraic Fock intertwining}
\mathfrak{W}_0^{-1}\mathcal{K}_\tau\mathfrak{W}_0=\bigoplus_{n=0}^{\infty}T_\tau^{\wedge n}.
\end{equation}
\end{proposition}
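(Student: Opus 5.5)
The plan is to reduce everything to Proposition \ref{Prop: Wick_vs_time_zero} and to the identity $Q(\Theta_\tau F)=e^{-\tau A_x}Q(F)$ from Equation \eqref{Eq: T_tau}. The key step is to compute the right-hand side of Equation \eqref{Eq: Markovian Wick evolution} explicitly. Since $\Theta_\tau F_j\in\mathcal S(\mathbb R^2_+;\mathbb C^4)$ for $\tau\geq0$, Proposition \ref{Prop: Wick_vs_time_zero} gives
$$\omega_0\bigl(W^\psi_n(\Theta_\tau F_1\otimes\cdots\otimes\Theta_\tau F_n)\bigr)=W^{\psi_0}_n(\Theta_\tau F_1\otimes\cdots\otimes\Theta_\tau F_n).$$
By the definition $\Psi_0(Q(F))=\psi_0(F)$, the recursive definitions of $W^{\psi_0}_n$ in Remark \ref{Rem: Wick-Polynomials Zero-Time Fields} and of $W^{\Psi_0}_{n,0}$ are the same recursion in the fields and their two-point functions. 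Hence $W^{\psi_0}_n(G_1\otimes\cdots\otimes G_n)=W^{\Psi_0}_{n,0}(Q(G_1)\wedge\cdots\wedge Q(G_n))$. This uses that the Wick product is antisymmetric and multilinear in its entries, which one reads off from the vector identity of Proposition \ref{Prop: Auxiliary Wick sectors} combined with injectivity. It follows that
$$\mathcal K_\tau\bigl(W^{\Psi_0}_{n,0}(Q(F_1)\wedge\cdots\wedge Q(F_n))\bigr)=W^{\Psi_0}_{n,0}\bigl(T_\tau Q(F_1)\wedge\cdots\wedge T_\tau Q(F_n)\bigr).$$

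Well-definedness is the main obstacle. The definition in Equation \eqref{Eq: Markovian Wick evolution} is phrased through representatives $F_j$, and the same algebra element may arise from different tuples or from linear combinations of them. I would handle this with the bijection $\mathfrak W_0$ of the preceding Lemma. Every element of $\mathcal A_0^{\alg}$ is uniquely $\mathfrak W_0((u_n)_n)$ with $u_n\in\mathcal E_0^{\wedge n}$. So I define $\mathcal K_\tau$ by $\mathcal K_\tau\mathfrak W_0((u_n)_n):=\mathfrak W_0((T_\tau^{\wedge n}u_n)_n)$. This is well defined because $T_\tau^{\wedge n}$ is a well-defined linear map on $\mathcal E_0^{\wedge n}$: $T_\tau=e^{-\tau A_x}$ is defined directly on $\mathcal E_0$, independently of representatives, and preserves $\mathcal E_0$ by Equation \eqref{Eq: T_tau}. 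The previous display then shows that this map agrees with Equation \eqref{Eq: Markovian Wick evolution} on every generator. It is therefore the unique linear extension, which yields Equations \eqref{Eq: Markovian intertwining} and \eqref{Eq: Algebraic Fock intertwining} simultaneously. The $n=0$ case is the convention $\mathcal K_\tau\mathbb I_{\mathcal A}=\mathbb I_{\mathcal A}$.

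The remaining claims follow from Equation \eqref{Eq: Algebraic Fock intertwining}. The semigroup law comes from $T_\tau T_\sigma=e^{-(\tau+\sigma)A_x}=T_{\tau+\sigma}$ and functoriality of $\wedge^n$, giving $T_\tau^{\wedge n}T_\sigma^{\wedge n}=T_{\tau+\sigma}^{\wedge n}$ and $\mathcal K_0=\mathrm{id}$. Invariance of $\mathcal A^{\alg}_{\phys}$ follows from Corollary \ref{Cor: Physical 1-Particle Hamiltonian}, since $T_\tau$ preserves $\mathcal E_{0,\phys}$. Hence $T_\tau^{\wedge n}$ preserves $\mathcal E_{0,\phys}^{\wedge n}$, and by the Lemma $\mathfrak W_0$ maps these onto $\mathcal A_{\phys}^{\alg}$.
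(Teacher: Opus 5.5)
Your proposal is correct and follows the paper's proof. It uses the same computation via Proposition~\ref{Prop: Wick_vs_time_zero} and $Q(\Theta_\tau F)=e^{-\tau A_x}Q(F)$ and the same semigroup and invariance arguments, and your well-definedness step (conjugating $\bigoplus_n T_\tau^{\wedge n}$ by the bijection $\mathfrak W_0$ and checking agreement on generators) makes explicit what the paper only sketches via the direct-sum structure of $\mathcal A_0^{\alg}$.

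One small caveat concerns your aside deriving antisymmetry of the Wick product from the vacuum-vector identity ``combined with injectivity'': it does not work as stated. The injectivity available is that of $u\mapsto W_{n,0}^{\Psi_0}(u)$ on $\mathcal E_0^{\wedge n}$, which presupposes antisymmetry, not that of $X\mapsto X\Omega$ on operators. Instead, take the well-definedness of $W_{n,0}^{\Psi_0}$ on $\mathcal E_0^{\wedge n}$ as part of the paper's setup, or check it directly from the Wick recursion using $\{\xi(v),\xi(w)\}=0$ and the antisymmetry of $b$.
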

	
\begin{proof}
For $q_j=Q(F_j)$, Proposition~\ref{Prop: Wick_vs_time_zero}, the definition of $\Psi_0$, and Equation \eqref{Eq: T_tau} give
\begin{align*}
\omega_0\!\left(W_n^\psi(\Theta_\tau F_1\otimes\cdots\otimes\Theta_\tau F_n)\right)&=W_n^{\psi_0}(\Theta_\tau F_1\otimes\cdots\otimes\Theta_\tau F_n)\\
&=W_{n,0}^{\Psi_0}(Q(\Theta_\tau F_1)\wedge \ldots \wedge Q(\Theta_\tau F_n))\\
&=W_{n,0}^{\Psi_0}(T_\tau q_1\wedge \ldots \wedge T_\tau q_n).
\end{align*}
The result depends only on $q_1\wedge\cdots\wedge q_n$, namely it is unchanged when any $F_j$ is replaced by another representative of $q_j$. Furthermore, by linearity it respects all relations
in $\mathcal{E}_0^{\wedge n}$. Equation \eqref{Eq: Algebraic time-zero Wick space} also ensures that the definition is independent of the chosen finite decomposition into Wick monomials. This proves that the map $\mathcal{K}_\tau$ is well-defined as per Equation \eqref{Eq: Markovian intertwining}. Equation \eqref{Eq: Algebraic Fock intertwining} implies $\mathcal{K}_0=\mathbb{I}$ and
$\mathcal{K}_{\tau+\sigma}=\mathcal{K}_\tau\mathcal{K}_\sigma$ for all $\tau,\sigma\geq 0$. Finally, $T_\tau$ preserves $\mathcal{E}_{0,\phys}$ by Corollary \ref{Cor: Physical 1-Particle Hamiltonian}, which implies that $\mathcal{A}_{\phys}^{\alg}$ is invariant.
\end{proof}
	
\noindent For $n\geq0$, set
\begin{equation}\label{Eq: Auxiliary n-particle semigroup}
T_{\tau,0}^{(0)}\assign\mathbb{I}_{\mathbb{C}},\qquad T_{\tau,0}^{(n)}\assign T_\tau^{\wedge n}\quad(n\geq1),
\end{equation}
and set $H_{0,0}\assign0$, while, for $n\geq1$, let $H_{n,0}$ be the closure of
\begin{equation}\label{Eq: Auxiliary n-particle Hamiltonian}
H_{n,0}(q_1\wedge\cdots\wedge q_n)\assign\sum_{r=1}^n q_1\wedge\cdots\wedge H_{1,0}q_r\wedge\cdots\wedge q_n,\qquad q_j\in\mathcal{E}_0.
\end{equation}

\begin{proposition}\label{Prop: Auxiliary Fock Hamiltonian}
With the vacuum norm $\|a\|_{\mathrm{W}}\assign\|a\Omega\|_{\mathcal F_-(V)}$,the completion of $\mathcal{A}_0^{\alg}$ is canonically	isomorphic to $\mathcal{F}_0$. Under this identification, $\mathcal{K}_\tau$ extends to the strongly continuous	self-adjoint contraction semigroup
\begin{equation}\label{Eq: Auxiliary Fock semigroup}
\mathbb{T}_{\tau,0}\assign\Gamma_-(T_\tau)=\bigoplus_{n=0}^{\infty}T_{\tau,0}^{(n)}.
\end{equation}
Its positive Hamiltonian is
\begin{equation}\label{Eq: Auxiliary Fock Hamiltonian}
\mathbb{T}_{\tau,0}=e^{-\tau H_0},\qquad H_0\assign\mathd\Gamma(H_{1,0})=\bigoplus_{n=0}^{\infty}H_{n,0},
\end{equation}
where $H_{n,0}=\mathd\Gamma_n(H_{1,0})$. The set $\bigoplus_{n=0}^{+\infty}\mathcal{E}_0^{\wedge n}$ is a core and we have, for $q_1,\ldots,q_n\in\mathcal E_0$,
\begin{align}
\mathbb{T}_{\tau,0}W_{n,0}^{\Psi_0}(q_1,\ldots,q_n)\Omega&=W_{n,0}^{\Psi_0}(T_\tau q_1,\ldots,T_\tau q_n)\Omega, \label{Eq: Fock semigroup on auxiliary Wick vectors}\\
H_0W_{n,0}^{\Psi_0}(q_1,\ldots,q_n)\Omega&=\sum_{r=1}^n W_{n,0}^{\Psi_0}(q_1,\ldots,H_{1,0}q_r,\ldots,q_n)\Omega.	\label{Eq: Fock Hamiltonian on auxiliary Wick vectors}
\end{align}
\end{proposition}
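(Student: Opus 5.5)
The plan is to transport everything to Fock space through the vacuum-vector map and then apply standard second-quantization facts to the one-particle semigroup of Proposition~\ref{Prop: 1-Particle Hamiltonian}.

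\emph{Step 1: identify the completion.} For $a=\mathfrak W_0((u_n)_n)\in\mathcal A_0^{\alg}$, Proposition~\ref{Prop: Auxiliary Wick sectors} gives
$$a\Omega=\sum_n\mathcal W_{n,0}^{\Psi_0}(u_n).$$
The summands lie in mutually orthogonal particle sectors of $\mathcal F_-(V)$, and each $\mathcal W_{n,0}^{\Psi_0}$ is isometric from $\mathcal E_0^{\wedge n}$ with its determinant inner product. Hence
$$\|a\|_{\mathrm W}^2=\sum_n\|u_n\|^2_{\mathcal H_0^{(n)}}.$$
By the bijectivity lemma for $\mathfrak W_0$, $\|\cdot\|_{\mathrm W}$ is a genuine norm on $\mathcal A_0^{\alg}$, and $\mathfrak W_0^{-1}$ is an isometry onto the algebraic sum $\bigoplus^{\alg}_n\mathcal E_0^{\wedge n}$. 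That sum is dense in $\mathcal F_0=\bigoplus_n\mathcal H_0^{(n)}$, since $\mathcal E_0$ is dense in $\mathcal H_0^{(1)}$ and algebraic wedges of a dense subspace are dense in $\bigwedge^n$. So $\mathfrak W_0^{-1}$ extends to a unitary from the completion onto $\mathcal F_0$. Composing with $\Gamma_-(J_0)$ recovers the concrete realization $a\mapsto a\Omega$ inside $\mathcal F_-(V)$.

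\emph{Step 2: the semigroup.} By Equation~\eqref{Eq: Algebraic Fock intertwining} of Proposition~\ref{Prop: Markovian Wick evolution}, under this identification $\mathcal K_\tau$ acts as $\bigoplus_n T_\tau^{\wedge n}$ on the dense core. Each $T_\tau^{\wedge n}$ is a self-adjoint contraction on $\mathcal H_0^{(n)}$, because the tensor power of a self-adjoint contraction is one and it commutes with antisymmetrization. Hence the direct sum extends uniquely to the contraction $\Gamma_-(T_\tau)$. The semigroup law passes to the closure by density. Self-adjointness holds sectorwise.

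Strong continuity is first checked on product vectors, via a telescoping estimate
$$\|T_\tau q_1\wedge\cdots\wedge T_\tau q_n-q_1\wedge\cdots\wedge q_n\|\le\sum_r\|T_\tau q_r-q_r\|_0\prod_{s\neq r}\|q_s\|_0.$$
Linearity extends it to finite sums, and the uniform bound $\|\Gamma_-(T_\tau)\|\le1$ with an $\varepsilon/3$ argument extends it to all of $\mathcal F_0$. Equation~\eqref{Eq: Fock semigroup on auxiliary Wick vectors} is then just Equation~\eqref{Eq: Markovian intertwining} applied to $\Omega$.

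\emph{Step 3: the generator, which is the main obstacle.} Since $T_\tau=e^{-\tau H_{1,0}}$, the $n$-fold tensor product satisfies
$$T_\tau^{\otimes n}=e^{-\tau\,\overline{\sum_r \mathbb I\otimes\cdots\otimes H_{1,0}\otimes\cdots\otimes\mathbb I}}.$$
This follows from the functional calculus for the commuting self-adjoint operators $\mathbb I\otimes\cdots\otimes H_{1,0}\otimes\cdots\otimes\mathbb I$, with a joint spectral measure on $[m,\infty)^n$. Restricting to the antisymmetric subspace gives $T_\tau^{\wedge n}=e^{-\tau H_{n,0}}$ with $H_{n,0}=\mathd\Gamma_n(H_{1,0})$ positive self-adjoint.

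The delicate point is that the closure in Equation~\eqref{Eq: Auxiliary n-particle Hamiltonian}, taken from $\mathcal E_0^{\wedge n}$ alone, equals this self-adjoint operator. I would argue as follows. Since $\mathcal E_0$ is a core for $H_{1,0}$ by Proposition~\ref{Prop: 1-Particle Hamiltonian}, the algebraic tensor product $\mathcal E_0^{\otimes_{\alg}n}$ is a core for the sum operator, by the standard tensor-product core theorem (Reed--Simon VIII.33). Antisymmetrization commutes with the sum operator and maps $\mathcal E_0^{\otimes_{\alg}n}$ onto $\mathcal E_0^{\wedge n}$, so the latter is a core for $H_{n,0}$. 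Alternatively, one can verify directly that $\mathcal E_0^{\wedge n}$ is dense and invariant under $T_\tau^{\wedge n}$, because $e^{-\tau A_x}$ preserves $\mathcal E_0$, and then invoke Nelson's invariant-domain core theorem.

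Taking the direct sum, $H_0=\bigoplus_n H_{n,0}=\mathd\Gamma(H_{1,0})$ is self-adjoint and positive, and $\bigoplus^{\alg}_n\mathcal E_0^{\wedge n}$ is a core because each summand is a core for the corresponding sector. Finally, $\mathbb T_{\tau,0}=e^{-\tau H_0}$ sectorwise, and differentiating at $\tau=0$ on Wick vectors gives Equation~\eqref{Eq: Fock Hamiltonian on auxiliary Wick vectors}, via the Leibniz rule for $T_\tau q_1\wedge\cdots\wedge T_\tau q_n$ together with Proposition~\ref{Prop: Auxiliary Wick sectors}.
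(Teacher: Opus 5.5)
Your proposal is correct and follows essentially the same route as the paper. Like the paper, you identify the completion isometrically via $\mathfrak W_0^{-1}$ and Proposition~\ref{Prop: Auxiliary Wick sectors}, transfer $\mathcal K_\tau$ to $\bigoplus_n T_\tau^{\wedge n}$ through Equation~\eqref{Eq: Algebraic Fock intertwining}, and identify the generator sector by sector from the one-particle core. You add only details the paper leaves implicit: the telescoping (Hadamard-type) bound for strong continuity on product vectors, and an actual justification that $\mathcal E_0^{\wedge n}$ is a core for $\mathd\Gamma_n(H_{1,0})$. For that core claim you use either the tensor-product core theorem plus projection onto the antisymmetric subspace, or Nelson's invariant-domain theorem with $T_\tau\mathcal E_0\subset\mathcal E_0$.
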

	
\begin{proof}
For $a=\mathfrak W_0((u_n)_n)$,	Proposition~\ref{Prop: Auxiliary Wick sectors} and orthogonality of	particle sectors give
$$\|a\|_{\mathrm{W}}^2=\sum_n\|u_n\|^2,\qquad\|\mathcal{K}_\tau a\|_{\mathrm{W}}^2=\sum_n\|T_\tau^{\wedge n}u_n\|^2		\leq\|a\|_{\mathrm W}^2.$$
Thus $\mathfrak{W}_0^{-1}$ extends to the stated isometric identification, and Equation \eqref{Eq: Algebraic Fock intertwining} becomes Equation \eqref{Eq: Auxiliary Fock semigroup}. More precisely, on Wick vectors in $\mathcal{F}_-(V)$,
$$(\mathcal{K}_\tau a)\Omega=\Gamma_-(J_0)\mathbb{T}_{\tau,0}\Gamma_-(J_0)^{-1}(a\Omega),$$
with the inverse taken on the range of $\Gamma_-(J_0)$.	Each $T_\tau^{\wedge n}$ is a self-adjoint contraction and it is strongly continuous on $\mathcal H_0^{(n)}$. To conclude, the contraction bound extends strong continuity from finite-particle vectors to the whole $\mathcal{F}_0$.\\
		
\noindent Differentiating $T_\tau^{\wedge n}$ at $\tau=0$ on $\mathcal{E}_0^{\wedge n}$ gives minus the sum in Equation \eqref{Eq: Auxiliary n-particle Hamiltonian}. Since	$\mathcal{E}_0$ is a core for $H_{1,0}$ as per Proposition \ref{Prop: 1-Particle Hamiltonian}, these finite wedge products form a core for $H_{n,0}=\mathd\Gamma_n(H_{1,0})$.	The generator of the direct-sum semigroup is consequently $-H_0$ with core $\oplus_{n=0}^{+\infty}\mathcal{E}_0^{\wedge n}$.  The identities on Wick vectors follow from Equation \eqref{Eq: Wick vector exterior product}.
\end{proof}
	
\noindent Define the physical Fock projection by
\begin{equation}\label{Eq: Fock physical projection}
\Pi_{\phys}\assign\Gamma_-(\mathbb{P}_{\phys})=\bigoplus_{n=0}^{\infty}\mathbb{P}_{\phys}^{\wedge n},\qquad		\operatorname{Ran}\Pi_{\phys}=\mathcal{F}_{\phys}.
\end{equation}
	
\begin{corollary} \label{Prop: Physical Fock Hamiltonian}
The restriction of $\mathcal{K}_\tau$ to $\mathcal{A}_{\phys}^{\alg}$ extends in the vacuum norm to
\begin{equation}\label{Eq: Physical Fock semigroup}
\mathbb{T}_{\tau,\phys}\assign\mathbb{T}_{\tau,0}|_{\mathcal{F}_{\phys}}=\Gamma_-(T_\tau^{\phys})		=\bigoplus_{n=0}^{\infty}(T_\tau^{\phys})^{\wedge n}.
\end{equation}
It is a strongly continuous self-adjoint contraction semigroup with
\begin{equation}\label{Eq: Physical Fock Hamiltonian}
\mathbb{T}_{\tau,\phys}=e^{-\tau H_{\phys}},\qquad H_{\phys}\assign\mathd\Gamma(H_{1,\phys})=H_0|_{\mathcal{F}_{\phys}}.
\end{equation}
The auxiliary Wick-vector identities above restrict to	$q_j\in\mathcal{E}_{0,\phys}$ with $T_\tau,H_{1,0},H_0$
replaced by $T_\tau^{\phys},H_{1,\phys},H_{\phys}$,	respectively.
\end{corollary}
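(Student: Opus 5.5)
The plan is to obtain the corollary by restriction from Proposition \ref{Prop: Auxiliary Fock Hamiltonian}, with the one-particle input supplied by Lemma \ref{Prop: Physical one-particle space} and Corollary \ref{Cor: Physical 1-Particle Hamiltonian}.

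First, identify the vacuum-norm completion of $\mathcal{A}_{\phys}^{\alg}$ with $\mathcal{F}_{\phys}$. By Corollary \ref{Prop: Physical Wick sectors} and the bijectivity of $\mathfrak{W}_0$ restricted to $\bigoplus_n\mathcal{E}_{0,\phys}^{\wedge n}$, the isometric identification of Proposition \ref{Prop: Auxiliary Fock Hamiltonian} sends $\mathcal{A}_{\phys}^{\alg}$ onto $\bigoplus_n^{\alg}\mathcal{E}_{0,\phys}^{\wedge n}$. This space is dense in $\bigoplus_n\mathcal{H}_{\phys}^{(n)}=\mathcal{F}_{\phys}$, since $\mathcal{E}_{0,\phys}$ is dense in $\mathcal{H}^{(1)}_{\phys}$ by Equation \eqref{Eq: Physical one-particle core2}. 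Hence the completion of $\mathcal{A}_{\phys}^{\alg}$ is the closed subspace $\mathcal{F}_{\phys}=\operatorname{Ran}\Pi_{\phys}\subset\mathcal{F}_0$.

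Second, show that $\mathcal{F}_{\phys}$ reduces $\mathbb{T}_{\tau,0}$. Since $\mathbb{P}_{\phys}$ commutes with $T_\tau$ (proof of Corollary \ref{Cor: Physical 1-Particle Hamiltonian}), functoriality of $\Gamma_-$ gives
\[
\Pi_{\phys}\mathbb{T}_{\tau,0}=\Gamma_-(\mathbb{P}_{\phys}T_\tau)=\Gamma_-(T_\tau\mathbb{P}_{\phys})=\mathbb{T}_{\tau,0}\Pi_{\phys}.
\]
On each sector this reduces to the identity $\mathbb{P}_{\phys}^{\wedge n}T_\tau^{\wedge n}=(\mathbb{P}_{\phys}T_\tau)^{\wedge n}$ on elementary wedges, extended by boundedness. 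Consequently $\mathbb{T}_{\tau,0}|_{\mathcal{F}_{\phys}}=\bigoplus_n (T_\tau|_{\mathcal{H}^{(1)}_{\phys}})^{\wedge n}=\Gamma_-(T^{\phys}_\tau)$. Proposition \ref{Prop: Markovian Wick evolution} shows that $\mathcal{K}_\tau$ preserves $\mathcal{A}_{\phys}^{\alg}$ and agrees there with $\mathbb{T}_{\tau,0}$ under the identification. The unique bounded extension of $\mathcal{K}_\tau|_{\mathcal{A}^{\alg}_{\phys}}$ is therefore $\mathbb{T}_{\tau,\phys}$. Self-adjointness, the contraction property and strong continuity are inherited because the subspace is reducing.

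Third, identify the generator. A standard fact is that the generator of a semigroup restricted to a reducing subspace is the restriction of the generator, with domain $D(H_0)\cap\mathcal{F}_{\phys}$. This gives $\mathbb{T}_{\tau,\phys}=e^{-\tau H_0|_{\mathcal{F}_{\phys}}}$. It remains to check that $H_0|_{\mathcal{F}_{\phys}}=\mathd\Gamma(H_{1,\phys})$. Both operators are self-adjoint and agree on $\bigoplus_n^{\alg}\mathcal{E}_{0,\phys}^{\wedge n}$ by Equation \eqref{Eq: Auxiliary n-particle Hamiltonian}, since $H_{1,0}$ preserves $\mathcal{E}_{0,\phys}$. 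The main point requiring care is that this algebraic space is a core for $\mathd\Gamma(H_{1,\phys})$. I would argue as follows. The space $\mathcal{E}_{0,\phys}$ is a core for $H_{1,\phys}$: for $q\in D(H_{1,\phys})$, take $q_k\in\mathcal{E}_0$ converging to $q$ in graph norm (Proposition \ref{Prop: 1-Particle Hamiltonian}); then $\mathbb{P}_{\phys}q_k\to\mathbb{P}_{\phys}q=q$ in graph norm, because $\mathbb{P}_{\phys}$ commutes with $A_x$. Finite wedges of core vectors then form a core for each $\mathd\Gamma_n$, exactly as in the proof of Proposition \ref{Prop: Auxiliary Fock Hamiltonian}. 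Two self-adjoint operators sharing a core on which they agree coincide, since each equals the closure of its restriction to that core.

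Finally, the Wick-vector identities \eqref{Eq: Fock semigroup on auxiliary Wick vectors}–\eqref{Eq: Fock Hamiltonian on auxiliary Wick vectors} with $q_j\in\mathcal{E}_{0,\phys}$ follow by specializing the auxiliary ones. Both sides lie in $\Gamma_-(J_0)[\mathcal{F}_{\phys}]$, and on that subspace $T_\tau$, $H_{1,0}$ and $H_0$ coincide with $T_\tau^{\phys}$, $H_{1,\phys}$ and $H_{\phys}$.
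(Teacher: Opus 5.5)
Your proposal is correct and follows the paper's route. The paper's proof consists only of observing that $\mathbb{P}_{\phys}T_\tau=T_\tau\mathbb{P}_{\phys}$ gives $\Pi_{\phys}\mathbb{T}_{\tau,0}=\mathbb{T}_{\tau,0}\Pi_{\phys}$, so that $\mathcal{F}_{\phys}$ reduces the auxiliary semigroup and its generator and every claim follows by restriction; your extra steps (density of $\bigoplus_n\mathcal{E}_{0,\phys}^{\wedge n}$ in $\mathcal{F}_{\phys}$, the restricted generator, and the core argument identifying $H_0|_{\mathcal{F}_{\phys}}$ with $\mathd\Gamma(H_{1,\phys})$) just make explicit what the paper leaves implicit. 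In that core argument you only need the core property for $\mathd\Gamma(H_{1,\phys})$, since $H_0|_{\mathcal{F}_{\phys}}$ is then a self-adjoint extension of an essentially self-adjoint operator.
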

	
\begin{proof}
The relation $\mathbb{P}_{\phys}T_\tau=T_\tau\mathbb{P}_{\phys}$ implies $\Pi_{\phys}\mathbb{T}_{\tau,0}=\mathbb{T}_{\tau,0}\Pi_{\phys}$. Thus $\mathcal{F}_{\phys}$ is reducing for the auxiliary semigroup and its generator, proving all the statements by restriction.
\end{proof}

\appendix
\section{Appendix}\label{Appendix}

We present the proof of Proposition \ref{Prop: Wick_vs_time_zero}. We also repeat the statement for the sake of readability.

\begin{proposition*}
For any $n\in\mathbb{N}$ and for any $F_1,\dots F_n\in\mathcal S(\mathbb R^2_+;\mathbb C^4)$, 
$$\omega_0\bigl(W^\psi_n(F_1\otimes\cdots\otimes F_n)\bigr)=W^{\psi_0}_n(F_1\otimes\cdots\otimes F_n).$$
\end{proposition*}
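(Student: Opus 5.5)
The plan is an induction on $n$ based on the recursive Wick rules \eqref{Eq: Wick Polynomials order 3} and \eqref{Eq: Wick Polynomials vs : - zero time}. The structural input is a Fock-space decomposition of each field. By linearity of $\xi$ and the kernel representation of $\psi$, for every $F\in\mathcal S(\mathbb R^2_+;\mathbb C^4)$ there is $v_F\in V$ with $\psi(F)=\xi(v_F)$. The remark on the existence of $\omega_t$ then gives $\psi_0(F)=\omega_0(\psi(F))=\xi(P_0v_F)$, with $P_0v_F\in V_{\leq 0}$. Consequently
$$\psi(F)=\psi_0(F)+\xi\bigl((\mathbb I-P_0)v_F\bigr)=:\psi_0(F)+\psi_>(F),$$
where $\psi_0(F)\in\mathcal A_{\leq 0}$ and $\psi_>(F)$ lies in the copy of $\mathcal B(\mathcal F_-(V_{>0}))$ inside $\mathcal A$.

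The first step is to treat the cross terms. Proposition \ref{Prop: Grassmannian Brownian motion}(3), extended by linearity to $V_{\leq0}\perp V_{>0}$, gives $\omega(\psi_0(F)\psi_>(G))=0$. Hence $\omega(\psi(F)\psi(G))=\omega(\psi_0(F)\psi_0(G))+\omega(\psi_>(F)\psi_>(G))$. The second step is to expand $W^\psi_n(F_1\otimes\cdots\otimes F_n)$ as a Grassmann binomial. I would prove, by induction on $n$ using the recursion, the expansion
$$W^\psi_n(F_1\otimes\cdots\otimes F_n)=\sum_{S\subseteq\{1,\dots,n\}}\epsilon_S\,W^{\psi_0}_{|S|}(F_S)\,W^{\psi_>}_{n-|S|}(F_{S^c}).$$
Here $\epsilon_S$ is the sign of the shuffle that moves the indices in $S$ to the left, and $W^{\psi_>}$ is Wick ordering with respect to the covariance of $\psi_>$. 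This is the usual fact that Wick ordering with respect to a sum of independent covariances factorizes. In the inductive step, insert $\psi(F_1)=\psi_0(F_1)+\psi_>(F_1)$ into the recursion and split $\omega(\psi(F_1)\psi(F_j))$ into its $\psi_0$ and $\psi_>$ parts. Commuting $\psi_>(F_1)$ past $W^{\psi_0}_{|S|}$ introduces a sign $(-1)^{|S|}$, because odd elements of the two tensor factors anticommute in the graded Fock structure. Matching these signs against the $(-1)^j$ of the recursion is where I expect the main bookkeeping obstacle. If the direct argument becomes messy, I would fall back on generating functions: define Wick ordering as $e^{-\frac12\omega(\cdot\cdot)}$ acting on Grassmann exponentials, and use additivity of the covariance.

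The final step is to apply $\omega_0$. By the bimodule property, $\omega_0(B\,X)=B\,\omega_0(X)$ for $B\in\mathcal A_{\leq0}$. For $X$ in the $V_{>0}$ factor, $\omega_0(X)=\omega(X)\mathbb I$, because $\omega_0=\mathrm{id}\otimes\omega_{>0}$. For $k\geq1$ we have $\omega(W^{\psi_>}_k)=0$, since Wick polynomials of positive degree are centred; this follows by induction from the recursion together with $\omega(\psi_>)=0$. Therefore only the term $S=\{1,\dots,n\}$ survives, with $\epsilon_S=1$, which yields $\omega_0(W^\psi_n)=W^{\psi_0}_n$.

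Two points need care along the way. First, we must check that $W^{\psi_0}$ built from $\psi_0=\omega_0\psi$ coincides with Wick ordering with respect to the restricted covariance. This holds by definition, because \eqref{Eq: Wick Polynomials vs : - zero time} already uses $\omega(\psi_0\psi_0)$. Second, the graded commutation between the factors must be justified: $\xi(P_0v)$ and $\xi((\mathbb I-P_0)w)$ anticommute because they are built from the CAR operators $a,a^\dagger$ on orthogonal subspaces.
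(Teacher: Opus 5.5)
Your argument is correct, but it is organized differently from the paper's.

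\textbf{The paper's route.} The paper never writes down a Wick binomial. It inducts on $n$ with $\omega_0$ already applied, and splits only the first field $\psi(F_1)=\psi_0(F_1)+\psi_>(F_1)$ together with the covariances $\omega(\psi(F_1)\psi(F_j))$. The bimodule property and the inductive hypothesis then produce $W^{\psi_0}_n$ plus a remainder $\omega_0[W^\psi_n(F_{1>}\otimes F_2\otimes\cdots\otimes F_n)]$, written in an ad hoc mixed notation. This remainder is removed by repeatedly swapping arguments and replacing each $F_j$ by its $>$-part. At the end only a Wick monomial in the $\psi_>$'s is left, and its conditional expectation is simply said to vanish.

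\textbf{Your route.} You prove an exact identity in $\mathcal A$ before conditioning: Wick ordering factorizes under the orthogonal splitting $\omega(\psi\psi)=\omega(\psi_0\psi_0)+\omega(\psi_>\psi_>)$. You then apply $\omega_0$ once. This separates the combinatorics, which is a statement about Wick ordering in an exterior algebra, from the probabilistic input, namely that $\omega_0$ is the identity on the past and the vacuum on the future. It also makes explicit the final vanishing step that the paper only asserts, and it applies verbatim to any $\omega_t$. The price is that you must actually establish the binomial identity. The sign bookkeeping you worry about is harmless here, because $\{\xi(v),\xi(w)\}=b(v,w)+b(w,v)=0$ for all $v,w\in V$. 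The fields therefore generate a homomorphic image of $\Lambda V$, and your generating-function fallback is fully rigorous there.

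Two small points need tightening.
\begin{enumerate}
\item The centredness $\omega(W^{\psi_>}_k)=0$ for $k\geq 1$ does not follow from the recursion and $\omega(\psi_>)=0$ alone; a state with a nonzero third moment would violate it. You need the Gaussian Wick rule \eqref{eq:Wickrule}. Alternatively, use the computation in the proof of Proposition \ref{Prop: Auxiliary Wick sectors}, which shows that $W^{\psi_>}_k\Omega$ is a $k$-particle vector.
\item Under $\mathcal F_-(V)\simeq\mathcal F_-(V_{\leq 0})\widehat\otimes\mathcal F_-(V_{>0})$, the element $\psi_>(F)$ does not literally lie in $\mathbb I\otimes\mathcal B(\mathcal F_-(V_{>0}))$. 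It carries a parity twist on the first factor, which is exactly why it anticommutes with $\psi_0(G)$, as you use later. What you actually need still holds: $\omega_0(X)=\omega(X)\mathbb I$ for $X$ in the algebra generated by the $\psi_>$'s. Even products are untwisted, and odd ones have vanishing $\omega_0$ and $\omega$.
\end{enumerate}
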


\begin{proof}
We introduce the notations
\begin{equation}\label{Eq: Useful Notation}
\psi_>(F) \assign \psi(F)-\omega_0 (\psi(F))=\psi(F) -\psi_0(F),\quad\forall F\in\mathcal{S}(\mathbb{R}^2_+;\mathbb{C}^4)
\end{equation}
and 
\begin{equation}\label{Eq: Useful Notation 2}
W^\psi_n (F_{1>}\otimes\cdots\otimes F_n)\assign\Wick{\psi_>(F_1)\cdots\psi(F_n)}\quad\forall F_1,\dots,F_n\in\mathcal{S}(\mathbb{R}^2_+;\mathbb{C}^4), 
\end{equation}
where $F_{1>}$ is a convenient notation and it should not be read as an element of $\mathcal{S}(\mathbb{R}^2_+; \mathbb{C}^4)$. Observe that, per construction, $\omega_0(\psi_>(F))=0$. Bearing this property in mind, we prove the statement by induction on the number of fields in the Wick monomials. Starting from $n=1$, Equations \eqref{Eq: Wick Polynomials order 1 and 2} and \eqref{Eq: Wick Polynomials order 1 and 2 - zero time}, combined with the notations introduced in Equations \eqref{Eq: Wick Polynomials vs :} and \eqref{Eq: Wick Polynomials vs : - zero time}, yield
$$\omega_0 (W_1^\psi(F)) = \omega_0 (\psi (F)) = \psi_0 (F) = W_1^{\psi_0}(F).$$

\noindent Following the inductive hypothesis we can assume the statement to be true at order $n-1$ in the number of fields. Using Equation \eqref{Eq: Wick Polynomials order 3} which is valid also for $n=2$, in combination with the linearity of $\omega_0$, it holds that
  \begin{equation}
    \begin{array}{lll}
      \omega_0(W^\psi_n(F_1\otimes\cdots\otimes F_n)) & = & \omega_0 [\psi
      (F_1) W^\psi_{n-1} (F_2\otimes\cdots\otimes F_n)] +\\
      &  & - \sum_{j = 2}^n (-1)^j \omega (\psi(F_1)\psi(F_j)) \omega_0
      (W^\psi_{n - 2} (F_2\otimes\cdots\widetilde{F_j}\otimes \cdots\otimes F_n))\\
      & = & \omega_0 [\psi (F_1) W^\psi_{n-1} (F_2\otimes\cdots\otimes F_n)]
      +\\
      &  & - \sum_{j = 2}^n (- 1)^j \omega (\psi (F_1) \psi (F_j)) W^\psi_{n - 2}
      (F_2\otimes\cdots\otimes\widetilde{F_j}\otimes\cdots\otimes F_n)\\
      & = & \omega_0 [\psi (F_1) W^\psi_{n-1} (F_2\otimes\cdots\otimes F_n)]
      +\\
      &  & - \sum_{j = 2}^n (- 1)^j \omega (\psi_0 (F_1) \psi_0 (F_j)) W^{\psi_0}_{n-2} (F_2\otimes\cdots\otimes\widetilde{F_j}\otimes\cdots F_n) +\\
      &  & - \sum_{j = 2}^n (- 1)^j \omega (\psi_> (F_1) \psi_> (F_j)) W^{\psi_0}_{n-2} (F_2\otimes\cdots\otimes\widetilde{F_j}\otimes\cdots F_n)
    \end{array} \label{Eq:1}
  \end{equation}
  where the tilde denotes a removed element. Herein we exploited the inductive hypothesis, that $\omega_0$ defines a conditional expectation as per Definition \ref{Def: Conditional Expectations}, as well as that, as a by-product, of Definition \ref{Def: Grassmannian White Noise}
  \begin{equation}
    \omega (\psi_0 (F_1) \psi_>(F_j)) = 0. \label{Eq:indep}
  \end{equation}
  Focusing on the first term, the properties of conditional expectations and Equation \eqref{Eq: Useful Notation} entail
  \begin{equation}
    \begin{array}{lll}
      \omega_0 [\psi (F_1) W^\psi_{n-1} (F_2\otimes\cdots\otimes F_n)] & = & \omega_0 [\psi_0 (F_1) W^\psi_{n-1} (F_2\otimes\cdots\otimes F_n)]
      + \omega_0 [\psi_> (F_1) W^\psi_{n-1} (F_2\otimes\cdots\otimes F_n)]\\
      & = & \psi_0 (F_1) W^{\psi_0}_{n-1} (F_2\otimes\cdots\otimes F_n) +
      \omega_0 [\psi_>(F_1)W^\psi_{n-1} (F_2\otimes\cdots\otimes F_n)],
    \end{array} \label{Eq:2}
  \end{equation}
where, employing the identity $\omega (\psi_>(F)\psi(F^\prime))= \omega (\psi_> (F)\psi_>(F^\prime))$ for all $F,F^\prime\in\mathcal{S}(\mathbb R^2_+;\mathbb C^4)$,
 \begin{equation}\label{Eq:3}
  \begin{array}{lll}
\omega_0 [\psi_>(F_1) W^\psi_{n-1} (F_2\otimes\cdots\otimes F_n)] & = &  \omega_0[W^\psi_n(F_{1>}\otimes F_2\otimes\cdots\otimes F_n)]\\
      &  & + \sum_{j=2}^n (-1)^j \omega(\psi_> (F_1)\psi (F_j)) \omega_0
      [W^\psi_{n-2} (F_2\otimes\cdots\otimes\widetilde{F_j}\otimes\cdots\otimes F_n)]\\
      & = & \omega_0 [W^\psi_n (F_{1>}\otimes F_2\otimes\cdots\otimes F_n)]\\
      &  & + \sum_{j = 2}^n (-1)^j \omega (\psi_>(F_1)\psi_>(F_j)) W^{\psi_0}_{n-2} (F_2\otimes\cdots\otimes\widetilde{F_j}\otimes\cdots\otimes F_n).
    \end{array}
  \end{equation}
Inserting Equations~\eqref{Eq:2} and~\eqref{Eq:3} in  Equation~\eqref{Eq:1}, we end up with
  \[ \begin{array}{lll}
  \omega_0\bigl(W^\psi_n(F_1\otimes\cdots\otimes F_n)\bigr) & = & \psi_0 (F_1) W^{\psi_0}_{n-1} (F_2\otimes\cdots\otimes F_n) + \omega_0[W^\psi_n (F_{1>}\otimes F_2\otimes\cdots\otimes F_n)] +\\
       &  & + \sum_{j=2}^n (-1)^j \omega(\psi_>(F_1)\psi_>(F_j)) W^{\psi_0}_{n-2} (F_2\otimes\cdots\otimes\widetilde{F_j}\otimes\cdots\otimes F_n)\\
       &  & - \sum_{j=2}^n (-1)^j \omega(\psi_0 (F_1) \psi_0 (F_j)) W^{\psi_0}_{n-2} (F_2\otimes\cdots\otimes\widetilde{F_j}\otimes\cdots\otimes F_n) +\\
       &  & - \sum_{j=2}^n (- 1)^j \omega(\psi_>(F_1)\psi_>(F_j)) W^{\psi_0}_{n-2} (F_2\otimes\cdots\otimes\widetilde{F_j}\otimes\cdots\otimes F_n)\\
       & = & \psi_0 (F_1) W^{\psi_0}_{n-1} (F_2\otimes\cdots\otimes F_n)  + \omega_0 [W^\psi_n (F_{1>}\otimes F_2\otimes\cdots\otimes F_n)]\\
       &  & - \sum_{j=2}^n (-1)^j \omega (\psi_0(F_1)\psi_0(F_j)) W^{\psi_0}_{n-2} (F_2\otimes\cdots\otimes\widetilde{F_j}\otimes\cdots\otimes F_n) +\\
       & = & W^{\psi_0}_n (F_1\otimes\cdots\otimes F_n) + \omega_0 [W^{\psi}_n(F_{1>}\otimes F_2\otimes\cdots\otimes F_n)].
     \end{array} \]
In order to tame the last term, we observe that
  \[ - \omega_0 [W^{\psi}_n(F_{1>}\otimes F_2\otimes\cdots\otimes F_n)] = \omega_0 [W^{\psi}_n(F_{2>}\otimes F_1\otimes\cdots\otimes F_n)].\]
If we iterate the above procedure, we obtain on account of the Wick rule,
  \begin{gather}
      \omega_0 [W^\psi_n (F_2\otimes F_{1>}\otimes\cdots\otimes F_n)]=\notag\\
      \omega_0 [\psi(F_2) W^\psi_{n-1} (F_{1>}\otimes\widetilde{F_2}\otimes\cdots\otimes F_n)] -
      \omega (\psi (F_2)\psi_>(F_1)) \omega_0 (W^\psi_{n-2} (F_3\otimes\cdots\otimes F_n)\notag\\
- \sum_{j = 3}^n (- 1)^{j-1} \omega(\psi(F_2)\psi(F_j))\omega_0(W^\psi_{n-2}(F_{1>}\otimes \widetilde{F_2}\otimes F_3\otimes\cdots\otimes\widetilde{F_j}\otimes\cdots\otimes F_n)) .
\label{Eq:4}
  \end{gather}
Using the inductive hypothesis and $\omega_0(\psi_>(F))=0$ for all $F\in\mathcal{S}(\mathbb R^2_+;\mathbb C^4)$,
\begin{gather*}
 \omega_0(W^\psi_{n-2}(F_3\otimes\cdots\otimes F_n)) = W^{\psi_0}_{n-2}(F_3\otimes\cdots\otimes F_n),\\
 \omega_0(W^\psi_{n-2}(F_{1>}\otimes \widetilde{F_2}\otimes F_3\otimes\cdots\otimes\widetilde{F_j}\otimes\cdots\otimes F_n)) = W^{\psi_0}_{n-2}(\omega_0 (\psi_> (F_1))F_3\otimes\cdots\widetilde{F_j}\otimes\cdots\otimes F_n)=0.
\end{gather*}
Using once more $\omega_0(\psi_>(F))=0$, $\omega (\psi_>(F)\psi(F^\prime))= \omega (\psi_> (F)\psi_>(F^\prime))$ for all $F,F^\prime\in\mathcal{S}(\mathbb R^2_+;\mathbb C^4)$, Equation~\eqref{Eq:4} reads
\begin{gather}
\omega_0 [W^\psi_n (F_2\otimes F_{1>}\otimes\cdots\otimes F_n)] =\notag \\
\omega_0 [\psi (F_2) W^\psi_{n-1}(F_{1>}\otimes\widetilde{F_2}\otimes\cdots\otimes F_n)] - \omega(\psi (F_2)\psi_>(F_1))W^{\psi_0}_{n-2}(F_3\otimes\cdots\otimes F_n)\notag\\
=  \omega_0 [\psi_>(F_2) W^\psi_{n-1} (F_{1>}\otimes \widetilde{F_2}\otimes F_3\otimes\cdots\otimes F_n)]-\omega(\psi_>(F_2)\psi_>(F_1)) W^{\psi_0}_{n-2}(F_3\otimes\cdots\otimes F_n),\label{Eq:5}
\end{gather}
 where we used implicitly the following chain of identities:
\begin{gather*}
\omega_0 [\psi (F_2) W^\psi_{n-1}(F_{1>}\otimes\widetilde{F_2}\otimes\cdots\otimes F_n)] =\\
       \omega_0 [\psi_>(F_2)W^\psi_{n-1}(F_{1>}\otimes\widetilde{F_2}\otimes\cdots\otimes F_n)] +
       \omega_0 [\psi_0 (F_2) W^\psi_{n-1} (F_{1>}\otimes\widetilde{F_2}\otimes\cdots\otimes F_n)]\\
        =\omega_0 [\psi_>(F_2)W^\psi_{n-1}(F_{1>}\otimes\widetilde{F_2}\otimes\cdots\otimes F_n)] +
\psi_0 (F_2)  \omega_0 [W^\psi_{n-1} (F_{1>}\otimes\widetilde{F_2}\otimes\cdots\otimes F_n)]\\
       =\omega_0 [\psi_>(F_2)W^\psi_{n-1}(F_{1>}\otimes\widetilde{F_2}\otimes\cdots\otimes F_n)] + \psi_0 (F_2) W^{\psi_0}_{n-1}(F_{1>}\otimes\widetilde{F_2}\otimes\cdots\otimes F_n)
       )\\
       = \omega_0 [\psi_>(F_2)W^\psi_{n-1}(F_{1>}\otimes\widetilde{F_2}\otimes\cdots\otimes F_n)],
\end{gather*}
where $W^{\psi_0}_{n-1}(F_{1>}\otimes\widetilde{F_2}\otimes\cdots\otimes F_n)$ vanishes as a consequence of $\psi_0(F_{1>})=0$. Focusing on the first term on the right hand side of Equation~\eqref{Eq:5}, we can tame it using the chain of identities
\begin{gather*}
\omega_0 [W^\psi_n (F_2\otimes F_{1>}\otimes\cdots\otimes F_n)] =\\
 \omega_0 [W^\psi_n (F_{2>}\otimes F_{1>}\otimes\cdots\otimes F_n)] + \omega
       (\psi_>(F_2) \psi_> (F_1)) \omega_0 (W^\psi_{n-2}(F_3\otimes\cdots\otimes F_n))\\
        + \sum_{j = 3}^n (-1)^{j-1} \omega(\psi_> (F_2) \psi (F_j))
       \omega_0 (W^\psi_{n-2} (F_{1>}\otimes\widetilde{F_2}\otimes\cdots\otimes\widetilde{F_j}\otimes\cdots\otimes F_n))\\
        - \omega (\psi_> (F_2) \psi_> (F_1)) W^{\psi_0}_{n-2} (F_3\otimes\cdots\otimes F_n)=\omega_0 [W^\psi_n (F_{2>}\otimes F_{1>}\otimes F_3\otimes\cdots\otimes F_n)],
\end{gather*}
 where in the last step we used that, on account of the inductive hypothesis,
$$ \omega (\psi_> (F_2) \psi_> (F_1)) \omega_0 (W^\psi_{n-2}(F_3\otimes\cdots\otimes F_n))=\omega(\psi_>(F_2)\psi_>(F_1)) W^{\psi_0}_{n-2}(F_3\otimes\cdots\otimes F_n),$$
while all the terms associated to the sum over $j$ are vanishing using the inductive hypothesis in combination with $\omega_0(\psi_>(F_1)) = 0$. To summarize, 
$$\omega_0 [W^\psi_n (F_{1>}\otimes F_2\otimes\cdots\otimes F_n)] = -\omega_0[W^\psi_n(F_{2>}\otimes F_{1>}\otimes\cdots\otimes F_n)].$$
Iterating this procedure we end up with
$$\omega_0 [W^\psi_n (F_{1>}\otimes F_2\otimes\cdots\otimes F_n)]  = (-1)^{n-1} \omega_0 [W^\psi_n (F_{n>}\otimes\cdots\otimes F_{1>})]=0,$$
where in the last equality we used the defining properties of the zero-time Fermionic Wick polynomials. In turn this yields
$$\omega_0 [W^\psi_n(F_{1>}\otimes F_2\otimes\cdots\otimes F_n)] = 0,$$
which concludes the proof.
\end{proof}

\begin{acknowledgments*}
	 CD and PR acknowledge the support of the INFN Sezione di Pavia and of Gruppo Nazionale di Fisica Matematica, part of INdAM. PF acknowledges the support of the INFN Sezione di Genova and of Gruppo Nazionale di Fisica Matematica, part of INdAM. Part of this work has appeared in the Master thesis of PF submitted on the 19/09/25 as a partial fulfillment of the requirement to obtain a Master's degree in Physics at the University of Pavia.
\end{acknowledgments*}

\end{document}